\documentclass[12pt]{article}

\usepackage{amsmath,amsfonts,amssymb}
\usepackage{graphicx,psfrag,epsf,color}
\usepackage{enumerate}
\usepackage{natbib}
\usepackage{url}
\usepackage{epstopdf}
\usepackage{amsthm}
\usepackage{bm}
\usepackage{rotating}
\usepackage{booktabs}
\usepackage{multirow}
\usepackage{threeparttable}
\usepackage{caption}

\usepackage{lscape}
\usepackage{titlesec}
\usepackage{tabularx}
\usepackage{array}
\usepackage{ragged2e}
\usepackage{longtable}

\usepackage[colorlinks=true,bookmarksopen=true,bookmarksnumbered=true,
            citecolor=red,urlcolor=red,linkcolor=black]{hyperref}

\newtheorem{assumption}{Assumption}
\newtheorem{lemma}{Lemma}
\newtheorem{theorem}{Theorem}

\newcolumntype{L}[1]{>{\RaggedRight\arraybackslash}p{#1}}
\newcolumntype{R}[1]{>{\RaggedLeft\arraybackslash}p{#1}}
\newcolumntype{C}[1]{>{\Centering\arraybackslash}p{#1}}
\newcolumntype{Y}{>{\RaggedRight\arraybackslash}X}

\newcommand{\blind}{0}

\begin{document}

% Keep sections 1-7 out of the printed table of contents (which lists only
% the appendices); the switch back happens at the supplement.
\addtocontents{toc}{\protect\setcounter{tocdepth}{-1}}

\def\spacingset#1{\renewcommand{\baselinestretch}%
{#1}\small\normalsize} \spacingset{1}

%%%%%%%%%%%%%%%%%%%%%%%%%%%%%%%%%%%%%%%%%%%%%%%%%%%%%%%%%%%%%%%%%%%%%%%%%%%%%%

\if0\blind
{
\title{\bf Simultaneously accounting for the winner's curse and sample structure in Mendelian randomization: bivariate rerandomized inverse variance weighted estimator}

\author{Xin Liu,$\rm{^{a,\dagger}}$ Youpeng Su,$\rm{^{a,\dagger}}$ Chuchu Zeng,$\rm{^{a}}$ Yilei Ma,$\rm{^{b}}$ \\
Siqi Xu,$\rm{^{c}}$ Ping Yin,$\rm{^{a}}$ and Peng Wang$\rm{^{a,*}}$
\hspace{.2cm}\\
\ \\
$\rm{^{a}}$Department of Epidemiology and Biostatistics, School of Public Health,\\
Tongji Medical College, Huazhong University of Science and Technology,\\
Wuhan, China\\[0.35em]
$\rm{^{b}}$Tongji Hospital, Tongji Medical College,\\
Huazhong University of Science and Technology, Wuhan, China\\[0.35em]
$\rm{^{c}}$Clinical Research Centre, Guangzhou First People's Hospital,\\
Guangzhou Medical University, Guangzhou, China
}

\maketitle

\begingroup
\renewcommand{\thefootnote}{\fnsymbol{footnote}}
\footnotetext[2]{These authors contributed equally to this work.}
\footnotetext[1]{Correspondence: pengwang\_stat@hust.edu.cn}
\endgroup
}
\fi

\newpage
\begin{abstract}
The recently developed rerandomized inverse variance weighted (RIVW) estimator provides a simple and efficient framework to break the winner's curse in two-sample Mendelian randomization (MR).
However, this method does not account for sample structure (e.g., residual population stratification and sample overlap), a common source of confounding in MR studies.
Sample structure can not only distort SNP--exposure and SNP--outcome association estimates but also induce correlation between them, leading exposure-side instrument selection to propagate bias to the outcome side.
To address this challenge, we propose the bivariate RIVW (BRIVW) estimator to simultaneously account for the winner's curse and sample structure.
The BRIVW estimator extends the RIVW framework by modeling the joint distribution of SNP--exposure and SNP--outcome association estimates, first adjusting their covariance matrix via linkage disequilibrium score regression to account for sample structure and then applying randomized instrument selection and bivariate Rao--Blackwellization to obtain unbiased post-selection association estimates together with an estimator of their covariance matrix.
Under mild conditions, we show that the BRIVW estimator is consistent and asymptotically normal.
The finite-sample performance of the proposed estimator is evaluated through extensive simulations and real data analyses.
\end{abstract}
{\it Keywords:}  causal inference, inverse variance weighting, Mendelian randomization, sample structure, winner's curse

\vfill

\newpage
\spacingset{1.45} % DON'T change the spacing!
\section{Introduction}
Establishing causal relationships remains a central goal in biomedical research (Rothman and Greenland 2005). 
Mendelian randomization (MR) has become a widely used framework for this purpose, using single nucleotide polymorphisms (SNPs) as instrumental variables (IVs) to infer causal effects from observational data (Smith and Ebrahim 2003; Didelez and Sheehan 2007; Burgess et al. 2012).
The growing availability of genome-wide association study (GWAS) summary statistics has made two-sample MR increasingly scalable across complex traits.
In practice, however, such a design often faces two complications: SNPs may have weak to moderate associations with the exposure, making exposure-based IV selection routine, 
and exposure and outcome GWASs may retain sample structure due to sample overlap or residual population stratification.

The weak IV problem has motivated the development of methods such as the debiased inverse variance weighted (dIVW) estimator (Ye, Shao, and Kang 2021) and its extensions (Xu et al. 2023; Su et al. 2024). 
Nevertheless, exposure-based IV selection remains common in MR studies as a means of improving IV strength and estimation precision.
However, using the same data for IV selection and effect estimation can induce winner's curse bias in SNP-exposure association estimates (Gkatzionis and Burgess 2019; Jiang et al. 2023). 
The recently proposed rerandomized IVW (RIVW) estimator (Ma, Wang, and Wu 2023) addresses this issue by combining randomized IV selection with Rao--Blackwellization, thereby providing a simple and effective post-selection correction.

Sample structure is another common concern in MR studies, particularly in large biobanks and GWAS consortia, where shared participants or cohorts often lead to sample overlap between exposure and outcome GWASs. 
Meanwhile, residual population stratification may persist even after adjustment with principal components or linear mixed models (Price et al. 2006; Loh et al. 2015).
Such sample structure can alter the joint sampling distribution of the SNP-exposure and SNP-outcome association estimates by inflating their marginal variances and inducing correlation between their estimation errors (Burgess, Davies, and Thompson 2016; Sanderson et al. 2021; Hu et al. 2022).

The preceding two issues have mostly been discussed separately,
but the key challenge arises when exposure-based IV selection is performed in the presence of sample structure.
Because SNP--exposure and SNP--outcome association estimates are correlated, exposure-based IV selection not only affects the selected SNP-exposure association estimates but also shifts the conditional distribution of the corresponding SNP-outcome association estimates as well as their post-selection covariance. 
The winner's curse can therefore propagate from the exposure side to the outcome side, creating a bivariate post-selection problem. 
Consequently, the RIVW estimator may lose its validity because it does not account for either the induced outcome-side shift or the post-selection covariance.
Although such a bivariate post-selection problem has been recognized previously (Sadreev et al. 2021; Jiang et al. 2023), statistical methods that can jointly address the winner's curse and sample structure remain limited.
A notable exception is MR-APSS, which provides a likelihood-based framework that jointly models SNP-exposure and SNP-outcome association estimates, using linkage disequilibrium score regression (LDSC) to account for sample structure and a truncated bivariate normal model to adjust for winner's curse bias (Hu et al. 2022). 
However, its reliance on parametric modeling assumptions and variational inference motivates the development of a complementary, simple, closed-form approach.

To fill this gap, we propose the bivariate RIVW (BRIVW) estimator, which extends the RIVW framework to explicitly model the joint distribution of SNP--exposure and SNP--outcome association estimates.
We first adjust the covariance matrix of these estimates using LDSC to account for sample structure (Bulik-Sullivan et al. 2015a, b; Hu et al. 2022).
Rather than simply applying RIVW after LDSC calibration, we develop a bivariate Rao--Blackwellization procedure under randomized IV selection to construct unbiased post-selection association estimates together with an estimator of their covariance matrix.
The resulting estimator retains the closed-form simplicity and computational scalability of IVW-type methods while simultaneously addressing the winner's curse and sample structure.
We establish the asymptotic properties of the proposed estimator under mild conditions.
Extensive simulations and real data analyses show that the proposed estimator can properly account for both the winner's curse and sample structure.

The remainder of the article is organized as follows. 
Section 2 introduces the notation and assumptions. 
Section 3 briefly reviews the RIVW estimator.
Section 4 characterizes the bivariate post-selection bias problem under sample structure and develops the BRIVW estimator.
Sections 5 and 6 report simulation results and real data applications, respectively. Section 7 concludes with a discussion.

\section{Notation and Assumptions}
We assume that, after LD pruning or clumping (Purcell et al. 2007; Hemani et al. 2018), $p$ independent SNPs
$G_1,\ldots,G_p$ are available as candidate IVs.
Let $\beta$ denote the causal effect of an exposure $X$ on an outcome $Y$.
We adopt the following linear structural model (Pierce and Burgess 2013; Burgess, Small, and Thompson 2017):
\begin{align*}
X &= \sum_{j=1}^p \gamma_j G_j + \eta_X U + E_X, \\
Y &= \beta X + \sum_{j=1}^{p} \alpha_j G_j + \eta_Y U + E_Y,
\end{align*}
where $\gamma_j$ denotes the effect of $G_j$ on $X$, $\alpha_j$ denotes the direct pleiotropic effect of $G_j$ on $Y$,
and $U$ is an unmeasured confounder independent of all $G_j$. $E_X$ and $E_Y$ are mutually independent error terms that are also independent of $(G_1,\ldots,G_p,U)$.
Let $\Gamma_j$ denote the effect of $G_j$ on $Y$. Under the above model,
$\Gamma_j = \beta \gamma_j + \alpha_j$.

Two-sample MR is commonly conducted using GWAS summary statistics
(Burgess et al. 2015).
For each \(G_j\), let \(\hat{\gamma}_j\) and \(\hat{\Gamma}_j\)
denote its marginal association estimates with the exposure and outcome, respectively.
Although they are expected to be independent under the standard two-sample design,
sample structure may violate this condition
(Burgess, Davies, and Thompson 2016; Hu et al. 2022; Zhang et al. 2025).
We therefore adopt the following assumptions:
\begin{assumption}
The exposure and outcome GWAS sample sizes, denoted by $n_X$ and $n_Y$,
satisfy $n_X,n_Y \to \infty$ at the same order, and the number of
independent SNPs $p \to \infty$.
\end{assumption}

\begin{assumption}
\begin{enumerate}[(i)]
    \item For any $j \neq j'$, the pairs $(\hat{\gamma}_j, \hat{\Gamma}_j)$ and $(\hat{\gamma}_{j'}, \hat{\Gamma}_{j'})$ are independent.
    \item For each SNP $j$,     
    \[
        \begin{bmatrix}
        \hat{\gamma}_j \\[2pt]
        \hat{\Gamma}_j
        \end{bmatrix} \sim 
        N \left(
        \begin{bmatrix}
        \gamma_j \\[2pt]
        \beta\gamma_j+\alpha_j
        \end{bmatrix},
        \begin{bmatrix}
        \sigma_{\hat{\gamma}_j}^2 & \rho \, \sigma_{\hat{\gamma}_j} \sigma_{\hat{\Gamma}_j} \\[2pt]
        \rho \, \sigma_{\hat{\gamma}_j} \sigma_{\hat{\Gamma}_j} & \sigma_{\hat{\Gamma}_j}^2
        \end{bmatrix}
        \right),
    \]
where \(\rho\in(-1,1)\) captures the correlation induced by sample structure.
In addition, there exists some $v \to 0 $ such that $\{\sigma_{\hat{\gamma}_j}/v,\; \sigma_{\hat{\Gamma}_j}/v: 1 \le j \le p\}$ are bounded and bounded away from zero.

\end{enumerate}
\end{assumption}

\noindent\textbf{Remark.}
The standard errors \(\sigma_{\hat{\gamma}_j}\) and \(\sigma_{\hat{\Gamma}_j}\) in Assumption 2 are the working standard errors used for inference. 
In the absence of sample structure, the working standard errors coincide with the reported GWAS standard errors, and there is no correlation between $\hat{\gamma}_j$ and $\hat{\Gamma}_j$ (i.e., $\rho=0$).
Otherwise, the single-trait LDSC intercept estimates \(c_1\) and \(c_2\) are used to scale the marginal variances for the exposure and outcome GWASs, respectively, while the cross-trait LDSC intercept estimate \(c_{12}\) captures the correlation induced by sample structure, with $\rho = c_{12}/\sqrt{c_1 c_2}$.
Notably, $c_1$, $c_2$, and $c_{12}$ are treated as known rather than estimated in the subsequent analysis to avoid cumbersome notation. (Liu and Lin 2018; Hu et al. 2022).

We next state the balanced pleiotropy and InSIDE (Instrument Strength Independent of Direct Effect)  conditions required for the validity of the BRIVW estimator.
\begin{assumption}
The pleiotropic effects \(\alpha_1,\ldots,\alpha_p\) are mutually independent
across SNPs and independent of their corresponding instrument strengths
\(\gamma_1,\ldots,\gamma_p\). The former have mean \(0\), variance
\(\tau_\alpha^2\), and a bounded third moment. In addition,
\(\tau_\alpha/v\) is bounded.
\end{assumption}

Assumption 3 allows balanced pleiotropy but rules out directional and correlated pleiotropy.
The special case \(\alpha_j=0\) for all \(j\) corresponds to no horizontal pleiotropy.
The boundedness of \(\tau_\alpha/v\), as also assumed for RIVW,
is a technical condition that keeps the scale of the pleiotropic effects no larger than the common sampling-error scale, 
preventing their contribution from dominating the asymptotic variance.

Finally, we introduce some notation for probabilistic ordering. We use $\xrightarrow{p}$ and $\xrightarrow{d}$ to denote convergence in probability and
convergence in distribution, respectively.
For random variables $A$ and $B$, $A = O_p(B)$ indicates that
$A/B$ is bounded in probability, whereas $A = o_p(B)$ implies that
$A/B \xrightarrow{p} 0$. We write $A=\Theta_p(B)$ if both $A=O_p(B)$ and $B=O_p(A)$.

\section{Overview of the RIVW Estimator}
For a prespecified threshold $\lambda$, selecting IVs with
$\lvert \hat{\gamma}_j / \sigma_{\hat{\gamma}_j} \rvert > \lambda$
truncates the distribution of $\hat{\gamma}_j$ and thereby induces winner's curse bias.
The RIVW estimator (Ma, Wang, and Wu 2023) removes this bias by introducing independent pseudo SNP--exposure associations \(Z_j\sim N(0,\eta^2)\) and performing randomized IV selection:
\[
\mathcal S_{\lambda}
=
\left\{
j : S_j > 0,\; j = 1, 2, \ldots, p
\right\},
\qquad
\text{where }
S_j
=
\left|
\frac{\hat{\gamma}_j}{\sigma_{\hat{\gamma}_j}}
+
Z_j
\right|
-
\lambda.
\]
Here, \(\eta\) is a prespecified tuning parameter that is required to be bounded away from zero and infinity.
RIVW first constructs an initial estimator of
\(\gamma_j\) that is unbiased and independent of the selection event:
\[
\hat{\gamma}_{j,\mathrm{ini}}
=
\hat{\gamma}_j
-
\frac{\sigma_{\hat{\gamma}_j}}{\eta^2}Z_j.
\]
Then, Rao--Blackwellizing this initial estimator improves its efficiency and yields
\[
\hat{\gamma}_{j,\mathrm{RB}}
    =
    \hat{\gamma}_j 
    - 
    \frac{\sigma_{\hat{\gamma}_j}}{\eta}\,
    \frac{\phi(A_{j,+}) - \phi(A_{j,-})}
         {1 - \Phi(A_{j,+}) + \Phi(A_{j,-})},
\quad
\text{where }
A_{j,\pm}
    = -\frac{\hat{\gamma}_j}{\sigma_{\hat{\gamma}_j}\eta} 
      \pm \frac{\lambda}{\eta},
\]
and $\phi(\cdot)$ and $\Phi(\cdot)$ denote the probability density and cumulative distribution functions of the standard normal distribution, respectively.
The estimator $\hat{\gamma}_{j,\mathrm{RB}}$ is unbiased for \(\gamma_j\) conditional on the selection event and achieves minimum variance for a given $\eta$.
Incorporating $\hat{\gamma}_{j,\mathrm{RB}}$ and its variance estimator $\hat{\sigma}^2_{\hat{\gamma}_{j,\mathrm{RB}}}$, the RIVW estimator is defined as
\[
\hat{\beta}_{\mathrm{RIVW}}
    = 
    \frac{ \sum_{j \in \mathcal S_{\lambda}}
            \sigma_{\hat{\Gamma}_j}^{-2}
            \hat{\Gamma}_j \hat{\gamma}_{j,\mathrm{RB}} }
         { \sum_{j \in \mathcal S_{\lambda}}
            \sigma_{\hat{\Gamma}_j}^{-2}
            (\hat{\gamma}_{j,\mathrm{RB}}^2
             - \hat{\sigma}^2_{\hat{\gamma}_{j,\mathrm{RB}}}) },
\]
which corrects for both winner's curse and weak IV bias.

However, the RIVW estimator is developed under the standard setting in which \(\rho=0\). 
When sample structure induces correlation between \(\hat{\gamma}_j\) and \(\hat{\Gamma}_j\),
selection based on \(\hat{\gamma}_j\) also distorts the post-selection distribution of \(\hat{\Gamma}_j\), motivating the development of the BRIVW estimator in Section~4.

\section{Methods}
\subsection{Bivariate post-selection bias under sample structure}
When \(\rho=0\), the SNP--outcome association estimate
\(\hat{\Gamma}_j\) is independent of the randomized selection event
\(j\in\mathcal S_\lambda\), so only the selected
SNP--exposure association estimates need to be adjusted.
However, when sample structure induces a nonzero $\rho$, this independence no longer holds:
selection based on \(\hat{\gamma}_j\) also changes the conditional
distribution of \(\hat{\Gamma}_j\).
The following lemma makes the outcome-side selection effect explicit.

\begin{lemma}
Assume that Assumption~1--2 hold. Then
\[
\mathbb{E}[\hat{\Gamma}_j \mid S_j>0]
=
\Gamma_j
+
\underbrace{
\dfrac{\rho \, \sigma_{\hat{\Gamma}_j}}
{\Pr(S_j>0)\,\sqrt{1+\eta^2}}
\Bigg[
\phi\Bigg(
\dfrac{\lambda - \dfrac{\gamma_j}{\sigma_{\hat{\gamma}_j}}}{\sqrt{1+\eta^2}}
\Bigg)
-
\phi\Bigg(
\dfrac{-\lambda - \dfrac{\gamma_j}{\sigma_{\hat{\gamma}_j}}}{\sqrt{1+\eta^2}}
\Bigg)
\Bigg]
}_{\text{bias}}
\]
where
\[
\Pr(S_j > 0)
=
\Phi\Bigg(
\dfrac{
-\lambda - \dfrac{\gamma_j}{\sigma_{\hat{\gamma}_j}}
}{
\sqrt{1 + \eta^2}
}
\Bigg)
+
1 -
\Phi\Bigg(
\dfrac{
\lambda - \dfrac{\gamma_j}{\sigma_{\hat{\gamma}_j}}
}{
\sqrt{1 + \eta^2}
}
\Bigg).
\]
\end{lemma}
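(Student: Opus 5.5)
The plan is to decompose $\hat{\Gamma}_j$ into a part that is a linear function of $\hat{\gamma}_j$ and a part independent of everything that drives selection, and then to compute one truncated-normal expectation. By Assumption~2(ii), the bivariate normal regression gives
\[
\hat{\Gamma}_j = \Gamma_j + \rho\,\frac{\sigma_{\hat{\Gamma}_j}}{\sigma_{\hat{\gamma}_j}}\,(\hat{\gamma}_j-\gamma_j) + \epsilon_j ,
\]
where $\epsilon_j \sim N\bigl(0,(1-\rho^2)\sigma^2_{\hat{\Gamma}_j}\bigr)$ is independent of $\hat{\gamma}_j$. Since $Z_j$ is generated independently of the GWAS data, $\epsilon_j$ is also independent of $Z_j$, and hence of the event $\{S_j>0\}$, which is a function of $(\hat{\gamma}_j,Z_j)$ only. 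Conditioning on $S_j>0$ therefore leaves $\mathbb{E}[\epsilon_j\mid S_j>0]=0$. The problem reduces to computing $\rho\,\sigma_{\hat{\Gamma}_j}\,\mathbb{E}[T_j \mid S_j>0]$, where $T_j=(\hat{\gamma}_j-\gamma_j)/\sigma_{\hat{\gamma}_j}\sim N(0,1)$. Only Assumption~2 is actually needed; Assumption~1 plays no role in this exact finite-sample identity.

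Next, set $\mu_j=\gamma_j/\sigma_{\hat{\gamma}_j}$ and $W_j=\hat{\gamma}_j/\sigma_{\hat{\gamma}_j}+Z_j = \mu_j + T_j + Z_j \sim N(\mu_j,1+\eta^2)$. The selection event is $\{|W_j|>\lambda\}$. For this step:
\begin{itemize}
\item The probability formula $\Pr(S_j>0)=\Phi\bigl((-\lambda-\mu_j)/\sqrt{1+\eta^2}\bigr)+1-\Phi\bigl((\lambda-\mu_j)/\sqrt{1+\eta^2}\bigr)$ follows immediately by standardizing $W_j$.
\item Since $(T_j,W_j)$ is jointly normal with $\mathrm{Cov}(T_j,W_j)=1$, we have $\mathbb{E}[T_j\mid W_j]=(W_j-\mu_j)/(1+\eta^2)$.
\end{itemize}
By the tower property,
\[
\mathbb{E}[T_j\mid S_j>0]=\frac{\mathbb{E}[U_j\,\mathbf{1}\{|W_j|>\lambda\}]}{\sqrt{1+\eta^2}\,\Pr(S_j>0)},
\qquad U_j=\frac{W_j-\mu_j}{\sqrt{1+\eta^2}}\sim N(0,1).
\]

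The final step uses the standard identity $\mathbb{E}[U\mathbf{1}\{U>a\}]=\phi(a)$ and $\mathbb{E}[U\mathbf{1}\{U<b\}]=-\phi(b)$ for $U\sim N(0,1)$. The event $|W_j|>\lambda$ is $\{U_j>a_j\}\cup\{U_j<b_j\}$ with $a_j=(\lambda-\mu_j)/\sqrt{1+\eta^2}$ and $b_j=(-\lambda-\mu_j)/\sqrt{1+\eta^2}$. These two sets are disjoint because $\lambda\ge 0$, which I would note explicitly; if $\lambda<0$ the event is trivially everything and the bias vanishes, consistent with $\phi(a_j)=\phi(b_j)$ failing only in the degenerate case. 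This gives $\mathbb{E}[U_j\mathbf{1}\{|W_j|>\lambda\}]=\phi(a_j)-\phi(b_j)$. Multiplying by $\rho\sigma_{\hat{\Gamma}_j}$ and combining yields exactly the stated bias term.

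I expect no real obstacle here. The only points requiring care are (a) justifying that $\epsilon_j$ is independent of $Z_j$, which rests on $Z_j$ being external randomization independent of the data, and (b) tracking the $\sqrt{1+\eta^2}$ factors correctly when passing from $T_j$ to $U_j$.
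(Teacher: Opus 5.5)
Your proof is correct. It takes a genuinely different and shorter route than the paper's.

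\textbf{The paper's route.} The paper orders the Cholesky decomposition the other way: it sets $(\hat\Gamma_j-\Gamma_j)/\sigma_{\hat\Gamma_j}=U$ and $(\hat\gamma_j-\gamma_j)/\sigma_{\hat\gamma_j}=\rho U+\sqrt{1-\rho^2}\,V$, and conditions on $Z_j=w$. It then integrates out $w$ and $V$ using $\int\phi(x)\Phi(a+bx)\,dx=\Phi(a/\sqrt{1+b^2})$. Differentiating the resulting joint probability gives the full conditional density of $\hat\Gamma_j/\sigma_{\hat\Gamma_j}$ given selection. The mean then follows from an integration-by-parts identity for $\int x\phi(x)\Phi\{(a-\rho x)/\sqrt{1-\rho^2+\eta^2}\}\,dx$.

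\textbf{Your route.} You regress $\hat\Gamma_j$ on $\hat\gamma_j$, so the residual is independent of $(\hat\gamma_j,Z_j)$ and drops out. You then project $T_j$ onto the selection statistic $W_j$. This reduces everything to the first moment of a two-sided truncated standard normal, with no density derivation and no Gaussian integral identities.

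\textbf{What each buys.} Your argument shows directly why the bias is exactly $\rho\sigma_{\hat\Gamma_j}$ times the standardized exposure-side selection bias $\mathbb{E}[T_j\mid S_j>0]$. It is also the same orthogonal decomposition the paper itself uses later, in the proof of the variance-order lemma in Appendix~C. The paper's route gives in exchange the entire post-selection density of $\hat\Gamma_j$, not just its mean, which is useful for other moments and for the illustration in Web Figure~S1.

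\textbf{One small correction.} Your aside on $\lambda<0$ is muddled. For $\lambda<0$ the two half-lines $\{U_j>a_j\}$ and $\{U_j<b_j\}$ overlap, so the displayed formulas are simply invalid. The expression for $\Pr(S_j>0)$ exceeds one, and $\phi(a_j)-\phi(b_j)$ is nonzero unless $\gamma_j=0$. Just take $\lambda\ge0$ as a standing assumption, as is implicit for a selection threshold.
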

\noindent
The proof of Lemma 1 is provided in Web Appendix A.1. It shows that exposure-based IV selection induces a conditional mean
shift in \(\hat{\Gamma}_j\) when \(\rho\neq0\).
The direction of the shift depends on \(\rho\) and \(\gamma_j\),
while its magnitude increases linearly with \(\lvert\rho\rvert\).
This shift vanishes when \(\rho=0\).
Beyond this mean shift, the selection event can also
alter the covariance between \(\hat{\gamma}_j\) and
\(\hat{\Gamma}_j\), so an exposure-side correction alone is generally
insufficient in the presence of sample structure.

\subsection{The BRIVW estimator}
Motivated by the outcome-side mean shift and covariance distortion identified above, we extend RIVW to the bivariate post-selection setting under sample structure.
Throughout this subsection, \(\sigma_{\hat{\gamma}_j}\) and \(\sigma_{\hat{\Gamma}_j}\) denote the LDSC-calibrated working standard errors, and $\rho$ denotes the correlation coefficient between the estimation errors, as specified in Assumption~2.

We first correct the outcome-side post-selection bias in \(\hat{\Gamma}_j\). 
Under Assumption~2,
\[
\operatorname{Cov}\left(
\hat{\Gamma}_j,\,
\hat{\gamma}_j/\sigma_{\hat{\gamma}_j}+Z_j
\right)
=
\rho\sigma_{\hat{\Gamma}_j},
\]
so \(\hat{\Gamma}_j\) is no longer independent of the selection event
\(j\in\mathcal S_\lambda\) when \(\rho\neq0\).
Consider the unbiased initial estimator
\[
\hat{\Gamma}_{j,\mathrm{ini}}
=
\hat{\Gamma}_j
-
\frac{\rho\sigma_{\hat{\Gamma}_j}}{\eta^2}Z_j.
\]
Its covariance with
\(\hat{\gamma}_j/\sigma_{\hat{\gamma}_j}+Z_j\) is zero.
Because the two quantities are jointly normally distributed,
\(\hat{\Gamma}_{j,\mathrm{ini}}\) is independent of the selection event and
therefore remains unbiased for \(\Gamma_j\) after selection.
Next, we apply the Rao--Blackwellization procedure to reduce the variance of $\hat{\Gamma}_{j,\mathrm{ini}}$ by conditioning on the observed summary statistics and the selection event:
\[
\hat{\Gamma}_{j,\mathrm{RB}}
=
\mathbb{E}[\hat{\Gamma}_{j,\mathrm{ini}} \mid \hat{\Gamma}_j, \hat{\gamma}_j, j \in \mathcal S_\lambda]
=\hat{\Gamma}_j
-
\frac{\rho\,\sigma_{\hat{\Gamma}_j}}{\eta}
\,
\frac{
\phi\!\left(A_{j,+}\right)
-
\phi\!\left(A_{j,-}\right)
}{
1-\Phi\!\left(A_{j,+}\right)
+
\Phi\!\left(A_{j,-}\right)
}.
\]
As shown in Web Appendix~A.2, the law of iterated expectations yields
$
\mathbb{E}\left[
\hat{\Gamma}_{j,\mathrm{RB}}
\mid j\in\mathcal S_\lambda
\right]
=
\Gamma_j.
$
Numerical illustrations in Web Figure~S1 further show that \(\hat{\Gamma}_{j,\mathrm{RB}}\) remains well centered around \(\Gamma_j\) under sample structure.

We next correct the post-selection covariance between 
\(\hat{\gamma}_{j,\mathrm{RB}}\) and \(\hat{\Gamma}_{j,\mathrm{RB}}\). 
Even though \(\hat{\Gamma}_{j,\mathrm{RB}}\) is unbiased after selection, the product 
\(\hat{\gamma}_{j,\mathrm{RB}}\hat{\Gamma}_{j,\mathrm{RB}}\) still contains a post-selection covariance contribution:
\[
\mathbb{E}\left(
\hat{\gamma}_{j,\mathrm{RB}}\hat{\Gamma}_{j,\mathrm{RB}}
\mid j\in\mathcal S_\lambda
\right)
=
\gamma_j\Gamma_j
+
\sigma_{\hat{\gamma}_{j,\mathrm{RB}} \hat{\Gamma}_{j,\mathrm{RB}}},
\]
where
$\sigma_{\hat{\gamma}_{j,\mathrm{RB}} \hat{\Gamma}_{j,\mathrm{RB}}}$ is the post-selection covariance between \(\hat{\gamma}_{j,\mathrm{RB}}\) and \(\hat{\Gamma}_{j,\mathrm{RB}}\).
Thus, this covariance term must be removed from the numerator of the IVW-type estimator. 
It differs from the original covariance
\(\rho\sigma_{\hat{\gamma}_j}\sigma_{\hat{\Gamma}_j}\) because randomized IV selection and Rao--Blackwellization alter the joint post-selection distribution.
Using the conditional covariance decomposition, we obtain the following closed-form estimator:
\[
\hat{\sigma}_{\hat{\gamma}_{j,\mathrm{RB}} \hat{\Gamma}_{j,\mathrm{RB}}}
= \rho \, \sigma_{\hat{\gamma}_j} \, \sigma_{\hat{\Gamma}_j}
\Bigg[
1
- \frac{1}{\eta^{2}}
\frac{A_{j,+}\,\phi(A_{j,+}) - A_{j,-}\,\phi(A_{j,-})}
{1 - \Phi(A_{j,+}) + \Phi(A_{j,-})}
+ \frac{1}{\eta^{2}}
\left(
\frac{\phi(A_{j,+}) - \phi(A_{j,-})}
{1 - \Phi(A_{j,+}) + \Phi(A_{j,-})}
\right)^{2}
\Bigg].
\]
As shown in Web Appendix~B.1,
$\hat{\sigma}_{\hat{\gamma}_{j,\mathrm{RB}} \hat{\Gamma}_{j,\mathrm{RB}}}$
is unbiased for
\(\sigma_{\hat{\gamma}_{j,\mathrm{RB}}\hat{\Gamma}_{j,\mathrm{RB}}}\) conditional on the
selection event.
Its aggregate estimation error over the selected IVs is asymptotically negligible under the conditions stated in Web Appendix~B.2.

Combining the above components, we construct the BRIVW estimator as
\begin{equation*}
\hat{\beta}_{\mathrm{BRIVW}}
=
\frac{\sum_{j \in \mathcal{S}_\lambda}
\sigma_{\hat{\Gamma}_j}^{-2}\big(\hat{\Gamma}_{j,\mathrm{RB}}\hat{\gamma}_{j,\mathrm{RB}}-\hat{\sigma}_{\hat{\gamma}_{j,\mathrm{RB}} \hat{\Gamma}_{j,\mathrm{RB}}}\big)}
{\sum_{j \in \mathcal{S}_\lambda}
\sigma_{\hat{\Gamma}_j}^{-2}\big(\hat{\gamma}_{j,\mathrm{RB}}^{2}-\hat{\sigma}^{\,2}_{\hat{\gamma}_{j,\mathrm{RB}}}\big)}.
\end{equation*}
Here, we retain the conventional weights
\(\sigma_{\hat{\Gamma}_j}^{-2}\) because
weighting by the inverse estimated post-selection variance of
\(\hat{\Gamma}_{j,\mathrm{RB}}\) would introduce additional estimation uncertainty.
Compared with the classical IVW estimator, BRIVW jointly corrects for the winner's curse on both the exposure and outcome sides, sample-structure–induced covariance, and weak IV bias, while retaining a closed-form IVW-type estimator.
When $\rho=0$ and $c_1=c_2=1$ (i.e., no sample structure), the BRIVW estimator reduces to the RIVW estimator, and thus the former can be viewed as an extension of the latter under sample structure.

To estimate the variance of $\hat{\beta}_{\mathrm{BRIVW}}$, we consider the following decomposition:
\[
\hat{\beta}_{\mathrm{BRIVW}}
=
\beta
+
\frac{
\sum\nolimits_{j \in \mathcal{S}_\lambda} \sigma_{\hat{\Gamma}_j}^{-2} u_{j,\mathrm{BRIVW}}
}{
\sum\nolimits_{j \in \mathcal{S}_\lambda}
\sigma_{\hat{\Gamma}_j}^{-2}\big(\hat{\gamma}_{j,\mathrm{RB}}^{2}-\hat{\sigma}^{\,2}_{\hat{\gamma}_{j,\mathrm{RB}}}\big)
},
\]
where
\[
u_{j,\mathrm{BRIVW}}
=
\left(
\hat{\Gamma}_{j,\mathrm{RB}}\hat{\gamma}_{j,\mathrm{RB}}
-
\hat{\sigma}_{\hat{\gamma}_{j,\mathrm{RB}} \hat{\Gamma}_{j,\mathrm{RB}}}
\right)
-
\beta
\left(
\hat{\gamma}_{j,\mathrm{RB}}^{2}
-
\hat{\sigma}^{\,2}_{\hat{\gamma}_{j,\mathrm{RB}}}
\right).
\]
Web Appendix~C.1 shows that the denominator of the BRIVW estimator is asymptotically equivalent to its leading
term $\sum_{j\in\mathcal S_\lambda}\sigma_{\hat{\Gamma}_j}^{-2}\gamma_j^2$, 
so the dominant source of variability in $\hat{\beta}_{\mathrm{BRIVW}}$ arises from the numerator.
Under Assumptions~2 and 3, the terms \(u_{j,\mathrm{BRIVW}}\) are independent across SNPs and have conditional mean zero after selection.
This motivates estimating the conditional variance of the numerator by
the corresponding sum of squared residuals $\sum_{j \in \mathcal{S}_\lambda} \sigma_{\hat{\Gamma}_j}^{-4} u_{j,\mathrm{BRIVW}}^2$.
Replacing the unknown \(\beta\) with
\(\hat{\beta}_{\mathrm{BRIVW}}\) yields
\begin{equation*}
\widehat{V}_{\mathrm{BRIVW}}
=
\frac{\sum_{j \in \mathcal{S}_\lambda}
\sigma_{\hat{\Gamma}_j}^{-4} \big[
\hat{\Gamma}_{j,\mathrm{RB}} \hat{\gamma}_{j,\mathrm{RB}}
-
\hat{\sigma}_{\hat{\gamma}_{j,\mathrm{RB}} \hat{\Gamma}_{j,\mathrm{RB}}}
-
\hat{\beta}_{\mathrm{BRIVW}}
\big(\hat{\gamma}_{j,\mathrm{RB}}^{2}
-
\hat{\sigma}^{\,2}_{\hat{\gamma}_{j,\mathrm{RB}}}\big)
\big]^{2}}
{\left[
\sum_{j \in \mathcal{S}_\lambda}
\sigma_{\hat{\Gamma}_j}^{-2}\big(\hat{\gamma}_{j,\mathrm{RB}}^{2}
-
\hat{\sigma}^{\,2}_{\hat{\gamma}_{j,\mathrm{RB}}}\big)
\right]^{2}}.
\end{equation*}
Under the balanced pleiotropy and InSIDE conditions, 
pleiotropic effects contribute only to the residual variation. 
The residual-based variance estimator therefore remains valid without modification (Ma, Wang, and Wu 2023).

We next establish the asymptotic properties of the BRIVW estimator.
Let \(p_\lambda=|\mathcal S_\lambda|\) denote the number of selected IVs, and define the post-selection instrument strength as \[ \kappa_\lambda = \frac{1}{p_\lambda} \sum_{j\in\mathcal S_\lambda} \left( \frac{\gamma_j}{\sigma_{\hat{\gamma}_j}} \right)^2 . \] 
For the theoretical analysis, we impose the following additional assumption.

\begin{assumption}
The true instrument effects satisfy
\[
\frac{
\max_{j \in \mathcal{S}_\lambda} \gamma_j^2
}{
\sum_{j \in \mathcal{S}_\lambda} \gamma_j^2
}
\;\xrightarrow{p}\; 0.
\]
\end{assumption}
Assumption~4 rules out dominance by one or a few selected instruments and is used to establish asymptotic normality.

\begin{theorem}
Suppose that Assumptions~1--3 hold,
\(p_\lambda\xrightarrow{p}\infty\), and
$
\kappa_\lambda\sqrt{p_\lambda}
\xrightarrow{p}\infty.
$
Then, conditional on the selected instrument set
\(\mathcal S_\lambda\),
\[
\hat{\beta}_{\mathrm{BRIVW}}
\xrightarrow{p}
\beta.
\]
If Assumption~4 also holds, then
\[
\widehat V_{\mathrm{BRIVW}}^{-1/2}
\left(
\hat{\beta}_{\mathrm{BRIVW}}-\beta
\right)
\xrightarrow{d}
N(0,1).
\]
\end{theorem}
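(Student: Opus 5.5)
The plan is to follow the RIVW argument of Ma, Wang, and Wu (2023), working conditionally on the selection events $\{S_j>0\}$ and on $\mathcal S_\lambda$. Write $\hat\beta_{\mathrm{BRIVW}}-\beta = N/D$ with
\[
N=\sum_{j\in\mathcal S_\lambda}\sigma_{\hat\Gamma_j}^{-2}u_{j,\mathrm{BRIVW}}, \qquad
D=\sum_{j\in\mathcal S_\lambda}\sigma_{\hat\Gamma_j}^{-2}\bigl(\hat\gamma_{j,\mathrm{RB}}^2-\hat\sigma^2_{\hat\gamma_{j,\mathrm{RB}}}\bigr).
\]
By Assumption 2(i), conditional on $\mathcal S_\lambda$ the pairs $(\hat\gamma_{j,\mathrm{RB}},\hat\Gamma_{j,\mathrm{RB}})$ remain independent across $j$. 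This holds because each $S_j$ depends only on SNP $j$'s data and on the independent $Z_j$. Both consistency and asymptotic normality then reduce to moment bounds on independent, non-identically distributed summands.

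\textbf{Step 1 (unbiasedness).} Conditionally on $j\in\mathcal S_\lambda$:
\begin{itemize}
\item $E[\hat\gamma_{j,\mathrm{RB}}]=\gamma_j$ (from RIVW);
\item $E[\hat\Gamma_{j,\mathrm{RB}}]=\Gamma_j$ (Web Appendix A.2);
\item $E[\hat\sigma_{\hat\gamma_{j,\mathrm{RB}}\hat\Gamma_{j,\mathrm{RB}}}]=\sigma_{\hat\gamma_{j,\mathrm{RB}}\hat\Gamma_{j,\mathrm{RB}}}$ (Web Appendix B.1);
\item the analogous identity for $\hat\sigma^2_{\hat\gamma_{j,\mathrm{RB}}}$ (from RIVW).
\end{itemize}
Together these give $E[\hat\gamma^2_{j,\mathrm{RB}}-\hat\sigma^2_{\hat\gamma_{j,\mathrm{RB}}}]=\gamma_j^2$ and $E[\hat\Gamma_{j,\mathrm{RB}}\hat\gamma_{j,\mathrm{RB}}-\hat\sigma_{\hat\gamma_{j,\mathrm{RB}}\hat\Gamma_{j,\mathrm{RB}}}]=\gamma_j\Gamma_j$. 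Since $\Gamma_j=\beta\gamma_j+\alpha_j$, we get $E[u_{j,\mathrm{BRIVW}}\mid\mathcal S_\lambda,\alpha]=\gamma_j\alpha_j$. By Assumption 3 (InSIDE, mean zero, independence across $j$), this term averages to mean zero.

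\textbf{Step 2 (moment bounds).} Bound the conditional variances of the summands, uniformly in $j$ after scaling by $v$. The key fact is that the Mills-ratio terms $\{\phi(A_{j,+})-\phi(A_{j,-})\}/\{1-\Phi(A_{j,+})+\Phi(A_{j,-})\}$ and $A_{j,\pm}\phi(A_{j,\pm})/\{\cdots\}$ are bounded functions of $\hat\gamma_j/\sigma_{\hat\gamma_j}$ up to linear growth, since $\eta$ is bounded away from $0$ and $\infty$. Hence $\hat\gamma_{j,\mathrm{RB}}-\gamma_j$ and $\hat\Gamma_{j,\mathrm{RB}}-\Gamma_j$ are $O_p(v)$ with uniformly bounded conditional fourth moments (scaled by $v$). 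It follows that
\[
\operatorname{Var}\bigl(\sigma_{\hat\Gamma_j}^{-2}u_{j,\mathrm{BRIVW}}\bigr)\lesssim v^{-4}\bigl(v^2\gamma_j^2+v^4\bigr),
\]
where $\tau_\alpha/v$ bounded controls the pleiotropy contribution $\gamma_j^2\tau_\alpha^2$. Likewise $\operatorname{Var}(D)\lesssim v^{-4}\sum_j(v^2\gamma_j^2+v^4)$, while $E[D]=\sum\sigma_{\hat\Gamma_j}^{-2}\gamma_j^2\asymp v^{-2}p_\lambda\kappa_\lambda$. Chebyshev's inequality then gives
\[
D/E[D]-1=O_p\bigl\{(p_\lambda\kappa_\lambda+p_\lambda)^{1/2}/(p_\lambda\kappa_\lambda)\bigr\}=o_p(1),
\]
using $\kappa_\lambda\sqrt{p_\lambda}\to\infty$; this reproduces Web Appendix C.1. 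Similarly $N/E[D]=O_p\{(p_\lambda\kappa_\lambda+p_\lambda)^{1/2}/(p_\lambda\kappa_\lambda)\}=o_p(1)$, which proves consistency.

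\textbf{Step 3 (normality).} Apply the Lyapunov CLT to $N$ conditional on $\mathcal S_\lambda$, bounding third absolute moments of the summands by $v^{-6}(|\gamma_j|^3v^3+v^6)$ using the bounded third moment of $\alpha_j$. The Lyapunov ratio is then of order
\[
\frac{\sum_j(|\gamma_j|^3v^3+v^6)}{\bigl\{\sum_j(\gamma_j^2v^2+v^4)\bigr\}^{3/2}}
\le \frac{\max_j|\gamma_j|\,\sum_j\gamma_j^2v^3}{(\cdots)^{3/2}}+\frac{p_\lambda v^6}{(p_\lambda v^4)^{3/2}}\to0.
\]
The first part vanishes by Assumption 4 and the second because $p_\lambda\to\infty$. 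Slutsky's theorem with $D/E[D]\to1$ then yields $(\hat\beta-\beta)/\{\operatorname{Var}(N)^{1/2}/E[D]\}\to N(0,1)$.

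\textbf{Step 4 (variance estimator, the main obstacle).} It remains to show $\widehat V_{\mathrm{BRIVW}}/\{\operatorname{Var}(N)/E[D]^2\}\to1$. The denominator is handled by Step 2. For the numerator, expand $\hat u_j=u_j-(\hat\beta-\beta)(\hat\gamma^2_{j,\mathrm{RB}}-\hat\sigma^2_{\hat\gamma_{j,\mathrm{RB}}})$. The cross and quadratic terms are negligible because $\hat\beta-\beta=O_p(\{p_\lambda(\kappa_\lambda+1)\}^{1/2}v^{-2}/E[D])$. For the main term, show $\sum\sigma^{-4}u_j^2/\operatorname{Var}(N)\to1$ by a weak law, which needs fourth moments of the summands. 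The delicate part is verifying that $\sum_j\operatorname{Var}(u_j^2)$ is $o\{\operatorname{Var}(N)^2\}$ uniformly over the heterogeneous $\gamma_j$. Here the bivariate correction terms, such as the covariance estimator depending on $A_{j,\pm}$, must be shown to have bounded higher moments. I would first try to bound them via the boundedness of $x\phi(x)$ and of the truncated-normal Mills ratio with linear growth, and then again invoke Assumption 4 to control the contribution of the largest $\gamma_j$.
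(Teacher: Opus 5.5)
Your outline has the same architecture as the paper's proof: ratio decomposition, Chebyshev for the denominator, an order-three Lyapunov CLT plus Slutsky, then the residual-based variance estimator. Your consistency argument is fine, because it only needs upper bounds. The gap is in Steps 3 and 4. You only establish $\operatorname{Var}(\sigma_{\hat\Gamma_j}^{-2}u_{j,\mathrm{BRIVW}}\mid j\in\mathcal S_\lambda)\lesssim \gamma_j^2/v^2+1$. Yet your Lyapunov bound divides by $\{\sum_j(\gamma_j^2v^2+v^4)\}^{3/2}$, and Step 4 normalizes by $\operatorname{Var}(N\mid\mathcal S_\lambda)$. Both silently use the matching \emph{lower} bound $\operatorname{Var}(N\mid\mathcal S_\lambda)\gtrsim p_\lambda(\kappa_\lambda+1)$, which you never prove. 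This matters precisely in the many-weak-instrument regime the theorem is built for. A lower bound of order $\sum_j\gamma_j^2/v^2\asymp p_\lambda\kappa_\lambda$ alone would bound the second Lyapunov term only by $p_\lambda/(p_\lambda\kappa_\lambda)^{3/2}=\kappa_\lambda^{-3/2}p_\lambda^{-1/2}$. This does not vanish for, e.g., $\kappa_\lambda=p_\lambda^{-1/3}$, even though $\kappa_\lambda\sqrt{p_\lambda}\to\infty$ there. You therefore need a floor of order $v^4$ on $\operatorname{Var}(u_{j,\mathrm{BRIVW}}\mid j\in\mathcal S_\lambda)$, uniform in $j$, even when $\gamma_j$ is tiny.

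The paper supplies this with two ingredients absent from your proposal.

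First, write $(\hat\Gamma_j-\Gamma_j)/\sigma_{\hat\Gamma_j}=\rho(\hat\gamma_j-\gamma_j)/\sigma_{\hat\gamma_j}+\sqrt{1-\rho^2}\,\varepsilon_j$, with $\varepsilon_j\sim N(0,1)$ independent of $(\hat\gamma_j,Z_j)$. With $u_{\hat\gamma_{j,\mathrm{RB}}}=\hat\gamma_{j,\mathrm{RB}}-\gamma_j$ and $u_{\hat\Gamma_{j,\mathrm{RB}}}=\hat\Gamma_{j,\mathrm{RB}}-\Gamma_j$, this gives $u_{\hat\Gamma_{j,\mathrm{RB}}}=\rho(\sigma_{\hat\Gamma_j}/\sigma_{\hat\gamma_j})u_{\hat\gamma_{j,\mathrm{RB}}}+\sigma_{\hat\Gamma_j}\sqrt{1-\rho^2}\,\varepsilon_j$. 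It also gives $\hat\sigma_{\hat\gamma_{j,\mathrm{RB}}\hat\Gamma_{j,\mathrm{RB}}}=\rho(\sigma_{\hat\Gamma_j}/\sigma_{\hat\gamma_j})\hat\sigma^{\,2}_{\hat\gamma_{j,\mathrm{RB}}}$. So $u_{j,\mathrm{BRIVW}}$ splits into a centered exposure-side term plus the uncorrelated term $\sigma_{\hat\Gamma_j}\sqrt{1-\rho^2}(\gamma_j+u_{\hat\gamma_{j,\mathrm{RB}}})\varepsilon_j$. The conditional variance of that term is $(1-\rho^2)\sigma_{\hat\Gamma_j}^2(\gamma_j^2+\sigma^2_{\hat\gamma_{j,\mathrm{RB}}})$.

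Second, you need $\sigma^2_{\hat\gamma_{j,\mathrm{RB}}}\ge 2\Phi(-\lambda/\eta)\,\sigma^2_{\hat\gamma_j}$ uniformly over all values of $\gamma_j$. The paper obtains this from three facts: the identity $\operatorname{Cov}(\hat\gamma_{j,\mathrm{RB}}/\sigma_{\hat\gamma_j},\hat\gamma_j/\sigma_{\hat\gamma_j}\mid j\in\mathcal S_\lambda)=1$ (an integration by parts under the selected density), the bound $\operatorname{Var}(\hat\gamma_j/\sigma_{\hat\gamma_j}\mid j\in\mathcal S_\lambda)\le\{2\Phi(-\lambda/\eta)\}^{-1}$, and Cauchy--Schwarz. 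Together these give $\operatorname{Var}(\sigma_{\hat\Gamma_j}^{-2}u_{j,\mathrm{BRIVW}}\mid j\in\mathcal S_\lambda)\asymp\gamma_j^2/v^2+1$ uniformly in $j$, after which your Steps 3 and 4 go through. The first identity also removes the ``delicate'' part of your Step 4: the covariance correction is a fixed multiple of the exposure-side variance correction, so no separate moment analysis of the bivariate terms is needed.

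Three smaller points. Your fourth-moment weak law in Step 4 is not covered by Assumption 3 once $\alpha_j\neq0$, since only a third moment is assumed. The paper instead truncates each squared summand at a small multiple of the standard deviation of $N$ and uses only the third-order Lyapunov ratio. Your Mills-ratio moment bounds rely on $\Pr(j\in\mathcal S_\lambda)\ge 2\Phi(-\lambda/\eta)$, i.e.\ on $\lambda$ being fixed, not merely on $\eta$ being bounded. The paper's orthogonal variable $(\hat\gamma_j-\gamma_j)/\sigma_{\hat\gamma_j}-Z_j/\eta^2$, which is independent of selection, gives $\lambda$-free upper bounds. Finally, $\mathbb E(D)\asymp p_\lambda\kappa_\lambda$, not $v^{-2}p_\lambda\kappa_\lambda$.
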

\noindent
The proof of Theorem 1 is given in Web Appendix~C.
Compared with the RIVW estimator, the BRIVW estimator is established under substantially weaker
asymptotic conditions: \(\lambda\to\infty\) is no longer required, and
the condition
\(\kappa_\lambda/\lambda^2\xrightarrow{p}\infty\) is replaced by
\(
\kappa_\lambda\sqrt{p_\lambda}\xrightarrow{p}\infty.
\)
The first relaxation follows from Web Appendix~A.3, which establishes
uniform upper and lower bounds for the Rao--Blackwellized variances
for fixed \(\lambda\). For the second relaxation, the orthogonal Gaussian
representation in Web Appendix~B.2 yields uniform fixed-order moment
bounds and sharpens the aggregate estimation errors of the variance
and covariance corrections to
\(O_p(v^2\sqrt{p_\lambda})\), with no dependence on \(\lambda\).
These bounds allow the higher-order Rao--Blackwellized terms to be
incorporated directly into the asymptotic analysis, leading to the
weaker strength condition above. 
Notably, this condition can accommodate many weak
instrument scenarios: \(\kappa_\lambda\) need not diverge and may even tend to
zero, provided that \(p_\lambda\) grows sufficiently fast (Ye, Shao, and Kang 2021).
The theorem establishes the consistency and asymptotic normality of the BRIVW estimator.
Therefore, a two-sided $100(1-\alpha)\%$ confidence interval (CI) for $\beta$ can be constructed as
 \[
\left[
\hat{\beta}_{\mathrm{BRIVW}}
-
\Phi^{-1}(1-\alpha/2)\sqrt{\widehat{V}_{\mathrm{BRIVW}}},
\;
\hat{\beta}_{\mathrm{BRIVW}}
+
\Phi^{-1}(1-\alpha/2)\sqrt{\widehat{V}_{\mathrm{BRIVW}}}
\right].
\]

\section{Simulation Studies}

\subsection{Simulation settings}

In the main simulations, we set \(\beta=0.2\).
The number of independent SNPs is fixed at \(p=200{,}000\).
The parameters $(\gamma_j, \alpha_j)$ are generated from the following mixture distribution:
\[
\begin{aligned}
\begin{pmatrix}
\gamma_j\\
\alpha_j
\end{pmatrix}
\sim\;&
\pi_x(1-w_1)
\begin{pmatrix}
N(0,\varepsilon_x^2)\\
\delta_0
\end{pmatrix}
+
\pi_x w_1
\begin{pmatrix}
N(0,\varepsilon_x^2)\\
N(0,\tau^2)
\end{pmatrix}
+
\pi_y
\begin{pmatrix}
\delta_0\\
N(0,\tau^2)
\end{pmatrix}
+
(1-\pi_x-\pi_y)
\begin{pmatrix}
\delta_0\\
\delta_0
\end{pmatrix},
\end{aligned}
\]
where \(\delta_0\) denotes the Dirac measure centered at zero.
The four mixture components represent valid IVs, balanced pleiotropic IVs, outcome-only IVs, and null IVs, respectively. 
We then set
$
\Gamma_j=\beta\gamma_j+\alpha_j
$
and generate the GWAS summary statistics from the bivariate normal model
in Assumption~2.
Here, \(\rho\) controls the sample-structure-induced correlation between the estimation errors of
\(\hat{\gamma}_j\) and \(\hat{\Gamma}_j\).
Throughout this section, we set \(\rho\in\{-0.3,-0.1,0,0.1,0.3\}\), \(n_X=n_Y=100{,}000\), 
\(\sigma_{\hat{\gamma}_j}=1/\sqrt{n_X}\), 
\(\sigma_{\hat{\Gamma}_j}=1/\sqrt{n_Y}\), 
\(\varepsilon_x^2=\tau^2=5\times10^{-5}\), 
\(\pi_x=0.02\), and \(\pi_y=0.01\).

We consider \(w_1\in\{0,0.5,1\}\) to vary the proportion of balanced pleiotropic IVs.
We compare the BRIVW estimator with four competing methods: IVW (Didelez and Sheehan 2007), RIVW (Ma, Wang, and Wu 2023), MR-Egger (Bowden, Davey Smith, and Burgess 2015), and MR-APSS (Hu et al. 2022).
For IV selection, IVW and MR-Egger use the conventional genome-wide
significance threshold \(\lambda=\Phi^{-1}(1-5\times10^{-8}/2)=5.45\), whereas MR-APSS uses its default
screening threshold \(\lambda=\Phi^{-1}(1-5\times10^{-5}/2)=4.06\).
Following Ma, Wang, and Wu (2023), RIVW and BRIVW use randomized selection
with \(\lambda=4.06\) and \(\eta=0.5\).
Additional simulations in Web Appendix~D.2 show that BRIVW is also robust to the choice of $\eta$.
All methods are evaluated in terms of empirical bias, standard deviation (SD), and coverage probability (CP) for the nominal \(95\%\) CI.
Because MR-APSS is computationally intensive,
the simulation results are based on 2,000 replicates.

\subsection{Simulation results}

\begin{figure}[!htbp]
    \centering
    \includegraphics[width=0.95\textwidth]{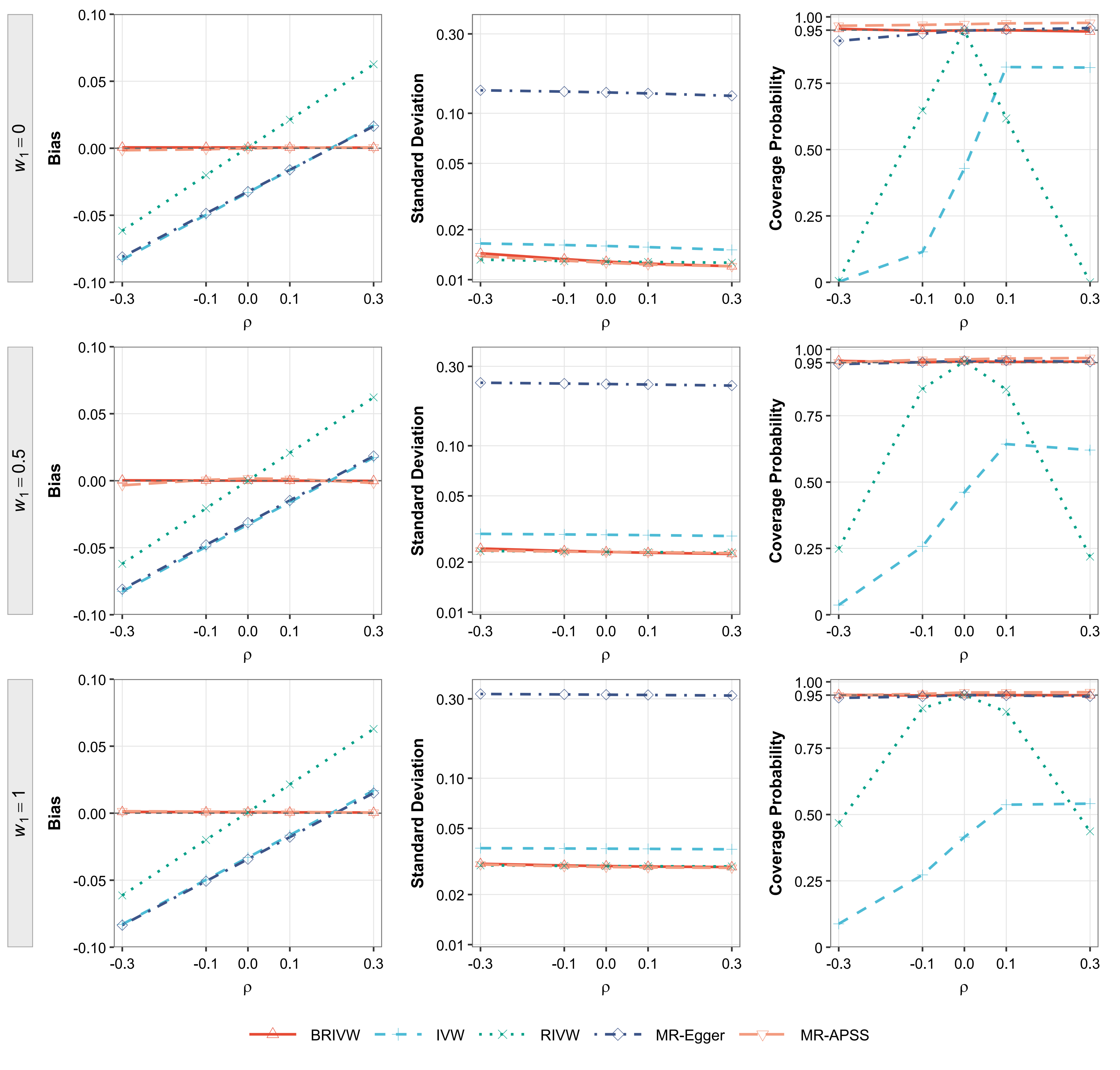}
    \caption{\small
    Empirical bias, standard deviation, and coverage probability of BRIVW
    and the competing methods across various values of \(\rho\) and \(w_1\).
    }
    \label{fig1}
\end{figure}

Figure~1 summarizes the simulation results across different levels of sample structure and balanced pleiotropy. 
In the absence of horizontal pleiotropy (i.e., \(w_1=0\)), the BRIVW estimator remains unbiased over the entire range of \(\rho\), with small SDs and accurate CPs.
The RIVW estimator has the same performance as the BRIVW estimator when \(\rho=0\), but its bias increases in magnitude as \(\rho\) departs from zero, leading to substantial undercoverage.
The IVW and MR-Egger estimators exhibit downward bias at \(\rho=0\) because of the weak IV problem and winner's curse.
As \(\rho\) varies, negative values of $\rho$ reinforce this bias, whereas positive values partially offset it and may even reverse its direction. 
Thus, the improved performance of IVW and MR-Egger estimators under positive \(\rho\) reflects bias cancellation rather than robustness.
With increasing levels of balanced pleiotropy (i.e., \(w_1=0.5\) and \(1\)), the SDs of all estimators increase, while the overall performance patterns remain essentially unchanged.
Although MR-APSS performs comparably to the BRIVW estimator under the normal-mixture model, its performance deteriorates when the mixture distribution is misspecified (see Web Appendix~D.5).
Additional sensitivity analyses with different causal effect sizes and unequal GWAS sample sizes yield similar overall conclusions (see Web Appendices~D.3-D.4).

\section{Real Data Applications}

We conduct three real data analyses. 
The negative control outcome analysis assesses the empirical type I error of BRIVW for causal-effect testing. 
The same trait analysis serves as a benchmark for estimation accuracy, with the target effect equal to 1. 
Finally, we apply BRIVW to investigate the causal effects of BMI on cardiovascular outcomes.

\subsection{Data processing}
For each exposure--outcome pair, we first apply the quality-control procedures of Hu et al.~(2022) to both GWAS datasets (see Web Appendix~E.1).
The GWAS summary statistics are then harmonized to align effect alleles, and the sample-structure parameters \((c_1,c_2,c_{12})\) are estimated by LDSC.
For BRIVW, these estimates are used to calibrate the marginal standard errors and cross-trait correlation.
We select approximately independent SNPs using the revised sigma-based pruning procedure (Ma, Wang, and Wu 2023) to avoid selection bias, with pairwise LD \(R^2<0.001\) within 10,000 kb.
Among correlated SNPs, the variant with the smallest SNP--exposure standard error is retained.
The IV selection thresholds for all methods are the same as in the simulation studies.

\subsection{Negative control outcome analysis}
To assess the type I error control of BRIVW, we use the negative control outcomes proposed by Sanderson et al. (2021), for which no causal effects are expected. 
We consider five UK Biobank outcomes—four categories of natural hair color before graying (black, blonde, light brown, and dark brown) and skin tanning ability—each paired with 53 exposure traits, yielding 265 exposure-outcome pairs (see Web Table~S2).

\begin{figure}[t]
  \centering
  \includegraphics[width=\linewidth]{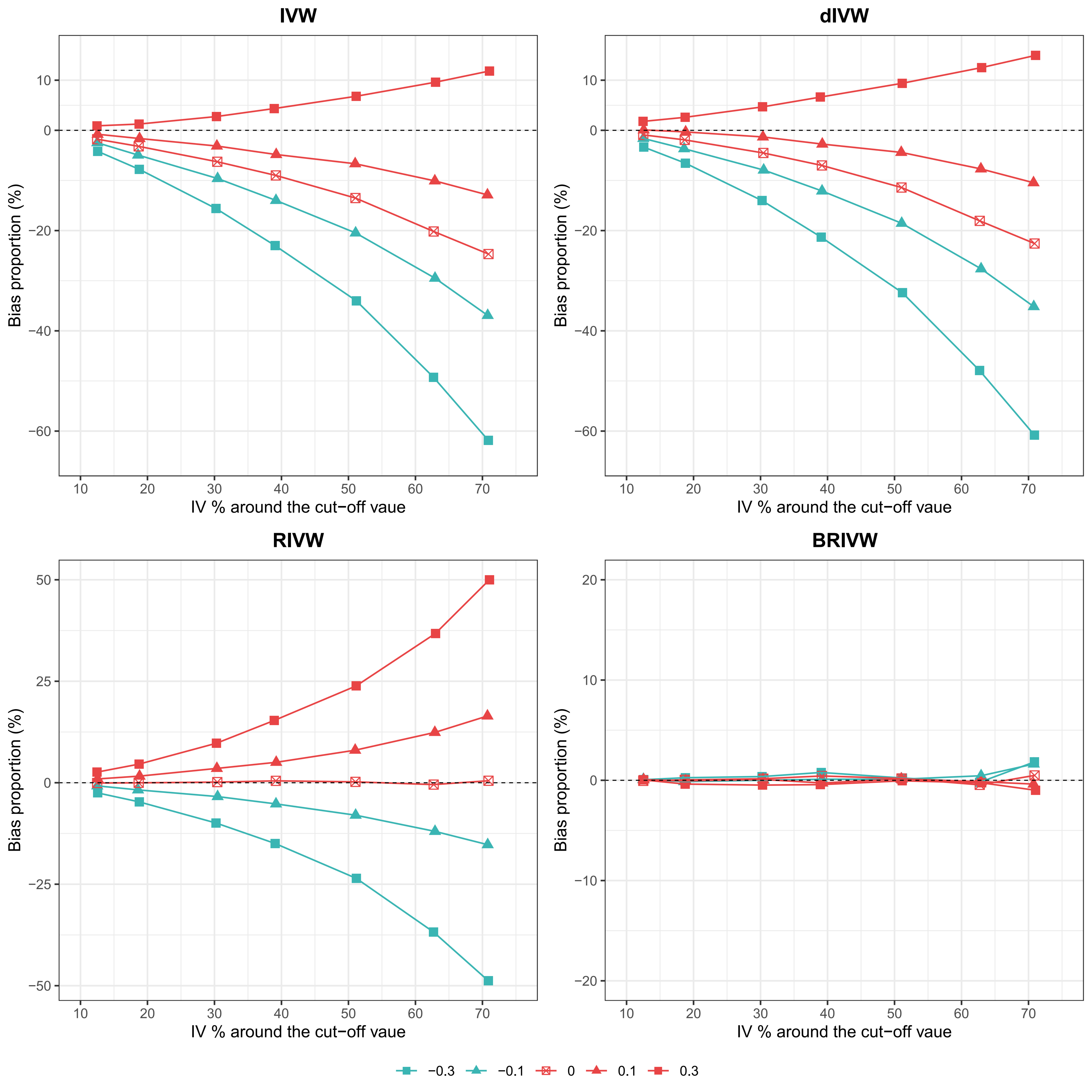}
  \caption{\small
  Negative control outcome analyses.
  QQ-plots of causal-effect \(P\)-values from BRIVW and competing MR methods
  for 265 exposure--outcome pairs. Red dots represent all pairs and blue
  triangles the subset with nonsignificant \(\rho\).
  }
  \label{fig2}
\end{figure}

Figure~2 shows the QQ-plots of causal-effect \(P\)-values from BRIVW and the competing methods.
BRIVW yields well-calibrated \(P\)-values across the 265 negative-control pairs.
In contrast, IVW and RIVW show substantial inflation, whereas MR-Egger and MR-APSS remain broadly well calibrated.
Of the 265 pairs, 172 (64.9\%) have \(\rho\) values significantly different from zero (see Web Figure~S10), and restricting the analysis to pairs with nonsignificant \(\rho\) markedly reduces the inflation, suggesting that sample structure contributes materially to the distortion.

\subsection{Same trait analyses}
We next follow the same trait design of Ma, Wang, and Wu~(2023) to benchmark estimation accuracy, using the same body mass index (BMI) and high-density lipoprotein cholesterol (HDL) GWAS datasets.
Each trait is analyzed in both directions, yielding four exposure--outcome configurations (see Web Table~S3).

\begin{figure}[t]
  \centering
  \includegraphics[width=\linewidth]{Fig3.pdf}
  \caption{\small Same trait results for different methods. The vertical reference line denotes the benchmark effect of 1.}
  \label{fig3}
\end{figure}

Figure~3 reports the estimated causal effects and two-sided 95\% CIs across methods.
Across all four configurations, BRIVW yields estimates close to 1 with relatively narrow 95\% CIs.
In contrast, IVW systematically attenuates the estimates toward zero, as expected in the presence of the weak IV problem and winner's curse.
RIVW reduces this attenuation, but its estimates exceed 1 in some configurations (e.g., BMI-2 $\rightarrow$ BMI-1), suggesting that residual sample structure may still affect the estimates, as indicated by the nonzero \(\rho\) (see Web Table~S4).
MR-Egger is less precise, while MR-APSS deviates from the benchmark in some configurations.

\subsection{BMI effects on cardiovascular outcomes}
Elevated BMI is an established modifiable risk factor for cardiovascular disease, supported by extensive epidemiologic and Mendelian randomization evidence (Prospective Studies Collaboration 2009; Holmes et al. 2014; Lincoff et al. 2023). 
Although the causal contribution of elevated BMI to cardiovascular risk is well established, accurately quantifying the magnitude of these effects remains important for clinical and public health interpretation.
We therefore apply the BRIVW estimator to estimate the causal effects of BMI
(Bycroft et al. 2018; ukb-b-19953) on systolic blood pressure
(SBP; Bycroft et al. 2018; ukb-b-20175), diastolic blood pressure
(DBP; Bycroft et al. 2018; ukb-b-7992), and coronary heart disease
(CHD; CARDIoGRAMplusC4D Consortium et al. 2013; ieu-a-9).
The BMI, SBP, and DBP GWASs are all based on UK Biobank, a widely used resource for MR analyses. 
Because large biobank cohorts are often reused across multiple GWASs, participant overlap can arise between exposure and outcome samples, making sample structure a non-negligible consideration in MR analyses.

Table~1 summarizes the analysis results.
For SBP and DBP, the BRIVW estimator indicates positive effects of BMI, with estimates of 0.106 (95\% CI: 0.059, 0.154) and 0.199 (95\% CI: 0.147, 0.251), respectively. 
The corresponding \(\rho\) values are 0.199 and 0.278, both highly significant, 
and the RIVW estimator yields substantially larger estimates than the BRIVW estimator, 
suggesting that sample structure can materially affect causal-effect estimation.
For CHD, the BRIVW estimator also indicates a positive effect of BMI (\(\hat\beta=0.395\), 95\% CI: 0.267, 0.524), 
while \(\rho\) is lower
(\(\rho=0.070\)) and the BRIVW and RIVW estimators are correspondingly closer.

\begin{table}[t]
\centering

\captionsetup{
  labelfont=bf,
  labelsep=colon,
  skip=2pt
}

\scriptsize
\setlength{\tabcolsep}{4pt}
\renewcommand{\arraystretch}{1.08}

\caption{Causal-effect estimates of BMI on cardiovascular outcomes.}
\label{tab:bmi_cv_results}

\begin{threeparttable}

\begin{tabular*}{\textwidth}{@{\extracolsep{\fill}}cccccc@{}}

\toprule

Trait pair &
\(\rho\) (\(P\)) &
Method &
IVs &
\(\hat\beta\) (SE) &
95\% CI \\

\midrule

\multirow{5}{*}{BMI \(\rightarrow\) SBP}
& \multirow{5}{*}{0.199 (\(1.93\times10^{-119}\))}
& BRIVW    & 767 & 0.106 (0.024) & (0.059, 0.154) \\

& & RIVW     & 863 & 0.186 (0.024) & (0.140, 0.232) \\

& & IVW      & 422 & 0.129 (0.008) & (0.113, 0.144) \\

& & MR-Egger & 422 & 0.117 (0.078) & (-0.036, 0.270) \\

& & MR-APSS  & 771 & 0.061 (0.057) & (-0.050, 0.172) \\

\midrule

\multirow{5}{*}{BMI \(\rightarrow\) DBP}
& \multirow{5}{*}{0.278 (\(4.41\times10^{-171}\))}
& BRIVW    & 767 & 0.199 (0.027) & (0.147, 0.251) \\

& & RIVW     & 863 & 0.296 (0.026) & (0.246, 0.346) \\

& & IVW      & 422 & 0.207 (0.008) & (0.191, 0.222) \\

& & MR-Egger & 422 & 0.174 (0.082) & (0.013, 0.336) \\

& & MR-APSS  & 771 & 0.151 (0.060) & (0.032, 0.269) \\

\midrule

\multirow{5}{*}{BMI \(\rightarrow\) CHD}
& \multirow{5}{*}{0.070 (0.015)}
& BRIVW    & 402 & 0.395 (0.066) & (0.267, 0.524) \\

& & RIVW     & 436 & 0.419 (0.065) & (0.292, 0.546) \\

& & IVW      & 252 & 0.469 (0.051) & (0.369, 0.569) \\

& & MR-Egger & 252 & 1.104 (0.259) & (0.596, 1.611) \\

& & MR-APSS  & 413 & 0.215 (0.051) & (0.114, 0.315) \\

\bottomrule

\end{tabular*}

\begin{tablenotes}[flushleft]
\footnotesize
\item Note:
\(P\) represents the \(P\)-value of the test for \(\rho=0\).
IVs denote the number of selected instrumental variables.
\end{tablenotes}

\end{threeparttable}

\end{table}

\section{Discussion}
In this article, we have developed the BRIVW estimator to address the bivariate post-selection problem that arises in two-sample MR when sample structure induces correlation between SNP--exposure and SNP--outcome association estimates.
BRIVW uses bivariate Rao--Blackwellization to remove post-selection bias while preserving the closed form of an IVW-type estimator.
Although MR-APSS also addresses the winner's curse and sample structure, it relies on specific model assumptions and computationally intensive variational inference.
Simulation studies and real data analyses also support the advantages of the BRIVW estimator.

Notably, the RIVW estimator is a special case of the BRIVW estimator, because the latter reduces to the former in the absence of sample structure. 
From an applied perspective, BRIVW is particularly suited to modern MR analyses involving highly polygenic exposures and large GWAS datasets with residual sample structure. 
Its bias correction permits less stringent IV-selection thresholds, allowing more variants to be retained when genome-wide significant instruments are scarce, as is often the case for highly polygenic traits such as psychiatric disorders (Sullivan et al. 2018). 
Moreover, its closed-form expression is computationally efficient and readily scalable to large-scale analyses involving many exposure-outcome pairs while accounting for sample overlap and shared population stratification that commonly arise in biobank and consortium GWASs.

Several limitations warrant further investigation. 
First, our method relies on LDSC-based estimates of the sample-structure parameters \((c_1,c_2,c_{12})\), and its performance may be affected when these estimates are unstable or LDSC assumptions are violated. 
Future work could develop alternative sample-structure estimators or procedures that are less sensitive to LDSC inputs. 
Second, as an IVW-type estimator, BRIVW relies on balanced pleiotropy and the InSIDE conditions, and remains vulnerable to directional or correlated pleiotropy. 
Integrating the BRIVW estimator with pleiotropy-robust methods, such as CARE (Xie et al. 2026), is therefore an important direction for future work.
Finally, while the current framework focuses on two-sample MR, extending it to multivariable MR represents a natural direction for future research.

\section*{Acknowledgements}
We gratefully acknowledge the participants and investigators of the UK Biobank and the GWAS consortia that made the summary statistics used in this study publicly available.

% ============================================================================
%                        SUPPORTING INFORMATION
% ============================================================================
\clearpage

% ---- restore the supplement's own page geometry (6.5in x 9.0in), so the
%      appendices are typeset exactly as in the standalone supplement ----
\setlength{\textwidth}{6.5in}
\setlength{\textheight}{9.0in}
\setlength{\oddsidemargin}{0in}
\setlength{\evensidemargin}{0in}
\setlength{\topmargin}{0in}
\setlength{\hsize}{\textwidth}
\setlength{\linewidth}{\textwidth}
\setlength{\columnwidth}{\textwidth}

\spacingset{1.2}

% ---- switch numbering over to the supplementary material ----
\appendix
\setcounter{figure}{0}
\setcounter{table}{0}
\setcounter{lemma}{0}
\setcounter{theorem}{0}
\setcounter{assumption}{0}

\makeatletter
\@addtoreset{lemma}{section}
\@addtoreset{theorem}{section}
\@addtoreset{assumption}{section}
\@addtoreset{equation}{section}
% reproduce the supplement's indentfirst behaviour from here on only
\let\@afterindentfalse\@afterindenttrue
\@afterindenttrue
\makeatother

\renewcommand{\thelemma}{\thesection.\arabic{lemma}}
\renewcommand{\thetheorem}{\thesection.\arabic{theorem}}
\renewcommand{\theassumption}{\thesection.\arabic{assumption}}
\renewcommand{\theequation}{\Alph{section}\arabic{equation}}
\renewcommand{\thetable}{\Alph{section}\arabic{table}}
\renewcommand{\thefigure}{\Alph{section}\arabic{figure}}

\titleformat{\section}
  {\normalsize\bfseries}
  {Appendix~\Alph{section}.}
  {1em}{}
\titleformat{\subsection}
  {\normalsize\bfseries}
  {\Alph{section}.\arabic{subsection}}
  {1em}{}

\begin{center}
{\bf\large Supporting Information for ``Simultaneously accounting for the
winner's curse and sample structure in Mendelian randomization:
bivariate rerandomized inverse variance weighted estimator''}\\[1.2em]
Xin Liu,$\rm{^{a,\dagger}}$ Youpeng Su,$\rm{^{a,\dagger}}$ Chuchu Zeng,$\rm{^{a}}$ Yilei Ma,$\rm{^{b}}$\\
Siqi Xu,$\rm{^{c}}$ Ping Yin,$\rm{^{a}}$ and Peng Wang$\rm{^{a,*}}$\\[0.8em]
$\rm{^{a}}$Department of Epidemiology and Biostatistics, School of Public Health,\\
Tongji Medical College, Huazhong University of Science and Technology,\\
Wuhan, China\\[0.35em]
$\rm{^{b}}$Tongji Hospital, Tongji Medical College,\\
Huazhong University of Science and Technology, Wuhan, China\\[0.35em]
$\rm{^{c}}$Clinical Research Centre, Guangzhou First People's Hospital,\\
Guangzhou Medical University, Guangzhou, China\\[0.8em]
$^\dagger$These authors contributed equally to this work.\\
$^*$Correspondence: pengwang\_stat@hust.edu.cn
\end{center}

\vspace{1em}

% from here on the table of contents lists entries again
\addtocontents{toc}{\protect\setcounter{tocdepth}{2}}
\setcounter{tocdepth}{2}
\tableofcontents
\clearpage

\section{Bivariate post-selection bias and Rao--Blackwellization}
\label{sec:app_bivariate_RB}

\subsection{Post-selection distribution and bias of the outcome-association estimate}
\label{sec:app_bias_propagation}

We first characterize the conditional distribution of the outcome-association estimate after randomized IV selection. This result shows how the post-selection bias induced in \(\hat{\gamma}_j\) propagates to \(\hat{\Gamma}_j\) through the sampling-error correlation.

\begin{lemma}
\label{lem:conditional_Gamma}
Suppose that Assumptions~1--2 hold. Then the conditional density of
\(\hat{\Gamma}_j/\sigma_{\hat{\Gamma}_j}\) given
\(j\in\mathcal S_\lambda\) is
\[
\begin{aligned}
&f_{\hat{\Gamma}_j/\sigma_{\hat{\Gamma}_j}\mid j\in\mathcal S_\lambda}(y) \\
&=
\frac{\phi\left(y-\dfrac{\Gamma_j}{\sigma_{\hat{\Gamma}_j}}\right)}
{\Pr(j\in\mathcal S_\lambda)}
\Bigg[
\Phi\Bigg(
\frac{-\lambda-\dfrac{\gamma_j}{\sigma_{\hat{\gamma}_j}}
+\dfrac{\rho\Gamma_j}{\sigma_{\hat{\Gamma}_j}}-\rho y}
{\sqrt{1-\rho^2+\eta^2}}
\Bigg)
+1-\Phi\Bigg(
\frac{\lambda-\dfrac{\gamma_j}{\sigma_{\hat{\gamma}_j}}
+\dfrac{\rho\Gamma_j}{\sigma_{\hat{\Gamma}_j}}-\rho y}
{\sqrt{1-\rho^2+\eta^2}}
\Bigg)
\Bigg],
\end{aligned}
\]
where
\[
\Pr(j\in\mathcal S_\lambda)
=
\Phi\Bigg(
\frac{-\lambda-\dfrac{\gamma_j}{\sigma_{\hat{\gamma}_j}}}
{\sqrt{1+\eta^2}}
\Bigg)
+
1-\Phi\Bigg(
\frac{\lambda-\dfrac{\gamma_j}{\sigma_{\hat{\gamma}_j}}}
{\sqrt{1+\eta^2}}
\Bigg).
\]
Moreover,
\[
\begin{aligned}
\mathbb E\left(\hat{\Gamma}_j\mid j\in\mathcal S_\lambda\right)
&=
\Gamma_j+
\frac{\rho\sigma_{\hat{\Gamma}_j}}
{\sqrt{1+\eta^2}\Pr(j\in\mathcal S_\lambda)}
\Bigg[
\phi\Bigg(
\frac{\lambda-\dfrac{\gamma_j}{\sigma_{\hat{\gamma}_j}}}
{\sqrt{1+\eta^2}}
\Bigg)
-
\phi\Bigg(
\frac{\lambda+\dfrac{\gamma_j}{\sigma_{\hat{\gamma}_j}}}
{\sqrt{1+\eta^2}}
\Bigg)
\Bigg].
\end{aligned}
\]
\end{lemma}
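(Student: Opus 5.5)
Work with standardized variables. Set $X_j=\hat{\gamma}_j/\sigma_{\hat{\gamma}_j}$, $Y_j=\hat{\Gamma}_j/\sigma_{\hat{\Gamma}_j}$, $\mu_x=\gamma_j/\sigma_{\hat{\gamma}_j}$ and $\mu_y=\Gamma_j/\sigma_{\hat{\Gamma}_j}$. By Assumption~2, $(X_j,Y_j)$ is bivariate normal with unit variances and correlation $\rho$. The pseudo association $Z_j\sim N(0,\eta^2)$ is independent of both. Define $W_j=X_j+Z_j$, so that $j\in\mathcal S_\lambda$ exactly when $|W_j|>\lambda$. Because $(W_j,Y_j)$ is jointly normal with $\operatorname{Var}(W_j)=1+\eta^2$ and $\operatorname{Cov}(W_j,Y_j)=\rho$, the problem reduces to a bivariate normal in which only one coordinate is truncated.

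\textbf{Step 1 (selection probability).} Since $W_j\sim N(\mu_x,1+\eta^2)$,
\[
\Pr(|W_j|>\lambda)=\Phi\!\left(\frac{-\lambda-\mu_x}{\sqrt{1+\eta^2}}\right)+1-\Phi\!\left(\frac{\lambda-\mu_x}{\sqrt{1+\eta^2}}\right),
\]
which is the stated $\Pr(j\in\mathcal S_\lambda)$.

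\textbf{Step 2 (conditional density).} Condition on $Y_j=y$. Then $X_j\mid Y_j=y\sim N(\mu_x+\rho(y-\mu_y),\,1-\rho^2)$, and adding the independent $Z_j$ gives
\[
W_j\mid Y_j=y\sim N\bigl(\mu_x+\rho(y-\mu_y),\,1-\rho^2+\eta^2\bigr).
\]
Bayes' rule gives
\[
f_{Y_j\mid |W_j|>\lambda}(y)=\frac{\phi(y-\mu_y)\,\Pr(|W_j|>\lambda\mid Y_j=y)}{\Pr(|W_j|>\lambda)}.
\]
The conditional tail probability, written with standardized arguments, is $\Phi\bigl((-\lambda-\mu_x+\rho\mu_y-\rho y)/s\bigr)+1-\Phi\bigl((\lambda-\mu_x+\rho\mu_y-\rho y)/s\bigr)$ with $s=\sqrt{1-\rho^2+\eta^2}$. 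This is exactly the displayed density.

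\textbf{Step 3 (conditional mean).} I would not integrate the density directly. Instead, regress $Y_j$ on $W_j$:
\[
Y_j=\mu_y+\frac{\rho}{1+\eta^2}(W_j-\mu_x)+\varepsilon_j,
\]
where $\varepsilon_j$ is independent of $W_j$ by joint normality. Hence
\[
\mathbb E[Y_j\mid |W_j|>\lambda]=\mu_y+\frac{\rho}{1+\eta^2}\,\mathbb E[W_j-\mu_x\mid |W_j|>\lambda].
\]
For $W\sim N(\mu,\tau^2)$, the standard identity $\mathbb E[(W-\mu)\mathbf 1\{W>a\}]=\tau\phi((a-\mu)/\tau)$ and its lower-tail analogue give
\[
\mathbb E[(W-\mu)\mathbf 1\{|W|>\lambda\}]=\tau\left[\phi\!\left(\frac{\lambda-\mu}{\tau}\right)-\phi\!\left(\frac{-\lambda-\mu}{\tau}\right)\right],\qquad \tau=\sqrt{1+\eta^2}.
\]
Dividing by the selection probability and using the symmetry $\phi(-\lambda-\mu_x)=\phi(\lambda+\mu_x)$, then multiplying through by $\sigma_{\hat{\Gamma}_j}$, gives the stated formula. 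The factor $\rho\sigma_{\hat{\Gamma}_j}\tau/(1+\eta^2)$ simplifies to $\rho\sigma_{\hat{\Gamma}_j}/\sqrt{1+\eta^2}$.

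As a consistency check, one could also integrate $y$ against the Step~2 density, differentiating the bracket by Stein's identity; this must reproduce the same result.

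The main point of care is Step~2: getting the conditional mean and variance of $W_j$ given $Y_j$ right, including the $\eta^2$ contributed by the independent randomization. The signs of the $\rho(\mu_y-y)$ shift also need checking, to confirm they match the arguments in the displayed density. Everything else is routine Gaussian algebra.
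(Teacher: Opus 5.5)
Your proof is correct. Step~2 is essentially the paper's computation in compressed form, while Step~3 takes a genuinely different route.

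For the density, the paper conditions on $Z_j=w$ and writes $(\hat\Gamma_j-\Gamma_j)/\sigma_{\hat\Gamma_j}=U$ and $(\hat\gamma_j-\gamma_j)/\sigma_{\hat\gamma_j}=\rho U+\sqrt{1-\rho^2}\,V$. It then integrates out $w$ via $\int\phi(x)\Phi(a+bx)\,dx=\Phi(a/\sqrt{1+b^2})$, obtaining the joint CDF $\Pr(\hat\Gamma_j/\sigma_{\hat\Gamma_j}\le y,\ j\in\mathcal S_\lambda)$ as an integral over $u$, and differentiates in $y$. That identity simply encodes $\sqrt{1-\rho^2}\,V+Z_j\sim N(0,1-\rho^2+\eta^2)$, which is exactly your conditional law of $W_j$ given $Y_j=y$.

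For the mean, the paper integrates $y$ against the density it has just derived. It uses the integration-by-parts identity $\int x\phi(x)\Phi\{(a-\rho x)/\sqrt{1-\rho^2+\eta^2}\}\,dx=-\rho\,\phi(a/\sqrt{1+\eta^2})/\sqrt{1+\eta^2}$, i.e., the Stein-type calculation you relegate to a consistency check. Your regression of $Y_j$ on $W_j$ with an independent residual never uses the density and reduces everything to a univariate truncated-normal mean.

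What your route buys is structure. It shows the outcome-side shift is exactly $\sigma_{\hat\Gamma_j}\rho/(1+\eta^2)$ times the selection bias of the randomized statistic $W_j$. This explains the linear dependence on $\rho$ remarked after Lemma~1, and it holds verbatim for any selection event that depends on the data only through $W_j$. It is also the same orthogonal decomposition the paper itself later exploits, writing $u_{\hat\Gamma_{j,\mathrm{RB}}}$ as a multiple of $u_{\hat\gamma_{j,\mathrm{RB}}}$ plus independent noise in the proof of Lemma~\ref{lem:BRIVW_variance_order}. The paper's route, in turn, keeps the lemma anchored to a single object, obtaining the mean as a direct corollary of the conditional density.
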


\begin{proof}
Consider the joint probability
\[
\begin{aligned}
&\Pr\left(
\frac{\hat{\Gamma}_j}{\sigma_{\hat{\Gamma}_j}}\leq y,\,
j\in\mathcal S_\lambda
\right)
=
\Pr\left(
\frac{\hat{\Gamma}_j}{\sigma_{\hat{\Gamma}_j}}\leq y,\,
\left|
\frac{\hat{\gamma}_j}{\sigma_{\hat{\gamma}_j}}+Z_j
\right|>\lambda
\right).
\end{aligned}
\]
Conditioning on \(Z_j=w\), we have
\[
\begin{aligned}
&\Pr\left(
\frac{\hat{\Gamma}_j}{\sigma_{\hat{\Gamma}_j}}\leq y,\,
j\in\mathcal S_\lambda
\right) \\
&\quad=
\int_{-\infty}^{\infty}
\Bigg[
\Pr\left(
\frac{\hat{\Gamma}_j}{\sigma_{\hat{\Gamma}_j}}\leq y,\,
\frac{\hat{\gamma}_j}{\sigma_{\hat{\gamma}_j}}>\lambda-w
\right)
+
\Pr\left(
\frac{\hat{\Gamma}_j}{\sigma_{\hat{\Gamma}_j}}\leq y,\,
\frac{\hat{\gamma}_j}{\sigma_{\hat{\gamma}_j}}<-\lambda-w
\right)
\Bigg]
\frac{1}{\eta}\phi\left(\frac{w}{\eta}\right)\,dw.
\end{aligned}
\]

Let \(U\) and \(V\) be independent standard normal random variables. Under Assumption~2, we can write
\[
\frac{\hat{\Gamma}_j-\Gamma_j}{\sigma_{\hat{\Gamma}_j}}=U,
\qquad
\frac{\hat{\gamma}_j-\gamma_j}{\sigma_{\hat{\gamma}_j}}
=
\rho U+\sqrt{1-\rho^2}\,V.
\]
It follows that
\[
\begin{aligned}
&\Pr\left(
\frac{\hat{\Gamma}_j}{\sigma_{\hat{\Gamma}_j}}\leq y,\,
j\in\mathcal S_\lambda
\right) \\
&=
\int_{-\infty}^{y-\Gamma_j/\sigma_{\hat{\Gamma}_j}}
\phi(u)
\int_{-\infty}^{\infty}
\Bigg[
\Phi\Bigg(
\frac{-\lambda-w-\dfrac{\gamma_j}{\sigma_{\hat{\gamma}_j}}-\rho u}
{\sqrt{1-\rho^2}}
\Bigg)
+
1-\Phi\Bigg(
\frac{\lambda-w-\dfrac{\gamma_j}{\sigma_{\hat{\gamma}_j}}-\rho u}
{\sqrt{1-\rho^2}}
\Bigg)
\Bigg]
\frac{1}{\eta}\phi\left(\frac{w}{\eta}\right)\,dw\,du.
\end{aligned}
\]
Using the identity
\[
\int_{-\infty}^{\infty}\phi(x)\Phi(a+bx)\,dx
=
\Phi\left(\frac{a}{\sqrt{1+b^2}}\right),
\]
we obtain
\[
\begin{aligned}
&\int_{-\infty}^{\infty}
\frac{1}{\eta}\phi\left(\frac{w}{\eta}\right)
\Phi\Bigg(
\frac{-\lambda-w-\dfrac{\gamma_j}{\sigma_{\hat{\gamma}_j}}-\rho u}
{\sqrt{1-\rho^2}}
\Bigg)\,dw =
\Phi\Bigg(
\frac{-\lambda-\dfrac{\gamma_j}{\sigma_{\hat{\gamma}_j}}-\rho u}
{\sqrt{1-\rho^2+\eta^2}}
\Bigg),
\end{aligned}
\]
and
\[
\begin{aligned}
&\int_{-\infty}^{\infty}
\frac{1}{\eta}\phi\left(\frac{w}{\eta}\right)
\Phi\Bigg(
\frac{\lambda-w-\dfrac{\gamma_j}{\sigma_{\hat{\gamma}_j}}-\rho u}
{\sqrt{1-\rho^2}}
\Bigg)\,dw =
\Phi\Bigg(
\frac{\lambda-\dfrac{\gamma_j}{\sigma_{\hat{\gamma}_j}}-\rho u}
{\sqrt{1-\rho^2+\eta^2}}
\Bigg).
\end{aligned}
\]
Therefore,
\[
\begin{aligned}
&\Pr\left(
\frac{\hat{\Gamma}_j}{\sigma_{\hat{\Gamma}_j}}\leq y,\,
j\in\mathcal S_\lambda
\right) =
\int_{-\infty}^{y-\Gamma_j/\sigma_{\hat{\Gamma}_j}}
\phi(u)
\Bigg[
\Phi\Bigg(
\frac{-\lambda-\dfrac{\gamma_j}{\sigma_{\hat{\gamma}_j}}-\rho u}
{\sqrt{1-\rho^2+\eta^2}}
\Bigg)
+
1-\Phi\Bigg(
\frac{\lambda-\dfrac{\gamma_j}{\sigma_{\hat{\gamma}_j}}-\rho u}
{\sqrt{1-\rho^2+\eta^2}}
\Bigg)
\Bigg]\,du.
\end{aligned}
\]

Next, we compute the derivatives with respect to \(y\), which gives
\[
\begin{aligned}
&\frac{d}{dy}
\Pr\left(
\frac{\hat{\Gamma}_j}{\sigma_{\hat{\Gamma}_j}}\leq y,\,
j\in\mathcal S_\lambda
\right) \\
&=
\phi\left(y-\frac{\Gamma_j}{\sigma_{\hat{\Gamma}_j}}\right)
\Bigg[
\Phi\Bigg(
\frac{-\lambda-\dfrac{\gamma_j}{\sigma_{\hat{\gamma}_j}}
+\dfrac{\rho\Gamma_j}{\sigma_{\hat{\Gamma}_j}}-\rho y}
{\sqrt{1-\rho^2+\eta^2}}
\Bigg)
+
1-\Phi\Bigg(
\frac{\lambda-\dfrac{\gamma_j}{\sigma_{\hat{\gamma}_j}}
+\dfrac{\rho\Gamma_j}{\sigma_{\hat{\Gamma}_j}}-\rho y}
{\sqrt{1-\rho^2+\eta^2}}
\Bigg)
\Bigg].
\end{aligned}
\]
Since
\[
\frac{\hat{\gamma}_j}{\sigma_{\hat{\gamma}_j}}+Z_j
\sim
N\left(
\frac{\gamma_j}{\sigma_{\hat{\gamma}_j}},
1+\eta^2
\right),
\]
the probability of selection is
\[
\Pr(j\in\mathcal S_\lambda)
=
\Phi\Bigg(
\frac{-\lambda-\dfrac{\gamma_j}{\sigma_{\hat{\gamma}_j}}}
{\sqrt{1+\eta^2}}
\Bigg)
+
1-\Phi\Bigg(
\frac{\lambda-\dfrac{\gamma_j}{\sigma_{\hat{\gamma}_j}}}
{\sqrt{1+\eta^2}}
\Bigg).
\]
Dividing the derivative above by \(\Pr(j\in\mathcal S_\lambda)\) gives the stated conditional density.

We next calculate the conditional expectation directly from this density:
\[
\mathbb E\left(
\left.
\frac{\hat{\Gamma}_j}{\sigma_{\hat{\Gamma}_j}}
\,\right|\,
j\in\mathcal S_\lambda
\right)
=
\int_{-\infty}^{\infty}
y\,
f_{\hat{\Gamma}_j/\sigma_{\hat{\Gamma}_j}\mid j\in\mathcal S_\lambda}(y)\,dy.
\]
Making the change of variable
\[
x=y-\frac{\Gamma_j}{\sigma_{\hat{\Gamma}_j}},
\]
we obtain
\[
\begin{aligned}
&\mathbb E\left(
\left.
\frac{\hat{\Gamma}_j}{\sigma_{\hat{\Gamma}_j}}
\,\right|\,
j\in\mathcal S_\lambda
\right) \\
&=
\frac{\Gamma_j}{\sigma_{\hat{\Gamma}_j}}
+
\frac{1}{\Pr(j\in\mathcal S_\lambda)}
\int_{-\infty}^{\infty}
x\phi(x)
\Bigg[
\Phi\Bigg(
\frac{-\lambda-\dfrac{\gamma_j}{\sigma_{\hat{\gamma}_j}}-\rho x}
{\sqrt{1-\rho^2+\eta^2}}
\Bigg)
+
1-\Phi\Bigg(
\frac{\lambda-\dfrac{\gamma_j}{\sigma_{\hat{\gamma}_j}}-\rho x}
{\sqrt{1-\rho^2+\eta^2}}
\Bigg)
\Bigg]\,dx.
\end{aligned}
\]
For any \(a\in\mathbb R\), integration by parts gives
\[
\int_{-\infty}^{\infty}
x\phi(x)
\Phi\Bigg(
\frac{a-\rho x}{\sqrt{1-\rho^2+\eta^2}}
\Bigg)\,dx
=
-\frac{\rho}{\sqrt{1+\eta^2}}
\phi\left(
\frac{a}{\sqrt{1+\eta^2}}
\right).
\]
Since \(\int_{-\infty}^{\infty}x\phi(x)\,dx=0\), it follows that
\[
\begin{aligned}
&\int_{-\infty}^{\infty}
x\phi(x)
\Bigg[
\Phi\Bigg(
\frac{-\lambda-\dfrac{\gamma_j}{\sigma_{\hat{\gamma}_j}}-\rho x}
{\sqrt{1-\rho^2+\eta^2}}
\Bigg)
+
1-\Phi\Bigg(
\frac{\lambda-\dfrac{\gamma_j}{\sigma_{\hat{\gamma}_j}}-\rho x}
{\sqrt{1-\rho^2+\eta^2}}
\Bigg)
\Bigg]\,dx \\
&=
\frac{\rho}{\sqrt{1+\eta^2}}
\Bigg[
\phi\Bigg(
\frac{\lambda-\dfrac{\gamma_j}{\sigma_{\hat{\gamma}_j}}}
{\sqrt{1+\eta^2}}
\Bigg)
-
\phi\Bigg(
\frac{\lambda+\dfrac{\gamma_j}{\sigma_{\hat{\gamma}_j}}}
{\sqrt{1+\eta^2}}
\Bigg)
\Bigg].
\end{aligned}
\]
Consequently,
\[
\begin{aligned}
\mathbb E\left(\hat{\Gamma}_j\mid j\in\mathcal S_\lambda\right)
&=
\Gamma_j+
\frac{\rho\sigma_{\hat{\Gamma}_j}}
{\sqrt{1+\eta^2}\Pr(j\in\mathcal S_\lambda)}
\Bigg[
\phi\Bigg(
\frac{\lambda-\dfrac{\gamma_j}{\sigma_{\hat{\gamma}_j}}}
{\sqrt{1+\eta^2}}
\Bigg)
-
\phi\Bigg(
\frac{\lambda+\dfrac{\gamma_j}{\sigma_{\hat{\gamma}_j}}}
{\sqrt{1+\eta^2}}
\Bigg)
\Bigg].
\end{aligned}
\]
This completes the proof.
\end{proof}

Lemma~\ref{lem:conditional_Gamma} shows that exposure-based IV
selection generally induces post-selection bias in
\(\hat{\Gamma}_j\) when \(\rho\neq0\).
This bias propagation motivates a joint Rao--Blackwell correction of the exposure- and outcome-association estimates, which we develop in the next subsection.

\subsection{Rao--Blackwellization of the outcome-association estimate}
\label{sec:app_RB_Gamma}

For the outcome-side Rao--Blackwell correction, consider the initial estimator
\[
\hat\Gamma_{j,\mathrm{ini}}
=
\hat\Gamma_j
-
\frac{\rho\sigma_{\hat\Gamma_j}}{\eta^2}Z_j.
\]
Since \(Z_j\) is independent of the GWAS summary statistics and has
mean zero,
\[
\mathbb E\left(
\hat\Gamma_{j,\mathrm{ini}}
\right)
=
\Gamma_j.
\]
Moreover,
\[
\begin{aligned}
\operatorname{Cov}\left(
\hat\Gamma_{j,\mathrm{ini}},
\frac{\hat\gamma_j}{\sigma_{\hat\gamma_j}}+Z_j
\right)
=
\operatorname{Cov}\left(
\hat\Gamma_j,
\frac{\hat\gamma_j}{\sigma_{\hat\gamma_j}}
\right)
-
\frac{\rho\sigma_{\hat\Gamma_j}}{\eta^2}
\operatorname{Var}(Z_j)
=
\rho\sigma_{\hat\Gamma_j}
-
\rho\sigma_{\hat\Gamma_j}
=
0.
\end{aligned}
\]
Because these two quantities are jointly bivariate normal,
\(\hat\Gamma_{j,\mathrm{ini}}\) is independent of the randomized
selection statistic and hence of the event
\(j\in\mathcal S_\lambda\). Thus,
\[
\mathbb E\left(
\hat\Gamma_{j,\mathrm{ini}}
\mid
j\in\mathcal S_\lambda
\right)
=
\Gamma_j.
\]

We then Rao--Blackwellize the initial estimator with respect to the
observed GWAS summary statistics:
\[
\hat\Gamma_{j,\mathrm{RB}}
=
\mathbb E\left(
\hat\Gamma_{j,\mathrm{ini}}
\mid
\hat\Gamma_j,\hat\gamma_j,
j\in\mathcal S_\lambda
\right).
\]
The following lemma gives its explicit form and conditional
unbiasedness.

\begin{lemma}
\label{lem:RB_Gamma}
Suppose that Assumptions~1--2 hold. Then
\[
\hat\Gamma_{j,\mathrm{RB}}
=
\hat\Gamma_j
-
\frac{\rho\sigma_{\hat\Gamma_j}}{\eta}
\frac{
\phi(A_{j,+})-\phi(A_{j,-})
}{
1-\Phi(A_{j,+})+\Phi(A_{j,-})
},
\]
where
\[
A_{j,\pm}
=
-\frac{\hat\gamma_j}
{\sigma_{\hat\gamma_j}\eta}
\pm
\frac{\lambda}{\eta}.
\]
Moreover,
\[
\mathbb E\left(
\hat\Gamma_{j,\mathrm{RB}}
\mid
j\in\mathcal S_\lambda
\right)
=
\Gamma_j.
\]
\end{lemma}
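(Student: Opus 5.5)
Since $\hat\Gamma_j$ and $\hat\gamma_j$ are $\sigma$-measurable with respect to the conditioning information, we have
\[
\hat\Gamma_{j,\mathrm{RB}}=\hat\Gamma_j-\frac{\rho\sigma_{\hat\Gamma_j}}{\eta^2}\,\mathbb E\left(Z_j\mid \hat\Gamma_j,\hat\gamma_j,j\in\mathcal S_\lambda\right).
\]
So the closed form reduces to computing one conditional mean of $Z_j$. Because $Z_j\sim N(0,\eta^2)$ is independent of $(\hat\gamma_j,\hat\Gamma_j)$, conditioning on $(\hat\Gamma_j,\hat\gamma_j)$ leaves $Z_j$ distributed as $N(0,\eta^2)$. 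The event $j\in\mathcal S_\lambda$ then becomes
\[
\{Z_j>\lambda-\hat\gamma_j/\sigma_{\hat\gamma_j}\}\cup\{Z_j<-\lambda-\hat\gamma_j/\sigma_{\hat\gamma_j}\}.
\]
Writing $Z_j=\eta W$ with $W\sim N(0,1)$, this is $\{W>A_{j,+}\}\cup\{W<A_{j,-}\}$, with $A_{j,\pm}$ as defined. The selection event depends on the data only through $\hat\gamma_j$, so $\hat\Gamma_j$ plays no further role.

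Next I compute the mean of a standard normal truncated to the complement of $[A_{j,-},A_{j,+}]$. We have $\int_{A_{j,+}}^\infty w\phi(w)\,dw=\phi(A_{j,+})$ and $\int_{-\infty}^{A_{j,-}}w\phi(w)\,dw=-\phi(A_{j,-})$, and the normalizer is $1-\Phi(A_{j,+})+\Phi(A_{j,-})$. Therefore
\[
\mathbb E(Z_j\mid\cdot)=\eta\,\frac{\phi(A_{j,+})-\phi(A_{j,-})}{1-\Phi(A_{j,+})+\Phi(A_{j,-})}.
\]
Substituting gives the stated formula, with prefactor $\rho\sigma_{\hat\Gamma_j}/\eta$. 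This is the same computation that produces $\hat\gamma_{j,\mathrm{RB}}$ in RIVW, with $\sigma_{\hat\gamma_j}$ replaced by $\rho\sigma_{\hat\Gamma_j}$.

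For conditional unbiasedness, I apply the tower property conditionally on $j\in\mathcal S_\lambda$:
\[
\mathbb E(\hat\Gamma_{j,\mathrm{RB}}\mid j\in\mathcal S_\lambda)=\mathbb E\left[\mathbb E(\hat\Gamma_{j,\mathrm{ini}}\mid\hat\Gamma_j,\hat\gamma_j,j\in\mathcal S_\lambda)\mid j\in\mathcal S_\lambda\right]=\mathbb E(\hat\Gamma_{j,\mathrm{ini}}\mid j\in\mathcal S_\lambda).
\]
The argument just before the lemma shows $\hat\Gamma_{j,\mathrm{ini}}$ has zero covariance with $\hat\gamma_j/\sigma_{\hat\gamma_j}+Z_j$. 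Since the two are jointly normal, they are independent, so the selection event (a function of that statistic) is independent of $\hat\Gamma_{j,\mathrm{ini}}$. Hence the last expectation equals $\mathbb E(\hat\Gamma_{j,\mathrm{ini}})=\Gamma_j$.

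The main point needing care is rigor in the conditioning. We condition on an event of positive probability, $\Pr(j\in\mathcal S_\lambda)>0$, as given in Lemma~A.1. The conditional law of $Z_j$ given $(\hat\Gamma_j,\hat\gamma_j)$ and the event is its prior law restricted to a $\hat\gamma_j$-dependent set. I would verify this by writing the conditional expectation as a ratio of integrals against the density $\eta^{-1}\phi(w/\eta)$, using independence of $Z_j$ from the summary statistics. As a consistency check, one could also recompute $\mathbb E(\hat\Gamma_{j,\mathrm{RB}}\mid j\in\mathcal S_\lambda)$ directly against Lemma~A.1's bias term, but the tower argument suffices.
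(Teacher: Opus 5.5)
Your proposal is correct and follows essentially the same route as the paper. You pull $\hat\Gamma_j$ out, use independence of $Z_j$ from $(\hat\gamma_j,\hat\Gamma_j)$ to reduce to $\mathbb E(Z_j\mid\hat\gamma_j,\,j\in\mathcal S_\lambda)$, and obtain unbiasedness from the tower property together with the independence of $\hat\Gamma_{j,\mathrm{ini}}$ from the selection event. The only difference is that you compute the truncated-normal mean of $Z_j$ explicitly, whereas the paper cites Lemma~S.8 of Ma et al.~(2023) for it.
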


\begin{proof}
By definition,
\[
\begin{aligned}
\hat\Gamma_{j,\mathrm{RB}}
=
\hat\Gamma_j
-
\frac{\rho\sigma_{\hat\Gamma_j}}{\eta^2}
\mathbb E\left(
Z_j
\mid
\hat\Gamma_j,\hat\gamma_j,
j\in\mathcal S_\lambda
\right)
=
\hat\Gamma_j
-
\frac{\rho\sigma_{\hat\Gamma_j}}{\eta^2}
\mathbb E\left(
Z_j
\mid
\hat\gamma_j,
j\in\mathcal S_\lambda
\right),
\end{aligned}
\]
where the second equality follows because \(Z_j\) is independent of
the GWAS summary statistics and the selection event depends only on
\((\hat\gamma_j,Z_j)\).
By Lemma~S.8 of Ma et al.~(2023),
\[
\mathbb E\left(
Z_j
\mid
\hat\gamma_j,
j\in\mathcal S_\lambda
\right)
=
\eta
\frac{
\phi(A_{j,+})-\phi(A_{j,-})
}{
1-\Phi(A_{j,+})+\Phi(A_{j,-})
}.
\]
Substituting this expression gives the stated closed form.
Finally, by the law of iterated expectations and the independence of
\(\hat\Gamma_{j,\mathrm{ini}}\) from the selection event,
\[
\begin{aligned}
\mathbb E\left(
\hat\Gamma_{j,\mathrm{RB}}
\mid
j\in\mathcal S_\lambda
\right)
=
\mathbb E\left[
\mathbb E\left(
\hat\Gamma_{j,\mathrm{ini}}
\mid
\hat\Gamma_j,\hat\gamma_j,
j\in\mathcal S_\lambda
\right)
\middle|
j\in\mathcal S_\lambda
\right]
=
\mathbb E\left(
\hat\Gamma_{j,\mathrm{ini}}
\mid
j\in\mathcal S_\lambda
\right)
=
\Gamma_j.
\end{aligned}
\]
This completes the proof.
\end{proof}

\subsection{Conditional variance and uniform bounds}
\label{sec:app_RB_Gamma_variance}

\begin{lemma}
\label{lem:RB_Gamma_variance}
Suppose that Assumptions~1--2 hold. Then
\[
\begin{aligned}
\sigma^{2}_{\hat{\Gamma}_{j,\mathrm{RB}}}
&=
\operatorname{Var}\left(
\hat{\Gamma}_{j,\mathrm{RB}}
\mid j\in\mathcal S_\lambda
\right) \\
&=
\sigma_{\hat{\Gamma}_j}^2
-
\frac{\rho^2\sigma_{\hat{\Gamma}_j}^2}{\eta^2}
\mathbb{E}\Bigg[
\frac{
A_{j,+}\phi(A_{j,+})-A_{j,-}\phi(A_{j,-})
}{
1-\Phi(A_{j,+})+\Phi(A_{j,-})
}
-
\Bigg(
\frac{
\phi(A_{j,+})-\phi(A_{j,-})
}{
1-\Phi(A_{j,+})+\Phi(A_{j,-})
}
\Bigg)^2
\,\Bigg|\,
j\in\mathcal S_\lambda
\Bigg].
\end{aligned}
\]
\end{lemma}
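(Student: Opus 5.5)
The plan is to derive the formula from the law of total variance applied to the initial estimator $\hat\Gamma_{j,\mathrm{ini}}$, conditioning on $(\hat\Gamma_j,\hat\gamma_j)$ within the selection event. By Lemma~\ref{lem:RB_Gamma}, $\hat\Gamma_{j,\mathrm{RB}}=\mathbb E(\hat\Gamma_{j,\mathrm{ini}}\mid \hat\Gamma_j,\hat\gamma_j,j\in\mathcal S_\lambda)$. Hence, conditional on selection,
\[
\operatorname{Var}(\hat\Gamma_{j,\mathrm{RB}}\mid j\in\mathcal S_\lambda)
=\operatorname{Var}(\hat\Gamma_{j,\mathrm{ini}}\mid j\in\mathcal S_\lambda)
-\mathbb E\left[\operatorname{Var}(\hat\Gamma_{j,\mathrm{ini}}\mid \hat\Gamma_j,\hat\gamma_j,j\in\mathcal S_\lambda)\,\middle|\,j\in\mathcal S_\lambda\right].
\]
The proof then reduces to computing these two terms.

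For the first term, recall from Lemma~\ref{lem:RB_Gamma} that $\hat\Gamma_{j,\mathrm{ini}}$ is jointly normal with, and uncorrelated with, the selection statistic $\hat\gamma_j/\sigma_{\hat\gamma_j}+Z_j$. It is therefore independent of the selection event, so its conditional variance equals its unconditional variance:
\[
\operatorname{Var}(\hat\Gamma_{j,\mathrm{ini}})=\sigma_{\hat\Gamma_j}^2+\frac{\rho^2\sigma_{\hat\Gamma_j}^2}{\eta^4}\eta^2=\sigma_{\hat\Gamma_j}^2+\frac{\rho^2\sigma_{\hat\Gamma_j}^2}{\eta^2}.
\]
For the second term, $\hat\Gamma_j$ is fixed given the conditioning, so the inner variance is $\rho^2\sigma_{\hat\Gamma_j}^2\eta^{-4}\operatorname{Var}(Z_j\mid\hat\Gamma_j,\hat\gamma_j,j\in\mathcal S_\lambda)$. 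Since $Z_j$ is independent of the summary statistics and selection depends only on $(\hat\gamma_j,Z_j)$, this equals $\operatorname{Var}(Z_j\mid \hat\gamma_j,j\in\mathcal S_\lambda)$, exactly as in the proof of Lemma~\ref{lem:RB_Gamma}.

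The key computation is this conditional variance. Given $\hat\gamma_j$, $Z_j\sim N(0,\eta^2)$ is restricted to the set $\{|\hat\gamma_j/\sigma_{\hat\gamma_j}+Z_j|>\lambda\}$. Writing $Z_j=\eta W$, this is a standard normal $W$ truncated to $\{W>A_{j,+}\}\cup\{W<A_{j,-}\}$. The standard two-sided truncated-normal moments give
\[
\mathbb E(W\mid\cdot)=\frac{\phi(A_{j,+})-\phi(A_{j,-})}{D_j},\qquad
\mathbb E(W^2\mid\cdot)=1+\frac{A_{j,+}\phi(A_{j,+})-A_{j,-}\phi(A_{j,-})}{D_j},
\]
where $D_j=1-\Phi(A_{j,+})+\Phi(A_{j,-})$. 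The second identity follows by integrating $w^2\phi(w)$ by parts over each tail; I would verify it directly, or quote the companion formula to Lemma~S.8 of Ma et al.~(2023), which underlies $\hat\sigma^2_{\hat\gamma_{j,\mathrm{RB}}}$. This yields
\[
\operatorname{Var}(Z_j\mid\hat\gamma_j,j\in\mathcal S_\lambda)=\eta^2\left[1+\frac{A_{j,+}\phi(A_{j,+})-A_{j,-}\phi(A_{j,-})}{D_j}-\left(\frac{\phi(A_{j,+})-\phi(A_{j,-})}{D_j}\right)^2\right].
\]

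Substituting this into the decomposition, the constant term $\rho^2\sigma_{\hat\Gamma_j}^2\eta^{-4}\cdot\eta^2=\rho^2\sigma_{\hat\Gamma_j}^2/\eta^2$ cancels exactly the added variance in $\operatorname{Var}(\hat\Gamma_{j,\mathrm{ini}})$. What remains is $\sigma_{\hat\Gamma_j}^2$ minus $\rho^2\sigma_{\hat\Gamma_j}^2/\eta^2$ times the conditional expectation of the bracketed expression, which is the stated formula. The only delicate points are the sign bookkeeping in the two-tail second moment and justifying that the conditioning reduces from $(\hat\Gamma_j,\hat\gamma_j)$ to $\hat\gamma_j$. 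I expect the truncated second moment to be the main computational step, but it is routine.
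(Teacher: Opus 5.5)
Your proposal is correct and matches the paper's proof step for step: the conditional variance decomposition of $\hat\Gamma_{j,\mathrm{ini}}$, independence from the selection event giving $\sigma_{\hat\Gamma_j}^2(1+\rho^2/\eta^2)$, reduction of the inner term to $\rho^2\sigma_{\hat\Gamma_j}^2\eta^{-4}\operatorname{Var}(Z_j\mid\hat\gamma_j,j\in\mathcal S_\lambda)$, and the two-sided truncated-normal variance, after which the $\rho^2\sigma_{\hat\Gamma_j}^2/\eta^2$ terms cancel. The only difference is that you derive the truncated second moment yourself by integration by parts, whereas the paper quotes it from Lemma~S.8 of Ma et al.~(2023).
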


\begin{proof}
Recall from Section~\ref{sec:app_RB_Gamma} that
\[
\hat{\Gamma}_{j,\mathrm{ini}}
=
\hat{\Gamma}_j
-
\frac{\rho\sigma_{\hat{\Gamma}_j}}{\eta^2}Z_j,
\qquad
\hat{\Gamma}_{j,\mathrm{RB}}
=
\mathbb{E}\left(
\hat{\Gamma}_{j,\mathrm{ini}}
\mid
\hat{\Gamma}_j,\hat{\gamma}_j,\,
j\in\mathcal S_\lambda
\right).
\]
By the conditional variance decomposition,
\[
\begin{aligned}
\sigma^{2}_{\hat{\Gamma}_{j,\mathrm{RB}}}
=
\operatorname{Var}\left(
\hat{\Gamma}_{j,\mathrm{ini}}
\mid j\in\mathcal S_\lambda
\right) -
\mathbb{E}\left[
\operatorname{Var}\left(
\hat{\Gamma}_{j,\mathrm{ini}}
\mid
\hat{\Gamma}_j,\hat{\gamma}_j,\,
j\in\mathcal S_\lambda
\right)
\middle|
j\in\mathcal S_\lambda
\right].
\end{aligned}
\]
Since \(\hat{\Gamma}_{j,\mathrm{ini}}\) is independent of the selection event,
\[
\operatorname{Var}\left(
\hat{\Gamma}_{j,\mathrm{ini}}
\mid j\in\mathcal S_\lambda
\right)
=
\sigma_{\hat{\Gamma}_j}^2
\left(1+\frac{\rho^2}{\eta^2}\right).
\]
Moreover,
\[
\operatorname{Var}\left(
\hat{\Gamma}_{j,\mathrm{ini}}
\mid
\hat{\Gamma}_j,\hat{\gamma}_j,\,
j\in\mathcal S_\lambda
\right)
=
\frac{\rho^2\sigma_{\hat{\Gamma}_j}^2}{\eta^4}
\operatorname{Var}\left(
Z_j\mid\hat{\gamma}_j,\,
j\in\mathcal S_\lambda
\right).
\]
By Lemma~S.8 of Ma et al.~(2023),
\[
\begin{aligned}
\operatorname{Var}\left(
Z_j\mid\hat{\gamma}_j,\,
j\in\mathcal S_\lambda
\right)
=
\eta^2
\Bigg[
1
&+
\frac{
A_{j,+}\phi(A_{j,+})-A_{j,-}\phi(A_{j,-})
}{
1-\Phi(A_{j,+})+\Phi(A_{j,-})
} -
\Bigg(
\frac{
\phi(A_{j,+})-\phi(A_{j,-})
}{
1-\Phi(A_{j,+})+\Phi(A_{j,-})
}
\Bigg)^2
\Bigg].
\end{aligned}
\]
Substitution gives the stated result.
\end{proof}

To obtain uniform bounds for the outcome-side Rao--Blackwellized
variance, we first establish the corresponding bound for the
exposure-side estimator. Ma et al.~(2023, Lemma~S.11) proved such a
bound under a diverging selection threshold. We show below that the
same conclusion also holds for a fixed threshold, allowing the
subsequent asymptotic analysis to proceed without requiring
\(\lambda\to\infty\).

\begin{lemma}
\label{lem:RB_gamma_variance_bounds}
Suppose that Assumptions~1--2 hold and that \(\lambda\geq0\) and
\(\eta>0\) are fixed. Then there exist constants
\(0<c_1<c_2<\infty\), such that
\[
c_1\sigma_{\hat{\gamma}_j}^2
\leq
\sigma_{\hat{\gamma}_{j,\mathrm{RB}}}^2
\leq
c_2\sigma_{\hat{\gamma}_j}^2
\]
uniformly over \(1\leq j\leq p\). In other words,
$
\left\{
{
\sigma_{\hat{\gamma}_{j,\mathrm{RB}}}^2
}/{
\sigma_{\hat{\gamma}_j}^2
}
:
1\leq j\leq p
\right\}
$
is uniformly bounded and bounded away from zero.
\end{lemma}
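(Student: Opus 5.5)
The plan is to reduce everything to a single standardized location parameter $\mu_j=\gamma_j/\sigma_{\hat\gamma_j}$. Then I would show that the ratio $\sigma^2_{\hat\gamma_{j,\mathrm{RB}}}/\sigma^2_{\hat\gamma_j}$ equals $h(\mu_j)$ for one fixed function $h:\mathbb R\to(0,\infty)$ that depends only on $(\lambda,\eta)$. If $h$ is continuous, strictly positive, and has finite positive limits as $\mu\to\pm\infty$, it is bounded and bounded away from zero on $\mathbb R$. Since the bound does not involve $j$, uniformity over $1\le j\le p$ follows automatically.

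\textbf{Step 1 (standardization).} Write $X=\hat\gamma_j/\sigma_{\hat\gamma_j}\sim N(\mu,1)$, independent of $Z\sim N(0,\eta^2)$, and $W=X+Z$. Selection is the event $|W|>\lambda$. The RB estimator satisfies $\hat\gamma_{j,\mathrm{RB}}/\sigma_{\hat\gamma_j}=X-g(X)$, where
\[
g(x)=\eta^{-1}\frac{\phi(a_+)-\phi(a_-)}{1-\Phi(a_+)+\Phi(a_-)},\qquad a_\pm=-x/\eta\pm\lambda/\eta .
\]
This $g$ does not depend on $\mu$. Integrating $Z$ out, the conditional density of $X$ given selection is
\[
\frac{\phi(x-\mu)\,q(x)}{\Pr(\text{sel})},\qquad q(x)=\Pr(|x+Z|>\lambda).
\]
The key elementary fact is that $q$ is minimized at $x=0$, so
\[
q_0:=2\{1-\Phi(\lambda/\eta)\}\le q(x)\le 1 .
\]
Hence this conditional density is sandwiched between constant multiples of $\phi(x-\mu)$, with constants depending only on $(\lambda,\eta)$. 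Consequently $h(\mu)=\operatorname{Var}(X-g(X)\mid \text{sel})$ is a function of $\mu$ alone.

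\textbf{Step 2 (upper bound and positivity).} For the upper bound, use the conditional variance decomposition from the proof of Lemma~\ref{lem:RB_Gamma_variance}, with $\rho$ replaced by the exposure-side initial estimator $X-Z/\eta^2$. Because this initial estimator is independent of the selection event, its conditional variance is $1+1/\eta^2$, so $h(\mu)\le 1+1/\eta^{2}$ for every $\mu$.

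For a pointwise lower bound, note that $X-g(X)$ is a non-constant function of $X$: $g(x)\to0$ as $|x|\to\infty$, so $x-g(x)$ is unbounded. Since $X$ given selection has a density positive on all of $\mathbb R$, it follows that $h(\mu)>0$ for each $\mu$.

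For continuity of $h$, use dominated convergence. The Mills-ratio bound $\phi(a)/\{1-\Phi(a)\}\le |a|+1$ gives $|g(x)|\le C(1+|x|)$. Together with the density sandwich from Step 1, this dominates the first and second moments of $X-g(X)$ locally uniformly in $\mu$.

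\textbf{Step 3 (behaviour at infinity — the main obstacle).} This is the step that replaces the $\lambda\to\infty$ argument of Ma et al.~(2023, Lemma~S.11). I would show $h(\mu)\to1$ as $|\mu|\to\infty$; by symmetry it suffices to treat $\mu\to+\infty$.

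Write $X=\mu+\xi$. Then $\Pr(\text{sel})\to1$ and $q(\mu+\xi)\to1$ pointwise in $\xi$. Also $g(\mu+\xi)\to0$ for each fixed $\xi$, since $a_+\to-\infty$ and $a_-\to-\infty$. Combined with the linear domination $|g|\le C(1+|\mu+\xi|)$, this is not yet enough, because the dominating bound itself grows with $\mu$.

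I would therefore split at $|\xi|\le\mu/2$. On this region $x\ge\mu/2$, so both $a_\pm\le(-\mu/2+\lambda)/\eta$ and $g(x)$ is bounded by a term like $\phi$ evaluated at that point, which is exponentially small. On the complement, the Gaussian tail $\phi(\xi)$ with $|\xi|>\mu/2$ kills the polynomial factor $(1+|\mu+\xi|)^2$.

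Together these give $E[g(X)^k\mid\text{sel}]\to0$ for $k=1,2$. The density sandwich also gives $\operatorname{Var}(X\mid\text{sel})\to1$. Hence $h(\mu)\to1$.

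\textbf{Conclusion.} $h$ is continuous and positive on $\mathbb R$ with positive limits at $\pm\infty$. Taking
\[
c_1=\inf_{\mu}h(\mu)>0,\qquad c_2=\sup_{\mu}h(\mu)\le 1+\eta^{-2},
\]
and enlarging $c_2$ slightly if needed so that $c_1<c_2$, yields the lemma uniformly in $j$.
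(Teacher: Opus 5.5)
Your proof is correct. Your upper bound, obtained by Rao--Blackwellizing the selection-independent initial estimator $X-Z/\eta^{2}$ to get $h(\mu)\le 1+\eta^{-2}$, is exactly the paper's.

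The lower bound is where you diverge. You argue through continuity, pointwise positivity and the limit $h(\mu)\to1$ at $\pm\infty$; the paper instead uses a Stein-type identity. Write $g=q'/q$ and use the selected density $\phi(x-\mu)q(x)/\Pr(\text{sel})$. The relation $\{(x-\mu)^2q(x)-(x-\mu)q'(x)\}\phi(x-\mu)=q(x)\phi(x-\mu)-\frac{d}{dx}\{(x-\mu)q(x)\phi(x-\mu)\}$ integrates to $\operatorname{Cov}(X-g(X),X\mid\text{sel})=1$. Cauchy--Schwarz then gives $h(\mu)\ge 1/\operatorname{Var}(X\mid\text{sel})$. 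The same bound $q\ge q_0$ you use yields $\operatorname{Var}(X\mid\text{sel})\le E\{(X-\mu)^2q(X)\}/\Pr(\text{sel})\le 1/q_0$.

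What each route buys:
\begin{itemize}
\item The paper's route gives the explicit constant $c_1=2\Phi(-\lambda/\eta)$ in a few lines, with no asymptotic analysis in $\mu$. It also shows explicitly how the bound deteriorates as $\lambda/\eta$ grows.
\item Your compactness argument gives only a non-explicit $c_1=\inf_\mu h(\mu)$ and needs the tail analysis of Step~3. However, it does not rely on spotting an exact covariance identity. It also delivers the extra fact that $h(\mu)\to1$ for strong instruments.
\end{itemize}

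One simplification to both your continuity step and Step~3: $g$ is bounded. Since $|q'|\le 1/(\eta\sqrt{2\pi})$ and $q\ge q_0$, you have $|g|\le 1/(\eta\sqrt{2\pi}\,q_0)$. A single application of bounded convergence, using the density sandwich from Step~1, therefore replaces both the Mills-ratio linear domination and the split at $|\xi|\le\mu/2$.
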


\begin{proof}
Let
\[
X_j
=
\frac{\hat{\gamma}_j}{\sigma_{\hat{\gamma}_j}},
\qquad
\mu_j
=
\frac{\gamma_j}{\sigma_{\hat{\gamma}_j}},
\]
so that \(X_j\sim N(\mu_j,1)\). Given \(X_j=x\), define the
conditional selection probability
\[
q_\lambda(x)
=
\Pr\left(
|x+Z_j|>\lambda
\right)
=
1-\Phi\left(
\frac{\lambda-x}{\eta}
\right)
+
\Phi\left(
\frac{-\lambda-x}{\eta}
\right),
\]
with derivative
\[
q_\lambda'(x)
=
\frac{1}{\eta}
\left\{
\phi\left(
\frac{\lambda-x}{\eta}
\right)
-
\phi\left(
\frac{-\lambda-x}{\eta}
\right)
\right\}.
\]
The Rao--Blackwell correction can therefore be written as
\[
\frac{\hat{\gamma}_{j,\mathrm{RB}}}
{\sigma_{\hat{\gamma}_j}}
=
X_j-\frac{q_\lambda'(X_j)}{q_\lambda(X_j)}.
\]
By Lemma~S.5 of Ma et al.~(2023), the conditional density of \(X_j\)
given \(j\in\mathcal S_\lambda\) is
\[
f_{X_j\mid j\in\mathcal S_\lambda}(x)
=
\frac{
\phi(x-\mu_j)q_\lambda(x)
}{
\Pr(j\in\mathcal S_\lambda)
}.
\]
Since
\[
\mathbb E\left(
\left.
\frac{\hat{\gamma}_{j,\mathrm{RB}}}
{\sigma_{\hat{\gamma}_j}}
\,\right|\,
j\in\mathcal S_\lambda
\right)
=
\mu_j,
\]
we have
\[
\begin{aligned}
\operatorname{Cov}\left(
\left.
\frac{\hat{\gamma}_{j,\mathrm{RB}}}
{\sigma_{\hat{\gamma}_j}},
X_j
\,\right|\,
j\in\mathcal S_\lambda
\right)
=
\frac{1}{\Pr(j\in\mathcal S_\lambda)}
\int_{-\infty}^{\infty}
\left\{
(x-\mu_j)^2q_\lambda(x)
-
(x-\mu_j)q_\lambda'(x)
\right\}
\phi(x-\mu_j)\,dx.
\end{aligned}
\]
Noting that
\[
\begin{aligned}
\left\{
(x-\mu_j)^2q_\lambda(x)
-
(x-\mu_j)q_\lambda'(x)
\right\}
\phi(x-\mu_j)
=
q_\lambda(x)\phi(x-\mu_j)
-
\frac{d}{dx}
\left\{
(x-\mu_j)q_\lambda(x)\phi(x-\mu_j)
\right\},
\end{aligned}
\]
integration over \(\mathbb R\), together with the vanishing boundary
term, gives
\[
\begin{aligned}
\int_{-\infty}^{\infty}
\left\{
(x-\mu_j)^2q_\lambda(x)
-
(x-\mu_j)q_\lambda'(x)
\right\}
\phi(x-\mu_j)\,dx
=
\int_{-\infty}^{\infty}
q_\lambda(x)\phi(x-\mu_j)\,dx
=
\Pr(j\in\mathcal S_\lambda).
\end{aligned}
\]
Hence,
\[
\operatorname{Cov}\left(
\left.
\frac{\hat{\gamma}_{j,\mathrm{RB}}}
{\sigma_{\hat{\gamma}_j}},
X_j
\,\right|\,
j\in\mathcal S_\lambda
\right)
=
1.
\]
By the Cauchy--Schwarz inequality,
\[
\operatorname{Var}\left(
\left.
\frac{\hat{\gamma}_{j,\mathrm{RB}}}
{\sigma_{\hat{\gamma}_j}}
\,\right|\,
j\in\mathcal S_\lambda
\right)
\geq
\frac{1}{
\operatorname{Var}\left(
X_j\mid j\in\mathcal S_\lambda
\right)
}.
\]

Since \(q_\lambda(x)\) is symmetric and minimized at \(x=0\),
\[
q_\lambda(x)
\geq
q_\lambda(0)
=
2\Phi\left(-\frac{\lambda}{\eta}\right)>0.
\]
Therefore,
\[
\Pr(j\in\mathcal S_\lambda)
=
\mathbb E\{q_\lambda(X_j)\}
\geq
2\Phi\left(-\frac{\lambda}{\eta}\right),
\]
and
\[
\begin{aligned}
\operatorname{Var}\left(
X_j\mid j\in\mathcal S_\lambda
\right)
\leq
\mathbb E\left[
(X_j-\mu_j)^2
\mid
j\in\mathcal S_\lambda
\right]
=
\frac{
\mathbb E\left[
(X_j-\mu_j)^2q_\lambda(X_j)
\right]
}{
\Pr(j\in\mathcal S_\lambda)
}
\leq
\left\{
2\Phi\left(-\frac{\lambda}{\eta}\right)
\right\}^{-1}.
\end{aligned}
\]
It follows that
\[
\sigma_{\hat{\gamma}_{j,\mathrm{RB}}}^2
\geq
2\Phi\left(-\frac{\lambda}{\eta}\right)
\sigma_{\hat{\gamma}_j}^2
\]
uniformly over \(j\).

For the upper bound, recall the initial exposure estimator
\[
\hat{\gamma}_{j,\mathrm{ini}}
=
\hat{\gamma}_j
-
\frac{\sigma_{\hat{\gamma}_j}}{\eta^2}Z_j,
\]
which is independent of the selection event and satisfies
\[
\operatorname{Var}\left(
\hat{\gamma}_{j,\mathrm{ini}}
\right)
=
\sigma_{\hat{\gamma}_j}^2
\left(1+\eta^{-2}\right).
\]
Since \(\hat{\gamma}_{j,\mathrm{RB}}\) is its Rao--Blackwellization,
the conditional variance decomposition yields
\[
\sigma_{\hat{\gamma}_{j,\mathrm{RB}}}^2
\leq
\operatorname{Var}\left(
\hat{\gamma}_{j,\mathrm{ini}}
\mid
j\in\mathcal S_\lambda
\right)
=
\sigma_{\hat{\gamma}_j}^2
\left(1+\eta^{-2}\right).
\]
Thus, one may take
\[
c_1
=
2\Phi\left(-\frac{\lambda}{\eta}\right),
\qquad
c_2
=
1+\eta^{-2},
\]
which proves the result.
\end{proof}

For completeness, when \(\lambda\to\infty\),
Lemma~S.11 of Ma et al.~(2023) establishes the corresponding uniform
variance bounds. Thus, the nondegeneracy of the exposure-side
Rao--Blackwellized variance does not intrinsically require a diverging
selection threshold. This exposure-side bound, together with the
variance identity above, immediately yields the corresponding result
for the outcome-side Rao--Blackwellized estimator.

\begin{lemma}
\label{lem:RB_Gamma_variance_bounds}
Suppose that Assumption~1--2 hold. Then there exist constants
\(0<c_1^\ast<c_2^\ast<\infty\) such that
\[
c_1^\ast\sigma_{\hat{\Gamma}_j}^2
\leq
\sigma^{2}_{\hat{\Gamma}_{j,\mathrm{RB}}}
\leq
c_2^\ast\sigma_{\hat{\Gamma}_j}^2
\]
uniformly over \(1\leq j\leq p\). In other words,
$
\left\{
{
\sigma_{\hat{\Gamma}_{j,\mathrm{RB}}}^2
}/{
\sigma_{\hat{\Gamma}_j}^2
}
:
1\leq j\leq p
\right\}
$
is uniformly bounded and bounded away from zero.
\end{lemma}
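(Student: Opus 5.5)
The plan is to express the outcome-side variance as a convex combination of $\sigma_{\hat{\Gamma}_j}^2$ and the exposure-side Rao--Blackwellized variance. Both pieces are then controlled by Lemma~\ref{lem:RB_gamma_variance_bounds}, which already covers fixed $\lambda$.

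Write $R_j=\{\phi(A_{j,+})-\phi(A_{j,-})\}/\{1-\Phi(A_{j,+})+\Phi(A_{j,-})\}$ and $D_j=\{A_{j,+}\phi(A_{j,+})-A_{j,-}\phi(A_{j,-})\}/\{1-\Phi(A_{j,+})+\Phi(A_{j,-})\}$. Then Lemma~\ref{lem:RB_Gamma_variance} reads
\[
\sigma^{2}_{\hat{\Gamma}_{j,\mathrm{RB}}}=\sigma_{\hat{\Gamma}_j}^2\Bigl\{1-\tfrac{\rho^2}{\eta^2}\,\mathbb E\bigl(D_j-R_j^2\mid j\in\mathcal S_\lambda\bigr)\Bigr\}.
\]
Applying the identical conditional-variance decomposition to $\hat\gamma_{j,\mathrm{ini}}=\hat\gamma_j-\sigma_{\hat\gamma_j}Z_j/\eta^2$, whose Rao--Blackwellization is $\hat\gamma_{j,\mathrm{RB}}$, gives the same formula with $\rho$ replaced by $1$:
\[
\sigma^{2}_{\hat{\gamma}_{j,\mathrm{RB}}}=\sigma_{\hat{\gamma}_j}^2\Bigl\{1-\tfrac{1}{\eta^2}\,\mathbb E\bigl(D_j-R_j^2\mid j\in\mathcal S_\lambda\bigr)\Bigr\}.
\]
This uses that $\operatorname{Var}(\hat\gamma_{j,\mathrm{ini}})=\sigma_{\hat\gamma_j}^2(1+\eta^{-2})$ and that $\operatorname{Var}(Z_j\mid\hat\gamma_j,j\in\mathcal S_\lambda)$ is given by Lemma~S.8 of Ma et al.~(2023).

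Set $r_j=\sigma^{2}_{\hat{\gamma}_{j,\mathrm{RB}}}/\sigma^{2}_{\hat{\gamma}_j}$. Eliminating the common expectation term between the two displays yields the key identity
\[
\frac{\sigma^{2}_{\hat{\Gamma}_{j,\mathrm{RB}}}}{\sigma_{\hat{\Gamma}_j}^2}=1-\rho^2+\rho^2 r_j .
\]
Since $\rho^2\in[0,1)$, this ratio is a convex combination of $1$ and $r_j$. By Lemma~\ref{lem:RB_gamma_variance_bounds}, $c_1\le r_j\le c_2$ uniformly in $j$. Therefore
\[
\min(1,c_1)\le \frac{\sigma^{2}_{\hat{\Gamma}_{j,\mathrm{RB}}}}{\sigma_{\hat{\Gamma}_j}^2}\le \max(1,c_2).
\]
One can take $c_1^\ast=\min(1,c_1)$ and $c_2^\ast=\max(1,c_2)$. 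In fact $c_1\le1\le c_2$, since $2\Phi(-\lambda/\eta)\le1\le 1+\eta^{-2}$, so these reduce to $c_1^\ast=c_1$ and $c_2^\ast=c_2$. Both constants are free of $j$ and of $\rho$.

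A more direct lower bound is also available: $1-\rho^2+\rho^2 r_j\ge1-\rho^2>0$ because $r_j\ge0$. Either route suffices.

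The main point needing care is the exposure-side variance identity. It must be verified that the Rao--Blackwell conditioning set $(\hat\Gamma_j,\hat\gamma_j,j\in\mathcal S_\lambda)$ reduces, for $Z_j$, to $(\hat\gamma_j,j\in\mathcal S_\lambda)$, so that the same conditional moments of $Z_j$ appear in both variance formulas. This holds because $Z_j$ is independent of the summary statistics and the selection event depends only on $(\hat\gamma_j,Z_j)$. The remaining steps are algebraic, so this reduction is the only substantive step.
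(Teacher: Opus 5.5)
Your proposal is correct and follows essentially the same route as the paper. You derive the identity $\sigma^{2}_{\hat{\Gamma}_{j,\mathrm{RB}}}/\sigma_{\hat{\Gamma}_j}^2=(1-\rho^2)+\rho^2\sigma^{2}_{\hat{\gamma}_{j,\mathrm{RB}}}/\sigma_{\hat{\gamma}_j}^2$ from Lemma~\ref{lem:RB_Gamma_variance} and its exposure-side analogue, then apply the convex-combination argument with Lemma~\ref{lem:RB_gamma_variance_bounds} to obtain the bounds $\min(1,c_1)$ and $\max(1,c_2)$; your explicit check that conditioning on $\hat\Gamma_j$ leaves the conditional law of $Z_j$ unchanged fills in a step the paper leaves implicit.
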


\begin{proof}
Combining Lemma~\ref{lem:RB_Gamma_variance} with the corresponding
variance expression for \(\hat{\gamma}_{j,\mathrm{RB}}\) gives
\[
\frac{
\sigma^{2}_{\hat{\Gamma}_{j,\mathrm{RB}}}
}{
\sigma_{\hat{\Gamma}_j}^2
}
=
(1-\rho^2)
+
\rho^2
\frac{
\sigma^{2}_{\hat{\gamma}_{j,\mathrm{RB}}}
}{
\sigma_{\hat{\gamma}_j}^2
}.
\]
By Lemma~\ref{lem:RB_gamma_variance_bounds}, there exist constants
\(0<c_1<c_2<\infty\) such that
\[
c_1
\leq
\frac{
\sigma^{2}_{\hat{\gamma}_{j,\mathrm{RB}}}
}{
\sigma_{\hat{\gamma}_j}^2
}
\leq
c_2
\]
uniformly over \(j\). Since \(0\leq\rho^2<1\), the outcome-side
variance ratio is a convex combination of \(1\) and the exposure-side
variance ratio. Hence,
\[
\min(1,c_1)
\leq
\frac{
\sigma^{2}_{\hat{\Gamma}_{j,\mathrm{RB}}}
}{
\sigma_{\hat{\Gamma}_j}^2
}
\leq
\max(1,c_2),
\]
which proves the result.
\end{proof}

\section{Conditional covariance after Rao--Blackwellization}
\label{sec:app_RB_covariance}

\subsection{Conditional covariance and its uniform bounds}
\label{sec:app_RB_covariance_properties}

We next derive the conditional covariance between
\(\hat{\gamma}_{j,\mathrm{RB}}\) and
\(\hat{\Gamma}_{j,\mathrm{RB}}\) given the selection event. Recall the
initial estimators
\[
\hat{\gamma}_{j,\mathrm{ini}}
=
\hat{\gamma}_j
-
\frac{\sigma_{\hat{\gamma}_j}}{\eta^2}Z_j,
\qquad
\hat{\Gamma}_{j,\mathrm{ini}}
=
\hat{\Gamma}_j
-
\frac{\rho\sigma_{\hat{\Gamma}_j}}{\eta^2}Z_j,
\]
whose Rao--Blackwellized counterparts are
\[
\hat{\gamma}_{j,\mathrm{RB}}
=
\mathbb{E}\left(
\hat{\gamma}_{j,\mathrm{ini}}
\mid
\hat{\gamma}_j,\,
j\in\mathcal S_\lambda
\right),
\qquad
\hat{\Gamma}_{j,\mathrm{RB}}
=
\mathbb{E}\left(
\hat{\Gamma}_{j,\mathrm{ini}}
\mid
\hat{\Gamma}_j,\hat{\gamma}_j,\,
j\in\mathcal S_\lambda
\right).
\]
For convenience, write
\[
\sigma_{\hat{\gamma}_{j,\mathrm{RB}}\hat{\Gamma}_{j,\mathrm{RB}}}
=
\operatorname{Cov}\left(
\hat{\gamma}_{j,\mathrm{RB}},
\hat{\Gamma}_{j,\mathrm{RB}}
\mid j\in\mathcal S_\lambda
\right).
\]

\begin{lemma}
\label{lem:RB_covariance}
Suppose that Assumption~1--2 hold. Then
\[
\begin{aligned}
\sigma_{\hat{\gamma}_{j,\mathrm{RB}}\hat{\Gamma}_{j,\mathrm{RB}}}
=
\rho\sigma_{\hat{\gamma}_j}\sigma_{\hat{\Gamma}_j}
-
\frac{
\rho\sigma_{\hat{\gamma}_j}\sigma_{\hat{\Gamma}_j}
}{\eta^2}
\mathbb{E}\Bigg[
\frac{
A_{j,+}\phi(A_{j,+})
-
A_{j,-}\phi(A_{j,-})
}{
1-\Phi(A_{j,+})+\Phi(A_{j,-})
} 
-
\Bigg(
\frac{
\phi(A_{j,+})-\phi(A_{j,-})
}{
1-\Phi(A_{j,+})+\Phi(A_{j,-})
}
\Bigg)^2
\,\Bigg|\,
j\in\mathcal S_\lambda
\Bigg].
\end{aligned}
\]
\end{lemma}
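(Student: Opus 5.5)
We will mirror the proof of Lemma~\ref{lem:RB_Gamma_variance} and use the conditional covariance decomposition (law of total covariance) given the selection event. Throughout, we write $\mathcal F_j=\sigma(\hat\Gamma_j,\hat\gamma_j)$. Since $\hat\gamma_{j,\mathrm{RB}}$ depends only on $\hat\gamma_j$, it equals $\mathbb E(\hat\gamma_{j,\mathrm{ini}}\mid\hat\Gamma_j,\hat\gamma_j,j\in\mathcal S_\lambda)$ as well: $Z_j$ is independent of $(\hat\Gamma_j,\hat\gamma_j)$, and the selection depends only on $(\hat\gamma_j,Z_j)$, so conditioning additionally on $\hat\Gamma_j$ does not change $\mathbb E(Z_j\mid\cdot)$. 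Both Rao--Blackwellized estimators are therefore conditional expectations of the initial estimators given the same $\mathcal F_j$ and the selection event. The decomposition then gives
\[
\sigma_{\hat{\gamma}_{j,\mathrm{RB}}\hat{\Gamma}_{j,\mathrm{RB}}}
=
\operatorname{Cov}\left(\hat\gamma_{j,\mathrm{ini}},\hat\Gamma_{j,\mathrm{ini}}\mid j\in\mathcal S_\lambda\right)
-
\mathbb E\left[\operatorname{Cov}\left(\hat\gamma_{j,\mathrm{ini}},\hat\Gamma_{j,\mathrm{ini}}\mid \mathcal F_j, j\in\mathcal S_\lambda\right)\middle| j\in\mathcal S_\lambda\right].
\]

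\textbf{Step 1 (first term).} The pair $(\hat\gamma_{j,\mathrm{ini}},\hat\Gamma_{j,\mathrm{ini}})$ is jointly Gaussian together with $W_j=\hat\gamma_j/\sigma_{\hat\gamma_j}+Z_j$. Each component has zero covariance with $W_j$: for $\hat\Gamma_{j,\mathrm{ini}}$ this is shown in Section~\ref{sec:app_RB_Gamma}, and for $\hat\gamma_{j,\mathrm{ini}}$ the analogous computation gives $\sigma_{\hat\gamma_j}-\sigma_{\hat\gamma_j}=0$. Hence the pair is jointly independent of $W_j$, and thus of the selection event. Its conditional covariance is therefore the unconditional one,
\[
\rho\sigma_{\hat\gamma_j}\sigma_{\hat\Gamma_j}+\frac{\sigma_{\hat\gamma_j}\rho\sigma_{\hat\Gamma_j}}{\eta^4}\eta^2
=\rho\sigma_{\hat\gamma_j}\sigma_{\hat\Gamma_j}\left(1+\eta^{-2}\right).
\]
Joint independence requires a short check: the triple is Gaussian, and the pair has zero cross-covariance with $W_j$, so the pair is independent of $W_j$.

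\textbf{Step 2 (inner covariance).} Given $\mathcal F_j$, the only randomness in the initial estimators is $Z_j$. Hence
\[
\operatorname{Cov}(\hat\gamma_{j,\mathrm{ini}},\hat\Gamma_{j,\mathrm{ini}}\mid\mathcal F_j,j\in\mathcal S_\lambda)
=\frac{\rho\sigma_{\hat\gamma_j}\sigma_{\hat\Gamma_j}}{\eta^4}\operatorname{Var}(Z_j\mid\hat\gamma_j,j\in\mathcal S_\lambda),
\]
where the conditioning on $\hat\Gamma_j$ is again dropped by the independence argument used in the proof of Lemma~\ref{lem:RB_Gamma}. Substituting the expression for $\operatorname{Var}(Z_j\mid\hat\gamma_j,j\in\mathcal S_\lambda)$ from Lemma~S.8 of Ma et al.~(2023), exactly as quoted in the proof of Lemma~\ref{lem:RB_Gamma_variance}, and taking the conditional expectation yields
\[
\frac{\rho\sigma_{\hat\gamma_j}\sigma_{\hat\Gamma_j}}{\eta^2}\left(1+\mathbb E[\cdots\mid j\in\mathcal S_\lambda]\right).
\]
Subtracting this from Step~1 cancels the $\eta^{-2}$ terms and gives the stated formula.

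The main point requiring care is justifying that $\hat\gamma_{j,\mathrm{RB}}$ may be written as a conditional expectation given the enlarged information set including $\hat\Gamma_j$. Only then does the total covariance formula apply with a common conditioning $\sigma$-field. This also explains why the result parallels Lemma~\ref{lem:RB_Gamma_variance} with $\rho^2\sigma_{\hat\Gamma_j}^2$ replaced by $\rho\sigma_{\hat\gamma_j}\sigma_{\hat\Gamma_j}$.
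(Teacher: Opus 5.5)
Your proposal is correct and follows the paper's proof step for step. It uses joint-normal independence of $(\hat\gamma_{j,\mathrm{ini}},\hat\Gamma_{j,\mathrm{ini}})$ from the selection statistic, the conditional covariance decomposition, the inner covariance $\rho\sigma_{\hat\gamma_j}\sigma_{\hat\Gamma_j}\eta^{-4}\operatorname{Var}(Z_j\mid\hat\gamma_j,j\in\mathcal S_\lambda)$, and substitution of the truncated-normal variance; your check that $\hat\gamma_{j,\mathrm{RB}}$ is also the conditional expectation given $(\hat\Gamma_j,\hat\gamma_j,j\in\mathcal S_\lambda)$ makes explicit the common conditioning set that the paper's decomposition uses implicitly.
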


\begin{proof}
The two initial estimators satisfy
\[
\operatorname{Cov}\left(
\hat{\gamma}_{j,\mathrm{ini}},
\hat{\Gamma}_{j,\mathrm{ini}}
\right)
=
\rho\sigma_{\hat{\gamma}_j}\sigma_{\hat{\Gamma}_j}
\left(1+\frac{1}{\eta^2}\right).
\]
Moreover, both are uncorrelated with the randomized selection statistic:
\[
\operatorname{Cov}\left(
\hat{\gamma}_{j,\mathrm{ini}},
\frac{\hat{\gamma}_j}{\sigma_{\hat{\gamma}_j}}+Z_j
\right)
=
\operatorname{Cov}\left(
\hat{\Gamma}_{j,\mathrm{ini}},
\frac{\hat{\gamma}_j}{\sigma_{\hat{\gamma}_j}}+Z_j
\right)
=0.
\]
By joint normality,
\((\hat{\gamma}_{j,\mathrm{ini}},\hat{\Gamma}_{j,\mathrm{ini}})\)
is therefore independent of the selection event, and hence
\[
\operatorname{Cov}\left(
\hat{\gamma}_{j,\mathrm{ini}},
\hat{\Gamma}_{j,\mathrm{ini}}
\mid j\in\mathcal S_\lambda
\right)
=
\rho\sigma_{\hat{\gamma}_j}\sigma_{\hat{\Gamma}_j}
\left(1+\frac{1}{\eta^2}\right).
\]
Applying the conditional covariance decomposition gives
\[
\begin{aligned}
\sigma_{\hat{\gamma}_{j,\mathrm{RB}}\hat{\Gamma}_{j,\mathrm{RB}}}
=
\operatorname{Cov}\left(
\hat{\gamma}_{j,\mathrm{ini}},
\hat{\Gamma}_{j,\mathrm{ini}}
\mid j\in\mathcal S_\lambda
\right)  -
\mathbb{E}\left[
\operatorname{Cov}\left(
\hat{\gamma}_{j,\mathrm{ini}},
\hat{\Gamma}_{j,\mathrm{ini}}
\mid
\hat{\Gamma}_j,\hat{\gamma}_j,\,
j\in\mathcal S_\lambda
\right)
\middle|
j\in\mathcal S_\lambda
\right].
\end{aligned}
\]
Conditional on \(\hat{\Gamma}_j\), \(\hat{\gamma}_j\), and
\(j\in\mathcal S_\lambda\), the remaining randomness in both initial
estimators comes from \(Z_j\), so
\[
\operatorname{Cov}\left(
\hat{\gamma}_{j,\mathrm{ini}},
\hat{\Gamma}_{j,\mathrm{ini}}
\mid
\hat{\Gamma}_j,\hat{\gamma}_j,\,
j\in\mathcal S_\lambda
\right)
=
\frac{
\rho\sigma_{\hat{\gamma}_j}\sigma_{\hat{\Gamma}_j}
}{\eta^4}
\operatorname{Var}\left(
Z_j
\mid
\hat{\gamma}_j,\,
j\in\mathcal S_\lambda
\right).
\]
Using the truncated-normal variance formula in the proof of
Lemma~\ref{lem:RB_Gamma_variance} and simplifying gives the stated result.
\end{proof}

The conditional covariance admits a simpler representation that directly
yields its uniform bounds.

\begin{lemma}
\label{lem:RB_covariance_bounds}
Suppose that Assumption~1-2 hold and that \(\eta>0\) is fixed. For
\(\rho\neq0\), there exist constants \(0<c_1<c_2<\infty\) such that
\[
c_1
\leq
\frac{
\sigma_{\hat{\gamma}_{j,\mathrm{RB}}\hat{\Gamma}_{j,\mathrm{RB}}}
}{
\rho\sigma_{\hat{\gamma}_j}\sigma_{\hat{\Gamma}_j}
}
\leq
c_2
\]
uniformly over \(1\leq j\leq p\). Consequently,
\[
c_1|\rho|\sigma_{\hat{\gamma}_j}\sigma_{\hat{\Gamma}_j}
\leq
\left|
\sigma_{\hat{\gamma}_{j,\mathrm{RB}}\hat{\Gamma}_{j,\mathrm{RB}}}
\right|
\leq
c_2|\rho|\sigma_{\hat{\gamma}_j}\sigma_{\hat{\Gamma}_j}.
\]
Moreover, the conditional covariance has the same sign as \(\rho\)
and equals zero when \(\rho=0\).
\end{lemma}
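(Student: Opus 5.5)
The plan is to show that the ratio in the statement equals the exposure-side variance ratio, and then to reuse Lemma~\ref{lem:RB_gamma_variance_bounds}.

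\textbf{Step 1 (exposure-side variance).} Apply the conditional variance decomposition to $\hat{\gamma}_{j,\mathrm{ini}}=\hat{\gamma}_j-\sigma_{\hat{\gamma}_j}Z_j/\eta^2$. This is the same computation as in Lemma~\ref{lem:RB_Gamma_variance}, with $\rho\sigma_{\hat\Gamma_j}$ replaced by $\sigma_{\hat\gamma_j}$. Let $T_j$ denote the truncated-normal moment appearing in Lemma~\ref{lem:RB_covariance}, namely
\[
T_j = \mathbb{E}\Bigg[\frac{A_{j,+}\phi(A_{j,+})-A_{j,-}\phi(A_{j,-})}{1-\Phi(A_{j,+})+\Phi(A_{j,-})}-\Bigg(\frac{\phi(A_{j,+})-\phi(A_{j,-})}{1-\Phi(A_{j,+})+\Phi(A_{j,-})}\Bigg)^2\,\Bigg|\,j\in\mathcal S_\lambda\Bigg].
\]
Then the decomposition gives
\[
\sigma^2_{\hat{\gamma}_{j,\mathrm{RB}}}=\sigma^2_{\hat{\gamma}_j}\bigl(1-\eta^{-2}T_j\bigr).
\]
This is the ``corresponding variance expression'' already invoked in the proof of Lemma~\ref{lem:RB_Gamma_variance_bounds}. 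It follows from Lemma~S.8 of Ma et al.~(2023) in exactly the way that Lemma~\ref{lem:RB_Gamma_variance} does.

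\textbf{Step 2 (identify the ratio).} Lemma~\ref{lem:RB_covariance} gives $\sigma_{\hat{\gamma}_{j,\mathrm{RB}}\hat{\Gamma}_{j,\mathrm{RB}}}=\rho\sigma_{\hat{\gamma}_j}\sigma_{\hat{\Gamma}_j}(1-\eta^{-2}T_j)$. Since $T_j$ does not depend on $\rho$, for $\rho\neq0$ we obtain
\[
\frac{\sigma_{\hat{\gamma}_{j,\mathrm{RB}}\hat{\Gamma}_{j,\mathrm{RB}}}}{\rho\sigma_{\hat{\gamma}_j}\sigma_{\hat{\Gamma}_j}}=\frac{\sigma^2_{\hat{\gamma}_{j,\mathrm{RB}}}}{\sigma^2_{\hat{\gamma}_j}}.
\]
This is the simpler representation referred to in the statement.

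There is also a more structural route, which I would use as a check. Both RB estimators are affine in $M_j:=\mathbb E(Z_j\mid\hat\gamma_j,j\in\mathcal S_\lambda)$ with coefficients proportional to the respective standard errors. Using the orthogonal representation $\hat\Gamma_j-\Gamma_j=\sigma_{\hat\Gamma_j}\{\rho(\hat\gamma_j-\gamma_j)/\sigma_{\hat\gamma_j}+\sqrt{1-\rho^2}W_j\}$, where $W_j$ is independent of $(\hat\gamma_j,Z_j)$, we can write
\[
\hat\Gamma_{j,\mathrm{RB}}-\Gamma_j=\rho\frac{\sigma_{\hat\Gamma_j}}{\sigma_{\hat\gamma_j}}(\hat\gamma_{j,\mathrm{RB}}-\gamma_j)+\sigma_{\hat\Gamma_j}\sqrt{1-\rho^2}\,W_j.
\]
The selection event depends only on $(\hat\gamma_j,Z_j)$, so $W_j$ remains independent of $\hat\gamma_{j,\mathrm{RB}}$ after conditioning on selection. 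Taking the conditional covariance with $\hat\gamma_{j,\mathrm{RB}}$ then gives the same identity directly. It also reproduces the convex-combination formula used in Lemma~\ref{lem:RB_Gamma_variance_bounds}, which confirms consistency with that lemma.

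\textbf{Step 3 (conclude).} By Lemma~\ref{lem:RB_gamma_variance_bounds}, the ratio lies in $[2\Phi(-\lambda/\eta),\,1+\eta^{-2}]$ uniformly over $j$, so we may take $c_1=2\Phi(-\lambda/\eta)$ and $c_2=1+\eta^{-2}$. Multiplying through by $|\rho|\sigma_{\hat\gamma_j}\sigma_{\hat\Gamma_j}$ gives the absolute-value bounds. Because the ratio is strictly positive, the covariance has the sign of $\rho$, and the formula in Lemma~\ref{lem:RB_covariance} gives zero when $\rho=0$.

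\textbf{Main obstacle.} The delicate point is making sure that $T_j$, an expectation under the selected law of $\hat\gamma_j$, enters the variance and covariance formulas with identical form and sign. This comes down to using the same Lemma~S.8 formula for $\operatorname{Var}(Z_j\mid\hat\gamma_j,j\in\mathcal S_\lambda)$ in both places. The orthogonal-representation argument in Step 2 settles this without any truncated-normal algebra, so I would lead with it.
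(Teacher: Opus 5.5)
Your proposal is correct and matches the paper's proof. The paper likewise combines Lemma~\ref{lem:RB_covariance} with the exposure-side variance expression to get $\sigma_{\hat{\gamma}_{j,\mathrm{RB}}\hat{\Gamma}_{j,\mathrm{RB}}}=\rho(\sigma_{\hat{\Gamma}_j}/\sigma_{\hat{\gamma}_j})\,\sigma^{2}_{\hat{\gamma}_{j,\mathrm{RB}}}$, then reads off the bounds and the sign from Lemma~\ref{lem:RB_gamma_variance_bounds}. Your orthogonal-decomposition cross-check is also valid (the paper uses the same representation later, in Lemma~\ref{lem:BRIVW_variance_order}), and your explicit constants $c_1=2\Phi(-\lambda/\eta)$, $c_2=1+\eta^{-2}$ presuppose a fixed $\lambda$, just as the paper's appeal to Lemma~\ref{lem:RB_gamma_variance_bounds} does.
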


\begin{proof}
Combining Lemma~\ref{lem:RB_covariance} with the corresponding variance
expression for \(\hat{\gamma}_{j,\mathrm{RB}}\) gives
\[
\sigma_{\hat{\gamma}_{j,\mathrm{RB}}\hat{\Gamma}_{j,\mathrm{RB}}}
=
\rho
\frac{\sigma_{\hat{\Gamma}_j}}
{\sigma_{\hat{\gamma}_j}}
\sigma^{2}_{\hat{\gamma}_{j,\mathrm{RB}}}.
\]
Hence, for \(\rho\neq0\),
\[
\frac{
\sigma_{\hat{\gamma}_{j,\mathrm{RB}}\hat{\Gamma}_{j,\mathrm{RB}}}
}{
\rho\sigma_{\hat{\gamma}_j}\sigma_{\hat{\Gamma}_j}
}
=
\frac{
\sigma^{2}_{\hat{\gamma}_{j,\mathrm{RB}}}
}{
\sigma_{\hat{\gamma}_j}^{2}
}.
\]
By Lemma~\ref{lem:RB_gamma_variance_bounds}, the ratio on the
right-hand side is uniformly bounded and bounded away from zero.
The stated bounds therefore follow immediately. Since this ratio is
strictly positive, the conditional covariance has the same sign as
\(\rho\). When \(\rho=0\), the covariance is zero directly from the
identity above.
\end{proof}

To estimate the conditional covariance, define
\[
\begin{aligned}
\hat{\sigma}_{\hat{\gamma}_{j,\mathrm{RB}}\hat{\Gamma}_{j,\mathrm{RB}}}
&=
\rho\sigma_{\hat{\gamma}_j}\sigma_{\hat{\Gamma}_j}
\Bigg[
1-
\frac{1}{\eta^2}
\frac{
A_{j,+}\phi(A_{j,+})-A_{j,-}\phi(A_{j,-})
}{
1-\Phi(A_{j,+})+\Phi(A_{j,-})
}
+
\frac{1}{\eta^2}
\Bigg(
\frac{
\phi(A_{j,+})-\phi(A_{j,-})
}{
1-\Phi(A_{j,+})+\Phi(A_{j,-})
}
\Bigg)^2
\Bigg].
\end{aligned}
\]
By Lemma~\ref{lem:RB_covariance},
\[
\mathbb{E}\left(
\hat{\sigma}_{\hat{\gamma}_{j,\mathrm{RB}}\hat{\Gamma}_{j,\mathrm{RB}}}
\mid j\in\mathcal S_\lambda
\right)
=
\sigma_{\hat{\gamma}_{j,\mathrm{RB}}\hat{\Gamma}_{j,\mathrm{RB}}}.
\]
Thus, the covariance estimator is conditionally unbiased, although it
need not be consistent for an individual SNP. For the subsequent
asymptotic analysis, it suffices to control its aggregate estimation
error over the selected instruments.

\subsection{Sharper aggregate error bounds}
\label{sec:app_RB_covariance_estimation}

For the exposure-side variance estimator, Ma et al.~(2023) established
an aggregate error bound of order
\(O_p(v^2\lambda\sqrt{p_\lambda})\).
They noted that the additional factor \(\lambda\) arose from their
higher-order moment bound for the Rao--Blackwellized quantities and
conjectured that the sharper rate
\(O_p(v^2\sqrt{p_\lambda})\) should hold.
We verify this conjecture using an orthogonal Gaussian representation
that avoids direct higher-order moment expansions of the
truncated-normal terms. Because the covariance estimator depends on
the same conditional variance of the randomization variable \(Z_j\),
the same argument also yields the sharper rate for the covariance
correction introduced here.

\begin{lemma}
\label{lem:RB_variance_covariance_estimation}
Suppose that Assumption~1--2 hold and that \(\eta>0\) is bounded and
bounded away from zero. Then
\[
\begin{aligned}
\left|
\sum_{j\in\mathcal S_\lambda}
\left(
\hat{\sigma}^{\,2}_{\hat{\gamma}_{j,\mathrm{RB}}}
-
\sigma^{2}_{\hat{\gamma}_{j,\mathrm{RB}}}
\right)
\right|
&=
O_p\left(v^2\sqrt{p_\lambda}\right),\\
\left|
\sum_{j\in\mathcal S_\lambda}
\left(
\hat{\sigma}_{\hat{\gamma}_{j,\mathrm{RB}}
\hat{\Gamma}_{j,\mathrm{RB}}}
-
\sigma_{\hat{\gamma}_{j,\mathrm{RB}}
\hat{\Gamma}_{j,\mathrm{RB}}}
\right)
\right|
&=
O_p\left(v^2\sqrt{p_\lambda}\right).
\end{aligned}
\]
\end{lemma}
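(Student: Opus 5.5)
The plan is to reduce both aggregate bounds to a second-moment computation for independent, conditionally mean-zero summands, and then apply Chebyshev's inequality conditional on \(\mathcal S_\lambda\).

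\emph{Step 1: reduce to a single scalar quantity.} Write \(X_j=\hat\gamma_j/\sigma_{\hat\gamma_j}\sim N(\mu_j,1)\) and
\[
h_\lambda(X_j)=\frac{A_{j,+}\phi(A_{j,+})-A_{j,-}\phi(A_{j,-})}{1-\Phi(A_{j,+})+\Phi(A_{j,-})}-\left(\frac{\phi(A_{j,+})-\phi(A_{j,-})}{1-\Phi(A_{j,+})+\Phi(A_{j,-})}\right)^2 .
\]
By Lemma~S.8 of Ma et al.~(2023),
\[
1+h_\lambda(X_j)=\eta^{-2}\operatorname{Var}(Z_j\mid X_j,\,j\in\mathcal S_\lambda).
\]
The two errors are then the same scalar quantity up to a deterministic factor:
\[
\hat\sigma^{2}_{\hat\gamma_{j,\mathrm{RB}}}-\sigma^{2}_{\hat\gamma_{j,\mathrm{RB}}}=-\frac{\sigma_{\hat\gamma_j}^2}{\eta^2}\,\Delta_j,
\qquad
\hat\sigma_{\hat\gamma_{j,\mathrm{RB}}\hat\Gamma_{j,\mathrm{RB}}}-\sigma_{\hat\gamma_{j,\mathrm{RB}}\hat\Gamma_{j,\mathrm{RB}}}=-\frac{\rho\sigma_{\hat\gamma_j}\sigma_{\hat\Gamma_j}}{\eta^2}\,\Delta_j,
\]
where \(\Delta_j=h_\lambda(X_j)-\mathbb E[h_\lambda(X_j)\mid j\in\mathcal S_\lambda]\). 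The covariance identity follows from Lemma~\ref{lem:RB_covariance}. By Assumption~2 both prefactors are \(O(v^2)\) uniformly in \(j\), and \(|\rho|<1\).

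\emph{Step 2: conditional independence and Chebyshev.} By Assumption~2(i), and because the \(Z_j\) are independent, the \(\Delta_j\) for \(j\in\mathcal S_\lambda\) are independent given \(\mathcal S_\lambda\), and each has conditional mean zero. Conditional unbiasedness was noted above. Therefore
\[
\operatorname{Var}\Big(\sum_{j\in\mathcal S_\lambda}\tfrac{\sigma_{\hat\gamma_j}^2}{\eta^2}\Delta_j\,\Big|\,\mathcal S_\lambda\Big)\le C v^4\sum_{j\in\mathcal S_\lambda}\mathbb E[h_\lambda(X_j)^2\mid j\in\mathcal S_\lambda].
\]
Chebyshev's inequality then gives \(O_p(v^2\sqrt{p_\lambda})\), provided we establish
\[
\sup_j\mathbb E\big[h_\lambda(X_j)^2\mid j\in\mathcal S_\lambda\big]\le C,
\]
with \(C\) not depending on \(\lambda\) or \(\mu_j\). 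The covariance sum follows from the same bound, because the covariance error is \(\rho\sigma_{\hat\Gamma_j}/\sigma_{\hat\gamma_j}\) (a bounded factor) times the variance-estimator error.

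\emph{Step 3: the uniform fixed-order moment bound, via an orthogonal Gaussian representation.} This is the main obstacle. A naive bound on the truncated-normal conditional variance of \(Z_j\) given \(X_j\) grows like \(\lambda^2\); this happens when \(X_j\) lies near the centre of the non-selection interval. That growth is exactly what produced the extra factor \(\lambda\) in Ma et al.

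To avoid it, I would introduce \(T_j=X_j+Z_j\), on which selection depends, and the orthogonal Gaussian component \(\tilde U_j=\eta^2(X_j-\mu_j)-Z_j\). We have \(\operatorname{Cov}(\tilde U_j,T_j)=0\), so \(\tilde U_j\) is \(N(0,\eta^2(1+\eta^2))\) and independent of \(T_j\), hence independent of the selection event. Then
\[
Z_j=\frac{\eta^2(T_j-\mu_j)-\tilde U_j}{1+\eta^2}.
\]
Next, use \(\operatorname{Var}(Z_j\mid X_j,\mathrm{sel})\le\mathbb E[(Z_j-c(X_j))^2\mid X_j,\mathrm{sel}]\) for any measurable \(c\), together with Jensen's inequality, to get
\[
\mathbb E[(1+h_\lambda)^2\mid\mathrm{sel}]\le\eta^{-4}\,\mathbb E[(Z_j-c(X_j))^4\mid\mathrm{sel}].
\]
Finally, choose \(c\) as a linear function of \(X_j\) so that \(Z_j-c(X_j)\) is expressed through \(\tilde U_j\) and the centred selection statistic. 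The fourth moment then splits into a Gaussian moment of \(\tilde U_j\), which is unaffected by selection, and the fourth moment of \(T_j-\mu_j\) given \(|T_j|>\lambda\).

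The remaining difficulty is to bound that last truncated fourth moment uniformly. If the linear choice leaves a term in \(T_j\) that still grows with \(\lambda\), I would fall back on the exact density \(\phi(x-\mu_j)q_\lambda(x)/\Pr(\mathrm{sel})\) from Lemma~\ref{lem:RB_gamma_variance_bounds}. I would split the range of \(x\) into the region where \(|x|\) exceeds \(\lambda\) by a margin and its complement. On the first region, \(h_\lambda\) is bounded by Mills-ratio estimates. On the complement, I would show that the small posterior mass \(q_\lambda(x)\) compensates for the size of \(h_\lambda^2\), which is at most of order \(\lambda^4\), by a Gaussian-tail comparison.
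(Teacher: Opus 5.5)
\paragraph{Where the proof stands.} Your Steps 1--2 match the paper. It writes the two errors as $-\eta^{-4}\sigma^2_{\hat\gamma_j}$ (resp.\ $-\eta^{-4}\rho\sigma_{\hat\gamma_j}\sigma_{\hat\Gamma_j}$) times $V_j-\mathbb E(V_j\mid j\in\mathcal S_\lambda)$, where $V_j=\operatorname{Var}(Z_j\mid\hat\gamma_j,j\in\mathcal S_\lambda)=\eta^2\{1+h_\lambda(X_j)\}$, and finishes with conditional independence and Chebyshev. Your $\tilde U_j$ is $\eta^2M_j$ for the paper's orthogonal variable $M_j$.

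\paragraph{The gap.} It is in Step 3, which carries all the content of the lemma. Write $c(X_j)=k(X_j-\mu_j)$. Your own representation then gives
\[
Z_j-c(X_j)=\frac{(\eta^2-k)(T_j-\mu_j)-(1+k)\tilde U_j}{1+\eta^2}.
\]
The split you describe, into a Gaussian moment of $\tilde U_j$ plus $\mathbb E[(T_j-\mu_j)^4\mid j\in\mathcal S_\lambda]$, assumes a $T_j$-term survives. That truncated moment is not uniformly bounded: for $\mu_j=0$ it exceeds $\lambda^4$, because $|T_j|>\lambda$ on the selection event.

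For fixed $\lambda$ you could still close it, for instance via $\Pr(j\in\mathcal S_\lambda)\ge 2\Phi(-\lambda/\sqrt{1+\eta^2})$. But the resulting constant depends on $\lambda$, which is exactly what the lemma is meant to avoid and what you excluded by asking for $C$ free of $\lambda$. Your fallback (splitting the range of $X_j$ and trading the $\lambda^4$ size of $h_\lambda^2$ against small posterior mass) is only sketched. So, as written, the bound $\sup_j\mathbb E[h_\lambda(X_j)^2\mid j\in\mathcal S_\lambda]\le C$ is not established.

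\paragraph{The fix.} The slope $k=\eta^2$ removes the $T_j$-term entirely: with $c(X_j)=\eta^2(X_j-\mu_j)$ you get $Z_j-c(X_j)=-\tilde U_j$ identically. Since $\tilde U_j$ is independent of $T_j$, and hence of the selection event, your variational inequality and Jensen give
\[
\mathbb E\bigl[\{1+h_\lambda(X_j)\}^2\mid j\in\mathcal S_\lambda\bigr]\le\eta^{-4}\,\mathbb E\bigl(\tilde U_j^4\mid j\in\mathcal S_\lambda\bigr)=\eta^{-4}\,\mathbb E\bigl(\tilde U_j^4\bigr)=3(1+\eta^2)^2.
\]
This holds uniformly in $\lambda$ and $\mu_j$, and Steps 1--2 then finish the proof.

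This is the paper's route. It uses that, given $\hat\gamma_j$ and selection, $Z_j-\mathbb E(Z_j\mid\hat\gamma_j,j\in\mathcal S_\lambda)=-\eta^2\{M_j-\mathbb E(M_j\mid\hat\gamma_j,j\in\mathcal S_\lambda)\}$. It then bounds the conditional fourth moment by $16\eta^8\mathbb E(M_j^4)$ via $|a-b|^4\le 8(a^4+b^4)$ and Jensen. Your version with this choice of $c$ is equivalent and gives a smaller constant. No truncated-normal or Mills-ratio analysis is needed.
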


\begin{proof}
For each \(j\), define
\[
V_j
=
\operatorname{Var}\left(
Z_j
\mid
\hat{\gamma}_j,\,
j\in\mathcal S_\lambda
\right).
\]
From the definitions of the Rao--Blackwellized variance and covariance
estimators,
\[
\begin{aligned}
\hat{\sigma}^{\,2}_{\hat{\gamma}_{j,\mathrm{RB}}}
-
\sigma^{2}_{\hat{\gamma}_{j,\mathrm{RB}}}
&=
-\frac{\sigma_{\hat{\gamma}_j}^2}{\eta^4}
\left[
V_j-
\mathbb E\left(
V_j\mid j\in\mathcal S_\lambda
\right)
\right],\\
\hat{\sigma}_{\hat{\gamma}_{j,\mathrm{RB}}
\hat{\Gamma}_{j,\mathrm{RB}}}
-
\sigma_{\hat{\gamma}_{j,\mathrm{RB}}
\hat{\Gamma}_{j,\mathrm{RB}}}
&=
-\frac{
\rho\sigma_{\hat{\gamma}_j}\sigma_{\hat{\Gamma}_j}
}{\eta^4}
\left[
V_j-
\mathbb E\left(
V_j\mid j\in\mathcal S_\lambda
\right)
\right].
\end{aligned}
\]
Thus, both estimation errors are centered conditional on selection.
It remains to show that
\(\operatorname{Var}(V_j\mid j\in\mathcal S_\lambda)\)
is uniformly bounded.
Define the orthogonal Gaussian variable
\[
M_j
=
\frac{\hat{\gamma}_j-\gamma_j}
{\sigma_{\hat{\gamma}_j}}
-
\frac{Z_j}{\eta^2}.
\]
Since
\[
\operatorname{Cov}\left(
M_j,\,
\frac{\hat{\gamma}_j}{\sigma_{\hat{\gamma}_j}}+Z_j
\right)
=
1-\frac{\operatorname{Var}(Z_j)}{\eta^2}
=
0,
\]
joint normality implies that \(M_j\) is independent of the randomized
selection statistic, and hence of the event
\(j\in\mathcal S_\lambda\). Moreover,
\[
M_j\sim N\left(0,1+\eta^{-2}\right).
\]
Conditional on \(\hat{\gamma}_j\) and \(j\in\mathcal S_\lambda\),
\[
M_j-
\mathbb E\left(
M_j\mid\hat{\gamma}_j,\,
j\in\mathcal S_\lambda
\right)
=
-\frac{1}{\eta^2}
\left[
Z_j-
\mathbb E\left(
Z_j\mid\hat{\gamma}_j,\,
j\in\mathcal S_\lambda
\right)
\right].
\]
Therefore,
\[
\begin{aligned}
&\mathbb E\left[
\left\{
Z_j-
\mathbb E\left(
Z_j\mid\hat{\gamma}_j,\,
j\in\mathcal S_\lambda
\right)
\right\}^4
\middle|
j\in\mathcal S_\lambda
\right] \\
&\qquad =
\eta^8
\mathbb E\left[
\left\{
M_j-
\mathbb E\left(
M_j\mid\hat{\gamma}_j,\,
j\in\mathcal S_\lambda
\right)
\right\}^4
\middle|
j\in\mathcal S_\lambda
\right] \\
&\qquad \leq
16\eta^8
\mathbb E\left(
M_j^4\mid j\in\mathcal S_\lambda
\right)
=
48\eta^8\left(1+\eta^{-2}\right)^2
=
O(1),
\end{aligned}
\]
where the inequality follows from
\(|a-b|^4\leq8(a^4+b^4)\) and Jensen's inequality, and the
last equality uses the independence of \(M_j\) and the selection event.

Recall that
\[
V_j
=
\mathbb E\left[
\left\{
Z_j-
\mathbb E\left(
Z_j
\mid
\hat{\gamma}_j,\,
j\in\mathcal S_\lambda
\right)
\right\}^2
\middle|
\hat{\gamma}_j,\,
j\in\mathcal S_\lambda
\right].
\]
Therefore, by Jensen's inequality,
\[
\begin{aligned}
V_j^2
&\leq
\mathbb E\left[
\left\{
Z_j-
\mathbb E\left(
Z_j
\mid
\hat{\gamma}_j,\,
j\in\mathcal S_\lambda
\right)
\right\}^4
\middle|
\hat{\gamma}_j,\,
j\in\mathcal S_\lambda
\right].
\end{aligned}
\]
Taking conditional expectations given \(j\in\mathcal S_\lambda\) and
using the law of iterated expectations yield
\[
\begin{aligned}
\mathbb E\left(
V_j^2
\mid
j\in\mathcal S_\lambda
\right)
&\leq
\mathbb E\left[
\left\{
Z_j-
\mathbb E\left(
Z_j
\mid
\hat{\gamma}_j,\,
j\in\mathcal S_\lambda
\right)
\right\}^4
\middle|
j\in\mathcal S_\lambda
\right] \\
&\leq
48\eta^8\left(1+\eta^{-2}\right)^2
=
O(1),
\end{aligned}
\]
uniformly over \(j\), where the last bound follows from the preceding
fourth-moment argument. Consequently,
\[
\operatorname{Var}\left(
V_j
\mid
j\in\mathcal S_\lambda
\right)
\leq
\mathbb E\left(
V_j^2
\mid
j\in\mathcal S_\lambda
\right)
=
O(1)
\]
uniformly over \(j\).

By the across-SNP independence in Assumption~2, conditional on
\(\mathcal S_\lambda\),
\[
\begin{aligned}
\operatorname{Var}\left[
\sum_{j\in\mathcal S_\lambda}
\left(
\hat{\sigma}^{\,2}_{\hat{\gamma}_{j,\mathrm{RB}}}
-
\sigma^{2}_{\hat{\gamma}_{j,\mathrm{RB}}}
\right)
\middle|
\mathcal S_\lambda
\right] 
=
\sum_{j\in\mathcal S_\lambda}
\frac{\sigma_{\hat{\gamma}_j}^4}{\eta^8}
\operatorname{Var}\left(
V_j\mid j\in\mathcal S_\lambda
\right)
=
O_p\left(p_\lambda v^4\right),
\end{aligned}
\]
where we used
\(\sigma_{\hat{\gamma}_j}=O(v)\) uniformly over \(j\).
Similarly,
\[
\begin{aligned}
\operatorname{Var}\left[
\sum_{j\in\mathcal S_\lambda}
\left(
\hat{\sigma}_{\hat{\gamma}_{j,\mathrm{RB}}
\hat{\Gamma}_{j,\mathrm{RB}}}
-
\sigma_{\hat{\gamma}_{j,\mathrm{RB}}
\hat{\Gamma}_{j,\mathrm{RB}}}
\right)
\middle|
\mathcal S_\lambda
\right]
=
\sum_{j\in\mathcal S_\lambda}
\frac{
\rho^2\sigma_{\hat{\gamma}_j}^2
\sigma_{\hat{\Gamma}_j}^2
}{\eta^8}
\operatorname{Var}\left(
V_j\mid j\in\mathcal S_\lambda
\right)
=
O_p\left(p_\lambda v^4\right),
\end{aligned}
\]
since
\(\sigma_{\hat{\gamma}_j}=O(v)\) and
\(\sigma_{\hat{\Gamma}_j}=O(v)\) uniformly over \(j\).
The two stated rates then follow from Chebyshev's
inequality.
\end{proof}

The same orthogonal Gaussian representation also yields the
fixed-order conditional moment bounds required below. In particular,
letting
\[
u_{\hat{\gamma}_{j,\mathrm{RB}}}
=
\hat{\gamma}_{j,\mathrm{RB}}-\gamma_j
=
\sigma_{\hat{\gamma}_j}
\mathbb E\left(
M_j
\mid
\hat{\gamma}_j,\,
j\in\mathcal S_\lambda
\right),
\]
Jensen's inequality gives, for every fixed \(r\geq2\),
\[
\begin{aligned}
\mathbb E\left(
\left|
u_{\hat{\gamma}_{j,\mathrm{RB}}}
\right|^r
\mid
j\in\mathcal S_\lambda
\right)
\leq
\sigma_{\hat{\gamma}_j}^r
\mathbb E\left(
|M_j|^r
\mid
j\in\mathcal S_\lambda
\right) 
=
\sigma_{\hat{\gamma}_j}^r
\mathbb E\left(
|M_j|^r
\right)
=
O(v^r),
\end{aligned}
\]
uniformly over \(j\).
Thus, the orthogonal Gaussian representation establishes the sharper
\(O_p(v^2\sqrt{p_\lambda})\) aggregate rate conjectured in RIVW and extends it to the covariance correction in BRIVW. 
The removal of the additional $\lambda$ factor leads directly to the weaker instrument-strength requirement used in the subsequent asymptotic analysis.

\section{Proof of Theorem~1}
\subsection{Consistency and asymptotic normality of the BRIVW estimator}

Following Ma et al.~(2023), we present the detailed derivations under \(\alpha_j=0\) for notational simplicity. 
Under Assumption~3, the pleiotropic effects remain centered after
selection and \(\tau_\alpha=O(v)\), ensuring that their scale is no
larger than the common sampling-error scale.
Thus, the same arguments extend to  balanced horizontal pleiotropy without changing the form of the BRIVW estimator or its variance estimator.

The BRIVW estimator is
\begin{equation*}
\hat{\beta}_{\mathrm{BRIVW}}
=
\frac{\sum_{j \in \mathcal{S}_\lambda}
\sigma_{\hat{\Gamma}_j}^{-2}\big(\hat{\Gamma}_{j,\mathrm{RB}}\hat{\gamma}_{j,\mathrm{RB}}-\hat{\sigma}_{\hat{\gamma}_{j,\mathrm{RB}} \hat{\Gamma}_{j,\mathrm{RB}}}\big)}
{\sum_{j \in \mathcal{S}_\lambda}
\sigma_{\hat{\Gamma}_j}^{-2}\big(\hat{\gamma}_{j,\mathrm{RB}}^{2}-\hat{\sigma}^{\,2}_{\hat{\gamma}_{j,\mathrm{RB}}}\big)}.
\end{equation*}

We first establish concentration of the denominator. The sharper
aggregate error bounds in
Lemma~\ref{lem:RB_variance_covariance_estimation} remove the additional
\(\lambda\)-factor appearing in the corresponding result of
Ma et al.~(2023).

\begin{lemma}
\label{lem:BRIVW_denominator}
Suppose that Assumption~1--2 hold,
\(p_\lambda\xrightarrow{p}\infty\), and
$
\kappa_\lambda\sqrt{p_\lambda}
\xrightarrow{p}\infty.
$
Then
\[
\begin{aligned}
\sum_{j\in\mathcal S_\lambda}
\sigma_{\hat{\Gamma}_j}^{-2}
\left(
\hat{\gamma}_{j,\mathrm{RB}}^2
-
\hat{\sigma}_{\hat{\gamma}_{j,\mathrm{RB}}}^{\,2}
\right)
=
\left[
\sum_{j\in\mathcal S_\lambda}
\sigma_{\hat{\Gamma}_j}^{-2}\gamma_j^2
\right]
(1+o_p(1))
=
\Theta_p\left(
\kappa_\lambda p_\lambda
\right).
\end{aligned}
\]
\end{lemma}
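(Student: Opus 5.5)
We will decompose the denominator around its leading term. Write $u_{\hat\gamma_{j,\mathrm{RB}}}=\hat\gamma_{j,\mathrm{RB}}-\gamma_j$. Then
\[
\hat{\gamma}_{j,\mathrm{RB}}^2-\hat{\sigma}^{\,2}_{\hat{\gamma}_{j,\mathrm{RB}}}
=\gamma_j^2+2\gamma_j u_{\hat\gamma_{j,\mathrm{RB}}}
+\bigl(u_{\hat\gamma_{j,\mathrm{RB}}}^2-\sigma^2_{\hat\gamma_{j,\mathrm{RB}}}\bigr)
-\bigl(\hat{\sigma}^{\,2}_{\hat{\gamma}_{j,\mathrm{RB}}}-\sigma^2_{\hat\gamma_{j,\mathrm{RB}}}\bigr).
\]
Weighting by $\sigma_{\hat\Gamma_j}^{-2}$ and summing over $\mathcal S_\lambda$ produces a leading term and three remainders, $R_1$, $R_2$ and $R_3$. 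The leading term is $L=\sum_{j\in\mathcal S_\lambda}\sigma_{\hat\Gamma_j}^{-2}\gamma_j^2$. Since $\sigma_{\hat\Gamma_j}/v$ and $\sigma_{\hat\gamma_j}/v$ are bounded and bounded away from zero (Assumption~2), $\sigma_{\hat\Gamma_j}^{-2}\gamma_j^2\asymp(\gamma_j/\sigma_{\hat\gamma_j})^2$ uniformly. Hence $L=\Theta_p(\kappa_\lambda p_\lambda)$ directly from the definition of $\kappa_\lambda$. It then suffices to show $R_i=o_p(\kappa_\lambda p_\lambda)$ for each $i$. Throughout we condition on $\mathcal S_\lambda$ and use the across-SNP independence of Assumption~2(i). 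Given the selection events, each summand then has conditional mean zero: by the conditional unbiasedness of $\hat\gamma_{j,\mathrm{RB}}$, by the definition of $\sigma^2_{\hat\gamma_{j,\mathrm{RB}}}$, and by the unbiasedness of the variance estimator. So Chebyshev's inequality applies to each remainder.

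\emph{Cross term.} $R_1=2\sum\sigma_{\hat\Gamma_j}^{-2}\gamma_j u_{\hat\gamma_{j,\mathrm{RB}}}$ has conditional variance bounded by $4\sum\sigma_{\hat\Gamma_j}^{-4}\gamma_j^2\sigma^2_{\hat\gamma_{j,\mathrm{RB}}}$. By Lemma~\ref{lem:RB_gamma_variance_bounds} this is $O(\sum\sigma_{\hat\Gamma_j}^{-2}\gamma_j^2)=O_p(\kappa_\lambda p_\lambda)$, so $R_1=O_p(\sqrt{\kappa_\lambda p_\lambda})$. Dividing by $\kappa_\lambda p_\lambda$ gives $O_p\bigl((\kappa_\lambda p_\lambda)^{-1/2}\bigr)$. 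Moreover $\kappa_\lambda p_\lambda=\sqrt{p_\lambda}\cdot\kappa_\lambda\sqrt{p_\lambda}\to\infty$, so $R_1$ is negligible.

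\emph{Quadratic term.} For $R_2=\sum\sigma_{\hat\Gamma_j}^{-2}(u_{\hat\gamma_{j,\mathrm{RB}}}^2-\sigma^2_{\hat\gamma_{j,\mathrm{RB}}})$, the variance of each summand is at most $\sigma_{\hat\Gamma_j}^{-4}\,\mathbb E(u_{\hat\gamma_{j,\mathrm{RB}}}^4\mid j\in\mathcal S_\lambda)$. This is $O(v^{-4}v^4)=O(1)$ uniformly, by the fixed-order moment bound derived from the orthogonal Gaussian representation $M_j$ after Lemma~\ref{lem:RB_variance_covariance_estimation}. Hence $R_2=O_p(\sqrt{p_\lambda})$. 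Note that this step needs only $r=4$ and involves no $\lambda$; this is exactly where the condition improves from $\kappa_\lambda/\lambda^2\to\infty$ to $\kappa_\lambda\sqrt{p_\lambda}\to\infty$.

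\emph{Variance-estimation term.} $R_3$ is handled exactly as in the proof of Lemma~\ref{lem:RB_variance_covariance_estimation}, with the extra bounded weights $\sigma_{\hat\Gamma_j}^{-2}=\Theta(v^{-2})$. The conditional variance is $\sum\sigma_{\hat\Gamma_j}^{-4}\sigma_{\hat\gamma_j}^4\eta^{-8}\operatorname{Var}(V_j\mid j\in\mathcal S_\lambda)=O(p_\lambda)$, so $R_3=O_p(\sqrt{p_\lambda})$.

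Combining the three bounds, $(R_1+R_2+R_3)/L=O_p\bigl((\kappa_\lambda p_\lambda)^{-1/2}\bigr)+O_p\bigl(1/(\kappa_\lambda\sqrt{p_\lambda})\bigr)=o_p(1)$. This gives the claimed $(1+o_p(1))$ factor and the $\Theta_p(\kappa_\lambda p_\lambda)$ rate.

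The main obstacle is justifying the conditional-on-$\mathcal S_\lambda$ Chebyshev argument cleanly. Conditioning on the random set must not destroy independence or the centering. We handle this by noting that, given the selection indicators, the pairs $(\hat\gamma_j,Z_j)$ remain independent across $j$, each with its own selection-conditional law. The remaining work is the uniform fourth-moment bound for $u_{\hat\gamma_{j,\mathrm{RB}}}$, which the $M_j$ representation supplies.
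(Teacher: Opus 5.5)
Your proposal is correct and is essentially the paper's proof. You use the same four-term decomposition around $\sum_{j\in\mathcal S_\lambda}\sigma_{\hat\Gamma_j}^{-2}\gamma_j^2$, bound the cross term by $O_p(\sqrt{\kappa_\lambda p_\lambda})$, and bound the quadratic and variance-estimation terms by $O_p(\sqrt{p_\lambda})$ via the orthogonal-Gaussian moment bounds and conditional Chebyshev arguments; redoing the $R_3$ variance computation with the non-constant weights $\sigma_{\hat\Gamma_j}^{-2}$ inside the sum is in fact a slightly cleaner justification than the paper's direct citation of the unweighted aggregate error bound.
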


\begin{proof}
Write
\[
\hat{\gamma}_{j,\mathrm{RB}}
=
\gamma_j+u_{\hat{\gamma}_{j,\mathrm{RB}}}.
\]
Then
\[
\begin{aligned}
\sum_{j\in\mathcal S_\lambda}
\sigma_{\hat{\Gamma}_j}^{-2}
\left(
\hat{\gamma}_{j,\mathrm{RB}}^2
-
\hat{\sigma}_{\hat{\gamma}_{j,\mathrm{RB}}}^{\,2}
\right)
&=
\sum_{j\in\mathcal S_\lambda}
\sigma_{\hat{\Gamma}_j}^{-2}\gamma_j^2
+
2\sum_{j\in\mathcal S_\lambda}
\sigma_{\hat{\Gamma}_j}^{-2}
\gamma_j u_{\hat{\gamma}_{j,\mathrm{RB}}}
\\
&\quad+
\sum_{j\in\mathcal S_\lambda}
\sigma_{\hat{\Gamma}_j}^{-2}
\left(
u_{\hat{\gamma}_{j,\mathrm{RB}}}^2
-
\sigma_{\hat{\gamma}_{j,\mathrm{RB}}}^2
\right)
+
\sum_{j\in\mathcal S_\lambda}
\sigma_{\hat{\Gamma}_j}^{-2}
\left(
\sigma_{\hat{\gamma}_{j,\mathrm{RB}}}^2
-
\hat{\sigma}_{\hat{\gamma}_{j,\mathrm{RB}}}^{\,2}
\right).
\end{aligned}
\]
By conditional independence across SNPs and the uniform
post-selection moment bounds established above,
\[
\sum_{j\in\mathcal S_\lambda}
\sigma_{\hat{\Gamma}_j}^{-2}
\gamma_j u_{\hat{\gamma}_{j,\mathrm{RB}}}
=
O_p\left(
\sqrt{\kappa_\lambda p_\lambda}
\right),
\]
and
\[
\sum_{j\in\mathcal S_\lambda}
\sigma_{\hat{\Gamma}_j}^{-2}
\left(
u_{\hat{\gamma}_{j,\mathrm{RB}}}^2
-
\sigma_{\hat{\gamma}_{j,\mathrm{RB}}}^2
\right)
=
O_p\left(
\sqrt{p_\lambda}
\right).
\]
Likewise, by Lemma~\ref{lem:RB_variance_covariance_estimation} and the
uniform standard-error bounds in Assumption~2,
\[
\sum_{j\in\mathcal S_\lambda}
\sigma_{\hat{\Gamma}_j}^{-2}
\left(
\sigma_{\hat{\gamma}_{j,\mathrm{RB}}}^2
-
\hat{\sigma}_{\hat{\gamma}_{j,\mathrm{RB}}}^{\,2}
\right)
=
O_p\left(
\sqrt{p_\lambda}
\right).
\]
Consequently,
\[
\begin{aligned}
\sum_{j\in\mathcal S_\lambda}
\sigma_{\hat{\Gamma}_j}^{-2}
\left(
\hat{\gamma}_{j,\mathrm{RB}}^2
-
\hat{\sigma}_{\hat{\gamma}_{j,\mathrm{RB}}}^{\,2}
\right)
=
\sum_{j\in\mathcal S_\lambda}
\sigma_{\hat{\Gamma}_j}^{-2}\gamma_j^2
+
O_p\left(
\sqrt{\kappa_\lambda p_\lambda}
+
\sqrt{p_\lambda}
\right).
\end{aligned}
\]
By the definition of \(\kappa_\lambda\) and the uniform
standard error bounds in Assumption~2,
\[
\sum_{j\in\mathcal S_\lambda}
\sigma_{\hat{\Gamma}_j}^{-2}\gamma_j^2
=
\Theta_p\left(
\kappa_\lambda p_\lambda
\right).
\]
Moreover,
\[
\frac{\sqrt{\kappa_\lambda p_\lambda}}
{\kappa_\lambda p_\lambda}
=
\frac{1}{\sqrt{\kappa_\lambda p_\lambda}}
=o_p(1),
\qquad
\frac{\sqrt{p_\lambda}}
{\kappa_\lambda p_\lambda}
=
\frac{1}{\kappa_\lambda\sqrt{p_\lambda}}
=o_p(1),
\]
because
\(p_\lambda\xrightarrow{p}\infty\) and
\(\kappa_\lambda\sqrt{p_\lambda}\xrightarrow{p}\infty\).
The result follows.
\end{proof}

We next turn to the numerator. Write
\[
\hat{\beta}_{\mathrm{BRIVW}}
=
\beta
+
\frac{
\sum\nolimits_{j \in \mathcal{S}_\lambda}
\sigma_{\hat{\Gamma}_j}^{-2}u_{j,\mathrm{BRIVW}}
}{
\sum\nolimits_{j \in \mathcal{S}_\lambda}
\sigma_{\hat{\Gamma}_j}^{-2}
\left(
\hat{\gamma}_{j,\mathrm{RB}}^{2}
-
\hat{\sigma}^{\,2}_{\hat{\gamma}_{j,\mathrm{RB}}}
\right)
},
\]
where
\[
u_{\hat{\gamma}_{j,\mathrm{RB}}}
=
\hat{\gamma}_{j,\mathrm{RB}}-\gamma_j,
\qquad
u_{\hat{\Gamma}_{j,\mathrm{RB}}}
=
\hat{\Gamma}_{j,\mathrm{RB}}-\Gamma_j,
\]
and
\[
\begin{aligned}
u_{j,\mathrm{BRIVW}}
&=
\gamma_j
\left(
u_{\hat{\Gamma}_{j,\mathrm{RB}}}
-
\beta u_{\hat{\gamma}_{j,\mathrm{RB}}}
\right)
+
\left(
u_{\hat{\gamma}_{j,\mathrm{RB}}}
u_{\hat{\Gamma}_{j,\mathrm{RB}}}
-
\hat{\sigma}_{\hat{\gamma}_{j,\mathrm{RB}}
\hat{\Gamma}_{j,\mathrm{RB}}}
\right)
-
\beta
\left(
u_{\hat{\gamma}_{j,\mathrm{RB}}}^2
-
\hat{\sigma}^{\,2}_{\hat{\gamma}_{j,\mathrm{RB}}}
\right).
\end{aligned}
\]
We then characterize the conditional variance of the numerator.
Following the simplifying decomposition used in RIVW would require
the quadratic component to be asymptotically negligible. Instead, we
work directly with the complete estimating residual
\(u_{j,\mathrm{BRIVW}}\), which allows the subsequent analysis to
proceed under a weaker strength condition.

\begin{lemma}
\label{lem:BRIVW_variance_order}
Suppose that Assumptions~1--2 hold and
\(p_\lambda\xrightarrow{p}\infty\). Then
\[
\operatorname{Var}\left(
\sum_{j\in\mathcal S_\lambda}
u_{j,\mathrm{BRIVW}}
\;\middle|\;
\mathcal S_\lambda
\right)
=
\Theta_p\left\{
p_\lambda v^4(\kappa_\lambda+1)
\right\}.
\]
\end{lemma}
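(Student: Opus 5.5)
Conditional on \(\mathcal S_\lambda\), the pairs \((\hat\gamma_j,\hat\Gamma_j,Z_j)\) are independent across \(j\) (Assumption~2(i) and the independent randomization), and each \(u_{j,\mathrm{BRIVW}}\) has conditional mean zero by Lemma~\ref{lem:RB_Gamma}, the conditional unbiasedness of \(\hat\gamma_{j,\mathrm{RB}}\) and of \(\hat\sigma^{\,2}_{\hat\gamma_{j,\mathrm{RB}}}\), and the unbiasedness of \(\hat{\sigma}_{\hat{\gamma}_{j,\mathrm{RB}}\hat{\Gamma}_{j,\mathrm{RB}}}\) shown after Lemma~\ref{lem:RB_covariance_bounds}. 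Hence
\[
\operatorname{Var}\Big(\sum_{j\in\mathcal S_\lambda}u_{j,\mathrm{BRIVW}}\,\Big|\,\mathcal S_\lambda\Big)=\sum_{j\in\mathcal S_\lambda}\mathbb E\big(u_{j,\mathrm{BRIVW}}^2\mid j\in\mathcal S_\lambda\big).
\]
The plan is to show that each summand satisfies
\[
c\,v^2(\gamma_j^2+v^2)\le \mathbb E(u_{j,\mathrm{BRIVW}}^2\mid j\in\mathcal S_\lambda)\le C\,v^2(\gamma_j^2+v^2)
\]
uniformly in \(j\). Summing then gives \(v^2\sum_{j\in\mathcal S_\lambda}\gamma_j^2+p_\lambda v^4\), and since \(\sum_{j\in\mathcal S_\lambda}\gamma_j^2=\Theta(v^2\kappa_\lambda p_\lambda)\) by Assumption~2, the total is \(\Theta_p\{p_\lambda v^4(\kappa_\lambda+1)\}\).

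\textbf{Upper bound.} Split \(u_{j,\mathrm{BRIVW}}=L_j+Q_j\), where \(L_j=\gamma_j(u_{\hat\Gamma_{j,\mathrm{RB}}}-\beta u_{\hat\gamma_{j,\mathrm{RB}}})\) is the linear part and \(Q_j\) collects the two centered quadratic terms. The orthogonal Gaussian representation after Lemma~\ref{lem:RB_variance_covariance_estimation} gives \(\mathbb E(|u_{\hat\gamma_{j,\mathrm{RB}}}|^r\mid j\in\mathcal S_\lambda)=O(v^r)\). The same argument applies to \(u_{\hat\Gamma_{j,\mathrm{RB}}}\), which equals \((\hat\Gamma_j-\Gamma_j)-\rho\sigma_{\hat\Gamma_j}\eta^{-2}\mathbb E(Z_j\mid\hat\gamma_j, j\in\mathcal S_\lambda)\): write it as a conditional expectation of \(\hat\Gamma_{j,\mathrm{ini}}-\Gamma_j\), which is Gaussian and independent of selection, and apply Jensen. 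The correction terms \(\hat\sigma\) are \(O(v^2)\) in every moment by the bounded-\(V_j\) argument in Lemma~\ref{lem:RB_variance_covariance_estimation}. Using \((a+b)^2\le2a^2+2b^2\), we get \(\mathbb E(L_j^2\mid\cdot)=O(\gamma_j^2v^2)\) and \(\mathbb E(Q_j^2\mid\cdot)=O(v^4)\), with \(|\beta|\) bounded.

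\textbf{Lower bound (main obstacle).} The cross term \(\mathbb E(L_jQ_j)\) may be negative, so adding the two pieces is not enough; I would handle it as follows. Write \(u_{j,\mathrm{BRIVW}}=\mathbb E(u_{j,\mathrm{ini}}\mid\hat\gamma_j,\hat\Gamma_j,j\in\mathcal S_\lambda)\)-type projections and condition on \(\hat\gamma_j\). Given \(\hat\gamma_j\), the variable \(\hat\Gamma_j\) is Gaussian with variance \((1-\rho^2)\sigma_{\hat\Gamma_j}^2\), independent of the selection indicator, and \(u_{j,\mathrm{BRIVW}}\) is a quadratic polynomial in \(e_j=\hat\Gamma_j-\mathbb E(\hat\Gamma_j\mid\hat\gamma_j)\). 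Its coefficient of \(e_j\) is \(\gamma_j+u_{\hat\gamma_{j,\mathrm{RB}}}=\hat\gamma_{j,\mathrm{RB}}\), and it has no \(e_j^2\) term. Therefore
\[
\operatorname{Var}(u_{j,\mathrm{BRIVW}}\mid \hat\gamma_j, j\in\mathcal S_\lambda)=(1-\rho^2)\sigma_{\hat\Gamma_j}^2\,\hat\gamma_{j,\mathrm{RB}}^2,
\]
and the law of total variance gives
\[
\mathbb E(u_{j,\mathrm{BRIVW}}^2\mid j\in\mathcal S_\lambda)\ge(1-\rho^2)\sigma_{\hat\Gamma_j}^2\big(\gamma_j^2+\sigma^2_{\hat\gamma_{j,\mathrm{RB}}}\big).
\]
By Lemma~\ref{lem:RB_gamma_variance_bounds}, \(\sigma^2_{\hat\gamma_{j,\mathrm{RB}}}\ge c_1\sigma_{\hat\gamma_j}^2\), and with Assumption~2 this is \(\ge c\,v^2(\gamma_j^2+v^2)\). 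This conditioning trick is the step to verify carefully, in particular that selection depends only on \((\hat\gamma_j,Z_j)\) so that \(e_j\) remains Gaussian and independent after conditioning. Summing the upper and lower bounds completes the proof.
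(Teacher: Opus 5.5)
Your proposal is correct and is essentially the paper's proof. Your residual $e_j=\hat\Gamma_j-\mathbb E(\hat\Gamma_j\mid\hat\gamma_j)$ is exactly the paper's $\sigma_{\hat\Gamma_j}\sqrt{1-\rho^2}\,\varepsilon_j$, so your law-of-total-variance lower bound $(1-\rho^2)\sigma_{\hat\Gamma_j}^2(\gamma_j^2+\sigma^2_{\hat\gamma_{j,\mathrm{RB}}})$ is the paper's split of $u_{j,\mathrm{BRIVW}}$ into an exposure-only term plus the uncorrelated term $\sigma_{\hat\Gamma_j}\sqrt{1-\rho^2}\,\hat\gamma_{j,\mathrm{RB}}\varepsilon_j$, combined with Lemma~\ref{lem:RB_gamma_variance_bounds}; the only minor difference is that for the upper bound you control the moments of $u_{\hat\Gamma_{j,\mathrm{RB}}}$ by Jensen applied to $\hat\Gamma_{j,\mathrm{ini}}$, whereas the paper reads them off the same orthogonal decomposition.
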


\begin{proof}
Under Assumption~2, the orthogonal representation established above
gives
\[
u_{\hat{\Gamma}_{j,\mathrm{RB}}}
=
\rho
\frac{\sigma_{\hat{\Gamma}_j}}
{\sigma_{\hat{\gamma}_j}}
u_{\hat{\gamma}_{j,\mathrm{RB}}}
+
\sigma_{\hat{\Gamma}_j}
\sqrt{1-\rho^2}\,\varepsilon_j,
\]
where \(\varepsilon_j\sim N(0,1)\) is independent of the
exposure-side Rao--Blackwellized quantities and the selection event.
Moreover,
\[
\hat{\sigma}_{\hat{\gamma}_{j,\mathrm{RB}}
\hat{\Gamma}_{j,\mathrm{RB}}}
=
\rho
\frac{\sigma_{\hat{\Gamma}_j}}
{\sigma_{\hat{\gamma}_j}}
\hat{\sigma}_{\hat{\gamma}_{j,\mathrm{RB}}}^{\,2}.
\]
Substituting these identities into \(u_{j,\mathrm{BRIVW}}\) gives
\[
\begin{aligned}
u_{j,\mathrm{BRIVW}}
=
\left(
\rho
\frac{\sigma_{\hat{\Gamma}_j}}
{\sigma_{\hat{\gamma}_j}}
-\beta
\right)
\left\{
\gamma_j u_{\hat{\gamma}_{j,\mathrm{RB}}}
+
u_{\hat{\gamma}_{j,\mathrm{RB}}}^2
-
\hat{\sigma}_{\hat{\gamma}_{j,\mathrm{RB}}}^{\,2}
\right\}
+
\sigma_{\hat{\Gamma}_j}\sqrt{1-\rho^2}
\left(
\gamma_j+u_{\hat{\gamma}_{j,\mathrm{RB}}}
\right)\varepsilon_j.
\end{aligned}
\]
The two terms on the right-hand side are conditionally centered and
uncorrelated. The post-selection moment bounds established above,
together with the variance-estimation bounds and Assumption~2, imply
that the conditional variance of the first term is
\[
O\left(
\gamma_j^2v^2+v^4
\right).
\]
For the second term,
\[
\begin{aligned}
&\operatorname{Var}\left[
\sigma_{\hat{\Gamma}_j}\sqrt{1-\rho^2}
\left(
\gamma_j+u_{\hat{\gamma}_{j,\mathrm{RB}}}
\right)\varepsilon_j
\;\middle|\;
j\in\mathcal S_\lambda
\right]
=
(1-\rho^2)\sigma_{\hat{\Gamma}_j}^2
\left(
\gamma_j^2+
\sigma_{\hat{\gamma}_{j,\mathrm{RB}}}^2
\right).
\end{aligned}
\]
By Lemma~\ref{lem:RB_gamma_variance_bounds} and the uniform
standard-error bounds in Assumption~2, this quantity is bounded above
and below by positive constant multiples of
\(\gamma_j^2v^2+v^4\), uniformly over \(j\). Hence,
\[
\operatorname{Var}\left(
u_{j,\mathrm{BRIVW}}
\mid
j\in\mathcal S_\lambda
\right)
=
\Theta\left(
\gamma_j^2v^2+v^4
\right)
\]
uniformly over \(j\).

Conditional independence across SNPs therefore gives
\[
\begin{aligned}
\operatorname{Var}\left(
\sum_{j\in\mathcal S_\lambda}
u_{j,\mathrm{BRIVW}}
\;\middle|\;
\mathcal S_\lambda
\right)
=
\Theta_p\left(
v^2
\sum_{j\in\mathcal S_\lambda}\gamma_j^2
+
p_\lambda v^4
\right)
=
\Theta_p\left\{
p_\lambda v^4(\kappa_\lambda+1)
\right\},
\end{aligned}
\]
where the last equality follows from the definition of
\(\kappa_\lambda\) and the uniform standard-error bounds in
Assumption~2.
\end{proof}

Combining the denominator convergence with the preceding variance
order, we now establish the consistency and asymptotic normality of
the BRIVW estimator.

\begin{lemma}
\label{lem:BRIVW_asymptotic}
Suppose that Assumptions~1--3 hold,
\(p_\lambda\xrightarrow{p}\infty\), and
$
\kappa_\lambda\sqrt{p_\lambda}
\xrightarrow{p}\infty.
$
Then
\[
\hat{\beta}_{\mathrm{BRIVW}}
\xrightarrow{p}\beta.
\]
If Assumption~4 also holds, then
\[
V_{\mathrm{BRIVW}}^{-1/2}
\left(
\hat{\beta}_{\mathrm{BRIVW}}-\beta
\right)
\xrightarrow{d}N(0,1),
\]
where
\[
V_{\mathrm{BRIVW}}
=
\frac{
\operatorname{Var}\left[
\displaystyle
\sum_{j\in\mathcal S_\lambda}
\sigma_{\hat{\Gamma}_j}^{-2}
u_{j,\mathrm{BRIVW}}
\;\middle|\;
\mathcal S_\lambda
\right]
}{
\left(
\displaystyle
\sum_{j\in\mathcal S_\lambda}
\sigma_{\hat{\Gamma}_j}^{-2}\gamma_j^2
\right)^2
}
=
\Theta_p\left\{
\frac{
\kappa_\lambda+1
}{
\kappa_\lambda^2 p_\lambda
}
\right\}.
\]
\end{lemma}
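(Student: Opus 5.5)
We work conditionally on $\mathcal S_\lambda$ throughout and write
\[
\hat\beta_{\mathrm{BRIVW}}-\beta=\frac{N_\lambda}{D_\lambda},\qquad N_\lambda=\sum_{j\in\mathcal S_\lambda}\sigma_{\hat\Gamma_j}^{-2}u_{j,\mathrm{BRIVW}},
\]
with $D_\lambda$ the denominator. By Lemma~\ref{lem:BRIVW_denominator}, $D_\lambda=\bigl(\sum_{j\in\mathcal S_\lambda}\sigma_{\hat\Gamma_j}^{-2}\gamma_j^2\bigr)(1+o_p(1))=\Theta_p(\kappa_\lambda p_\lambda)$.

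For consistency we first check that $N_\lambda$ is conditionally centered. The term $\gamma_j(u_{\hat\Gamma_{j,\mathrm{RB}}}-\beta u_{\hat\gamma_{j,\mathrm{RB}}})$ has mean zero by Lemma~\ref{lem:RB_Gamma} and the RIVW unbiasedness. The two quadratic terms have mean zero because $\hat\sigma_{\hat\gamma_{j,\mathrm{RB}}\hat\Gamma_{j,\mathrm{RB}}}$ and $\hat\sigma^2_{\hat\gamma_{j,\mathrm{RB}}}$ are conditionally unbiased for the post-selection covariance and variance (Section~\ref{sec:app_RB_covariance_properties}). The weights $\sigma_{\hat\Gamma_j}^{-2}\asymp v^{-2}$ are deterministic and uniformly bounded, so Lemma~\ref{lem:BRIVW_variance_order} gives
\[
\operatorname{Var}(N_\lambda\mid\mathcal S_\lambda)=\Theta_p\{p_\lambda(\kappa_\lambda+1)\}.
\]
Chebyshev's inequality then yields $N_\lambda=O_p(\sqrt{p_\lambda(\kappa_\lambda+1)})$. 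Dividing by $\kappa_\lambda p_\lambda$ gives
\[
O_p\bigl((\kappa_\lambda p_\lambda)^{-1/2}+(\kappa_\lambda\sqrt{p_\lambda})^{-1}\bigr)=o_p(1),
\]
because $\kappa_\lambda\sqrt{p_\lambda}\to\infty$ and $p_\lambda\to\infty$ together force $\kappa_\lambda p_\lambda\to\infty$. The same display shows that $V_{\mathrm{BRIVW}}=\Theta_p\{(\kappa_\lambda+1)/(\kappa_\lambda^2p_\lambda)\}$. Under Assumption~3 we add $\gamma_j\alpha_j$-type terms to $u_{j,\mathrm{BRIVW}}$. These are centered, independent of the selection-relevant noise, and of variance $O(\gamma_j^2v^2)$ since $\tau_\alpha=O(v)$, so the orders are unchanged. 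The lower bound on the variance can only increase when these terms are added.

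For asymptotic normality we apply the Lyapunov CLT (third or fourth moment) to the independent, centered summands $\xi_j=\sigma_{\hat\Gamma_j}^{-2}u_{j,\mathrm{BRIVW}}$, conditional on $\mathcal S_\lambda$. We bound $\mathbb E(|\xi_j|^4\mid j\in\mathcal S_\lambda)\lesssim v^{-8}(\gamma_j^4v^4+v^8)$ as follows.
\begin{itemize}
\item The moments $\mathbb E|u_{\hat\gamma_{j,\mathrm{RB}}}|^r=O(v^r)$ come from the orthogonal Gaussian representation.
\item The bound $\hat\sigma^2_{\hat\gamma_{j,\mathrm{RB}}}=O(v^2)$ uses the facts that $V_j$ has bounded second moment and that $\hat\sigma^2_{\hat\gamma_{j,\mathrm{RB}}}$ is a variance bounded by $\sigma_{\hat\gamma_j}^2(1+\eta^{-2})$.
\item The Gaussian $\varepsilon_j$ term is handled directly.
\item Pleiotropy contributes via the bounded third moment of $\alpha_j$; with pleiotropy one uses Lyapunov with exponent $3$ rather than $4$.
\end{itemize}
The Lyapunov ratio is then
\[
\frac{\sum_j(\gamma_j^4v^{-4}+1)}{\bigl(\sum_j(\gamma_j^2v^{-2}+1)\bigr)^2}\le\frac{\max_j\gamma_j^2v^{-2}\sum_j\gamma_j^2v^{-2}+p_\lambda}{\bigl(\sum_j\gamma_j^2v^{-2}+p_\lambda\bigr)^2}.
\]
This ratio is at most $\max_j\gamma_j^2/\sum_j\gamma_j^2+1/p_\lambda\to0$ by Assumption~4. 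Hence $N_\lambda/\operatorname{Var}(N_\lambda\mid\mathcal S_\lambda)^{1/2}\xrightarrow{d}N(0,1)$, and Slutsky together with the denominator result gives the stated limit with $V_{\mathrm{BRIVW}}$.

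The main obstacle is the uniform higher-moment control of the quadratic pieces $u_{\hat\gamma_{j,\mathrm{RB}}}^2-\hat\sigma^2_{\hat\gamma_{j,\mathrm{RB}}}$, and its counterpart through the representation $u_{\hat\Gamma_{j,\mathrm{RB}}}=\rho(\sigma_{\hat\Gamma_j}/\sigma_{\hat\gamma_j})u_{\hat\gamma_{j,\mathrm{RB}}}+\sigma_{\hat\Gamma_j}\sqrt{1-\rho^2}\varepsilon_j$, at fixed $\lambda$. I would get these by applying Jensen to conditional expectations of the Gaussian variable $M_j$ for the $u$-moments, and by using $0\le V_j\le\eta^2$-type bounds for the $\hat\sigma$-moments.
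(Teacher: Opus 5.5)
Your route matches the paper's proof. Consistency comes from the conditional centering of $u_{j,\mathrm{BRIVW}}$, the variance order of Lemma~\ref{lem:BRIVW_variance_order}, Chebyshev, and Lemma~\ref{lem:BRIVW_denominator}, which also give the order of $V_{\mathrm{BRIVW}}$. Normality comes from a conditional Lyapunov CLT followed by Slutsky. Using order four when $\alpha_j=0$ is an inessential variant; the paper uses order three throughout, with ratio bound $C(\max_j\gamma_j^2/\sum_j\gamma_j^2)^{1/2}+O_p(p_\lambda^{-1/2})$. Your ratio bound $\max_j\gamma_j^2/\sum_j\gamma_j^2+1/p_\lambda$ is correct. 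One small point: the pleiotropic addition to $u_{j,\mathrm{BRIVW}}$ is $\alpha_j\hat\gamma_{j,\mathrm{RB}}$. Its variance is $\tau_\alpha^2(\gamma_j^2+\sigma^2_{\hat\gamma_{j,\mathrm{RB}}})=O(\gamma_j^2v^2+v^4)$, not only $O(\gamma_j^2v^2)$, though the orders are unaffected.

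The justification you give for the moments of $\hat\sigma^{\,2}_{\hat\gamma_{j,\mathrm{RB}}}$ is wrong as stated, and this is the step you yourself flag as the main obstacle.

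\emph{$V_j$ is not bounded by $\eta^2$.} Here $V_j=\operatorname{Var}(Z_j\mid\hat\gamma_j,\,j\in\mathcal S_\lambda)$. Selection removes a middle interval of $Z_j$, which inflates rather than shrinks its variance. At $\hat\gamma_j=0$ one gets $V_j=\eta^2\{1+c\,\phi(c)/\Phi(-c)\}$ with $c=\lambda/\eta$, which is about $68\eta^2$ at $\lambda=4.06$, $\eta=0.5$.

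\emph{The estimator is not a variance.} Since $\hat\sigma^{\,2}_{\hat\gamma_{j,\mathrm{RB}}}=\sigma_{\hat\gamma_j}^2(1+\eta^{-2}-\eta^{-4}V_j)$, it can be far below zero. Its upper bound $\sigma_{\hat\gamma_j}^2(1+\eta^{-2})$ therefore does not control $|\hat\sigma^{\,2}_{\hat\gamma_{j,\mathrm{RB}}}|$.

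\emph{A bounded second moment of $V_j$ is not enough.} It only yields $\mathbb E\{(\hat\sigma^{\,2}_{\hat\gamma_{j,\mathrm{RB}}})^2\mid j\in\mathcal S_\lambda\}=O(v^4)$. Your fourth-order Lyapunov bound needs $\mathbb E(V_j^4\mid j\in\mathcal S_\lambda)=O(1)$, and the third-order version needs $V_j^3$.

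\emph{Repair.} Use the device you already apply to the $u$-moments. Since $Z_j-\mathbb E(Z_j\mid\hat\gamma_j,j\in\mathcal S_\lambda)=-\eta^2\{M_j-\mathbb E(M_j\mid\hat\gamma_j,j\in\mathcal S_\lambda)\}$, conditional Jensen gives $\mathbb E(V_j^r\mid j\in\mathcal S_\lambda)\le 4^r\eta^{4r}\mathbb E(M_j^{2r})=O(1)$ uniformly in $j$, for every fixed $r$. This uses that $M_j$ is independent of selection. Alternatively, for fixed $\lambda$, $0\le V_j\le\mathbb E(Z_j^2\mid\hat\gamma_j,j\in\mathcal S_\lambda)\le\eta^2/\{2\Phi(-\lambda/\eta)\}$, because the conditional selection probability is minimized at $\hat\gamma_j=0$. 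That gives a deterministic, $\lambda$-dependent $O(v^2)$ bound on $\hat\sigma^{\,2}_{\hat\gamma_{j,\mathrm{RB}}}$.

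With either fix, your per-SNP moment bound $\lesssim \gamma_j^4v^{-4}+1$ (or its cubic analogue) holds, and the rest of your argument goes through.
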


\begin{proof}
We first establish consistency. By conditional unbiasedness,
\[
\mathbb E\left(
u_{j,\mathrm{BRIVW}}
\mid
j\in\mathcal S_\lambda
\right)
=
0.
\]
It follows from Lemma~\ref{lem:BRIVW_variance_order} and
Assumption~2 that
\[
\sum_{j\in\mathcal S_\lambda}
\sigma_{\hat{\Gamma}_j}^{-2}
u_{j,\mathrm{BRIVW}}
=
O_p\left\{
\sqrt{
(\kappa_\lambda+1)p_\lambda
}
\right\}.
\]
Meanwhile, Lemma~\ref{lem:BRIVW_denominator} gives
\[
\sum_{j\in\mathcal S_\lambda}
\sigma_{\hat{\Gamma}_j}^{-2}
\left(
\hat{\gamma}_{j,\mathrm{RB}}^2
-
\hat{\sigma}_{\hat{\gamma}_{j,\mathrm{RB}}}^{\,2}
\right)
=
\Theta_p\left(
\kappa_\lambda p_\lambda
\right).
\]
Therefore,
\[
\hat{\beta}_{\mathrm{BRIVW}}-\beta
=
O_p\left\{
\frac{
\sqrt{\kappa_\lambda+1}
}{
\kappa_\lambda\sqrt{p_\lambda}
}
\right\}.
\]
Since
\[
\frac{\kappa_\lambda+1}
{\kappa_\lambda^2 p_\lambda}
=
\frac{1}{\kappa_\lambda p_\lambda}
+
\frac{1}{\kappa_\lambda^2 p_\lambda}
\xrightarrow{p}0
\]
under
\(p_\lambda\xrightarrow{p}\infty\) and
\(\kappa_\lambda\sqrt{p_\lambda}\xrightarrow{p}\infty\),
we obtain
\[
\hat{\beta}_{\mathrm{BRIVW}}
\xrightarrow{p}\beta.
\]

We next establish asymptotic normality. Conditional on
\(\mathcal S_\lambda\), the terms
\(\sigma_{\hat{\Gamma}_j}^{-2}u_{j,\mathrm{BRIVW}}\) are independent
and centered. By Lemma~\ref{lem:BRIVW_variance_order},
\[
\operatorname{Var}\left[
\sum_{j\in\mathcal S_\lambda}
\sigma_{\hat{\Gamma}_j}^{-2}
u_{j,\mathrm{BRIVW}}
\;\middle|\;
\mathcal S_\lambda
\right]
=
\Theta_p\left\{
(\kappa_\lambda+1)p_\lambda
\right\}.
\]
The orthogonal Gaussian representation and the uniform post-selection
moment bounds established above further imply
\[
\mathbb E\left[
\left.
\left|
\sigma_{\hat{\Gamma}_j}^{-2}
u_{j,\mathrm{BRIVW}}
\right|^3
\,\right|\,
j\in\mathcal S_\lambda
\right]
\leq
C\left(
\frac{|\gamma_j|^3}{v^3}+1
\right)
\]
uniformly over \(j\), for some constant \(C<\infty\). Hence,
\[
\begin{aligned}
&\frac{
\displaystyle
\sum_{j\in\mathcal S_\lambda}
\mathbb E\left[
\left.
\left|
\sigma_{\hat{\Gamma}_j}^{-2}
u_{j,\mathrm{BRIVW}}
\right|^3
\,\right|\,
\mathcal S_\lambda
\right]
}{
\displaystyle
\left\{
\operatorname{Var}\left[
\sum_{j\in\mathcal S_\lambda}
\sigma_{\hat{\Gamma}_j}^{-2}
u_{j,\mathrm{BRIVW}}
\mid
\mathcal S_\lambda
\right]
\right\}^{3/2}
}
\leq
C
\left(
\frac{
\max_{j\in\mathcal S_\lambda}\gamma_j^2
}{
\sum_{j\in\mathcal S_\lambda}\gamma_j^2
}
\right)^{1/2}
\left(
\frac{
\kappa_\lambda
}{
\kappa_\lambda+1
}
\right)^{3/2}
+
O_p\left(
p_\lambda^{-1/2}
\right).
\end{aligned}
\]
By Assumption~4,
\[
\frac{
\max_{j\in\mathcal S_\lambda}\gamma_j^2
}{
\sum_{j\in\mathcal S_\lambda}\gamma_j^2
}
\xrightarrow{p}0,
\]
and \(p_\lambda\xrightarrow{p}\infty\). Thus, the conditional
Lyapunov condition of order three holds, and
\[
\frac{
\displaystyle
\sum_{j\in\mathcal S_\lambda}
\sigma_{\hat{\Gamma}_j}^{-2}
u_{j,\mathrm{BRIVW}}
}{
\displaystyle
\sqrt{
\operatorname{Var}\left[
\sum_{j\in\mathcal S_\lambda}
\sigma_{\hat{\Gamma}_j}^{-2}
u_{j,\mathrm{BRIVW}}
\mid
\mathcal S_\lambda
\right]
}
}
\xrightarrow{d}N(0,1).
\]

Together with the denominator order and
Lemma~\ref{lem:BRIVW_variance_order}, this gives
\[
\begin{aligned}
V_{\mathrm{BRIVW}}
=
\frac{
\Theta_p\left\{
(\kappa_\lambda+1)p_\lambda
\right\}
}{
\Theta_p\left(
\kappa_\lambda^2p_\lambda^2
\right)
}
=
\Theta_p\left\{
\frac{
\kappa_\lambda+1
}{
\kappa_\lambda^2 p_\lambda
}
\right\}.
\end{aligned}
\]
Finally,
\[
\begin{aligned}
V_{\mathrm{BRIVW}}^{-1/2}
\left(
\hat{\beta}_{\mathrm{BRIVW}}-\beta
\right)
=
\frac{
\displaystyle
\sum_{j\in\mathcal S_\lambda}
\sigma_{\hat{\Gamma}_j}^{-2}
u_{j,\mathrm{BRIVW}}
}{
\displaystyle
\sqrt{
\operatorname{Var}\left[
\sum_{j\in\mathcal S_\lambda}
\sigma_{\hat{\Gamma}_j}^{-2}
u_{j,\mathrm{BRIVW}}
\mid
\mathcal S_\lambda
\right]
}
}
\frac{
\displaystyle
\sum_{j\in\mathcal S_\lambda}
\sigma_{\hat{\Gamma}_j}^{-2}\gamma_j^2
}{
\displaystyle
\sum_{j\in\mathcal S_\lambda}
\sigma_{\hat{\Gamma}_j}^{-2}
\left(
\hat{\gamma}_{j,\mathrm{RB}}^2
-
\hat{\sigma}_{\hat{\gamma}_{j,\mathrm{RB}}}^{\,2}
\right)
}.
\end{aligned}
\]
The first factor converges in distribution to \(N(0,1)\), whereas the
second converges in probability to one by
Lemma~\ref{lem:BRIVW_denominator}. Slutsky's theorem therefore yields
\[
V_{\mathrm{BRIVW}}^{-1/2}
\left(
\hat{\beta}_{\mathrm{BRIVW}}-\beta
\right)
\xrightarrow{d}N(0,1).
\]
\end{proof}

Lemma~\ref{lem:BRIVW_asymptotic} establishes the consistency and
asymptotic normality parts of Theorem~1. Consistency of the variance
estimator is established in the next subsection.

\subsection{Consistency of the BRIVW variance estimator}

It remains to establish consistency of the variance estimator in
Theorem~1. We first consider the oracle estimator obtained by
evaluating the estimating residual at the true value of \(\beta\), and
then show that replacing \(\beta\) by
\(\hat{\beta}_{\mathrm{BRIVW}}\) is asymptotically negligible.

\begin{lemma}
\label{lem:BRIVW_variance_consistency}
Suppose that Assumptions~1--4 hold,
\(p_\lambda\xrightarrow{p}\infty\), and
$
\kappa_\lambda\sqrt{p_\lambda}
\xrightarrow{p}\infty.
$
Then
\[
\frac{
\widehat V_{\mathrm{BRIVW}}
}{
V_{\mathrm{BRIVW}}
}
\xrightarrow{p}1.
\]
\end{lemma}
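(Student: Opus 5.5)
Write $D=\sum_{j\in\mathcal S_\lambda}\sigma_{\hat{\Gamma}_j}^{-2}(\hat{\gamma}_{j,\mathrm{RB}}^2-\hat{\sigma}^{\,2}_{\hat{\gamma}_{j,\mathrm{RB}}})$ for the denominator and $W=\operatorname{Var}(\sum_{j\in\mathcal S_\lambda}\sigma_{\hat{\Gamma}_j}^{-2}u_{j,\mathrm{BRIVW}}\mid\mathcal S_\lambda)$ for the oracle numerator variance. Then
\[
\frac{\widehat V_{\mathrm{BRIVW}}}{V_{\mathrm{BRIVW}}}
=\frac{\sum_{j\in\mathcal S_\lambda}\sigma_{\hat{\Gamma}_j}^{-4}\hat u_j^2}{W}\cdot
\left(\frac{\sum_{j\in\mathcal S_\lambda}\sigma_{\hat{\Gamma}_j}^{-2}\gamma_j^2}{D}\right)^2,
\]
where $\hat u_j$ denotes $u_{j,\mathrm{BRIVW}}$ with $\beta$ replaced by $\hat\beta_{\mathrm{BRIVW}}$. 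By Lemma~\ref{lem:BRIVW_denominator}, the second factor tends to one in probability. It therefore suffices to establish two things: (i) the oracle sum $T=\sum_{j\in\mathcal S_\lambda}\sigma_{\hat{\Gamma}_j}^{-4}u_{j,\mathrm{BRIVW}}^2$ satisfies $T/W\xrightarrow{p}1$; (ii) $\sum_{j\in\mathcal S_\lambda}\sigma_{\hat{\Gamma}_j}^{-4}(\hat u_j^2-u_{j,\mathrm{BRIVW}}^2)=o_p(W)$. Throughout we use $W=\Theta_p\{(\kappa_\lambda+1)p_\lambda\}$, from Lemma~\ref{lem:BRIVW_variance_order} rescaled by $\sigma_{\hat{\Gamma}_j}^{-2}=\Theta(v^{-2})$.

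For (i), note that $\mathbb E(u_{j,\mathrm{BRIVW}}\mid j\in\mathcal S_\lambda)=0$ and the summands are conditionally independent, so $\mathbb E(T\mid\mathcal S_\lambda)=W$. It then suffices to show $\operatorname{Var}(T\mid\mathcal S_\lambda)=o_p(W^2)$. The orthogonal Gaussian representation from the proof of Lemma~\ref{lem:BRIVW_variance_order} gives uniform moment bounds for every fixed order. Applying them at order four, as in the third-moment bound of Lemma~\ref{lem:BRIVW_asymptotic}, yields $\mathbb E(\sigma_{\hat{\Gamma}_j}^{-8}u_{j,\mathrm{BRIVW}}^4\mid j\in\mathcal S_\lambda)\le C(\gamma_j^4/v^4+1)$. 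Hence
\[
\frac{\operatorname{Var}(T\mid\mathcal S_\lambda)}{W^2}\le C\,\frac{\sum_j\gamma_j^4/v^4+p_\lambda}{\{(\kappa_\lambda+1)p_\lambda\}^2}
\le C\left[\frac{\max_j\gamma_j^2}{\sum_j\gamma_j^2}\cdot\frac{\kappa_\lambda^2p_\lambda}{(\kappa_\lambda+1)^2p_\lambda}+\frac{1}{p_\lambda}\right],
\]
using $\sum_j\gamma_j^4\le\max_j\gamma_j^2\sum_j\gamma_j^2$ and $\sum_j\gamma_j^2/v^2=\Theta(\kappa_\lambda p_\lambda)$. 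This tends to zero by Assumption~4 and $p_\lambda\to\infty$. Chebyshev's inequality then gives $T/W\xrightarrow{p}1$. Under balanced pleiotropy the extra term $\alpha_j(\gamma_j+u_{\hat{\gamma}_{j,\mathrm{RB}}})$ has the same moment order because $\tau_\alpha=O(v)$, so the same bound applies.

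For (ii), write $\hat u_j=u_{j,\mathrm{BRIVW}}-\delta\, d_j$ with $\delta=\hat\beta_{\mathrm{BRIVW}}-\beta$ and $d_j=\hat{\gamma}_{j,\mathrm{RB}}^2-\hat{\sigma}^{\,2}_{\hat{\gamma}_{j,\mathrm{RB}}}$. Then
\[
\sum_j\sigma_{\hat{\Gamma}_j}^{-4}(\hat u_j^2-u_{j,\mathrm{BRIVW}}^2)=-2\delta\sum_j\sigma_{\hat{\Gamma}_j}^{-4}u_{j,\mathrm{BRIVW}}d_j+\delta^2\sum_j\sigma_{\hat{\Gamma}_j}^{-4}d_j^2 .
\]
Using $d_j=\gamma_j^2+O_p$-terms and the moment bounds, I expect $\sum_j\sigma_{\hat{\Gamma}_j}^{-4}d_j^2=O_p(\sum_j\gamma_j^4/v^4+\kappa_\lambda p_\lambda+p_\lambda)$. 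From Lemma~\ref{lem:BRIVW_asymptotic}, $\delta^2=O_p\{(\kappa_\lambda+1)/(\kappa_\lambda^2p_\lambda)\}$. The product, divided by $W$, is then $O_p(\sum_j\gamma_j^4/(v^4\kappa_\lambda^2p_\lambda^2))+O_p(1/(\kappa_\lambda p_\lambda))+O_p(1/(\kappa_\lambda^2p_\lambda))$. The first term is $O_p(\max_j\gamma_j^2/\sum_j\gamma_j^2)=o_p(1)$ by Assumption~4, and the others vanish because $\kappa_\lambda\sqrt{p_\lambda}\to\infty$. The cross term is handled by Cauchy--Schwarz: it is at most $2\{\delta^2\sum_j\sigma_{\hat{\Gamma}_j}^{-4}d_j^2\}^{1/2}T^{1/2}=o_p(W^{1/2})\,O_p(W^{1/2})=o_p(W)$.

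The main obstacle is the bound on $\sum_j\sigma_{\hat{\Gamma}_j}^{-4}d_j^2$ when $\kappa_\lambda\to0$, where the weak-instrument regime makes $\delta$ large. There I would carefully expand $d_j=\gamma_j^2+2\gamma_ju_{\hat{\gamma}_{j,\mathrm{RB}}}+(u_{\hat{\gamma}_{j,\mathrm{RB}}}^2-\hat{\sigma}^{\,2}_{\hat{\gamma}_{j,\mathrm{RB}}})$. The last piece contributes $O_p(p_\lambda)$, and I would check that $\delta^2p_\lambda/W=O_p(1/(\kappa_\lambda^2p_\lambda))\to0$, which is exactly what the strength condition provides.
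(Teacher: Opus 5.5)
Your factorization of $\widehat V_{\mathrm{BRIVW}}/V_{\mathrm{BRIVW}}$ and your step (ii) follow the paper's argument essentially verbatim. You use the same bound $O_p(\sum_j\gamma_j^4/v^4+\kappa_\lambda p_\lambda+p_\lambda)$ on $\sum_j\sigma_{\hat\Gamma_j}^{-4}d_j^2$, the same appeal to Assumption~4 and to $\kappa_\lambda\sqrt{p_\lambda}\xrightarrow{p}\infty$, and the same Cauchy--Schwarz bound on the cross term.

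The real difference is the oracle step (i). The paper never uses fourth moments there. It splits $X_{j,\lambda}^2=\sigma_{\hat\Gamma_j}^{-4}u_{j,\mathrm{BRIVW}}^2$ at $|X_{j,\lambda}|=\delta s_\lambda$. It then bounds the conditional variance of the truncated sum by $\delta L_\lambda s_\lambda^4$ and the conditional mean of the tail part by $L_\lambda s_\lambda^2/\delta$. Here $L_\lambda$ is the third-order Lyapunov ratio already shown to vanish in the proof of asymptotic normality. You instead bound $\operatorname{Var}(T\mid\mathcal S_\lambda)$ through conditional fourth moments and apply Chebyshev's inequality. When $\alpha_j=0$ this is valid and somewhat more direct, because the orthogonal Gaussian representation controls every fixed moment. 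For the $\hat\sigma^{\,2}_{\hat\gamma_{j,\mathrm{RB}}}$ part you need $\mathbb E(V_j^4\mid j\in\mathcal S_\lambda)=O(1)$, which follows from the same Jensen argument applied to $M_j^8$.

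What does not go through as written is your one-line extension to balanced pleiotropy. The extra summand $\sigma_{\hat\Gamma_j}^{-2}\alpha_j(\gamma_j+u_{\hat\gamma_{j,\mathrm{RB}}})$ enters your fourth-moment bound through $\mathbb E(\alpha_j^4)$. But $\tau_\alpha=O(v)$ controls only $\mathbb E(\alpha_j^2)$, and Assumption~3 supplies nothing beyond a bounded third moment. So $\operatorname{Var}(T\mid\mathcal S_\lambda)$ need not even be finite, and Chebyshev's inequality then yields nothing. This is exactly what the paper's truncation route buys: it needs only $L_\lambda\xrightarrow{p}0$, i.e.\ third moments, which is the level of moment control Assumption~3 provides. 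To close the gap, either replace your Chebyshev step by that truncation (weak-law) argument, or add the condition $\mathbb E(\alpha_j^4)=O(v^4)$. With that change the rest of your proof stands.
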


\begin{proof}
Conditional on \(\mathcal S_\lambda\), the
\(u_{j,\mathrm{BRIVW}}\)'s are independent and centered. Hence, an
oracle moment estimator of the conditional numerator variance is
\[
\widetilde{\operatorname{Var}}\left[
\sum_{j\in\mathcal S_\lambda}
\sigma_{\hat{\Gamma}_j}^{-2}
u_{j,\mathrm{BRIVW}}
\;\middle|\;
\mathcal S_\lambda
\right]
=
\sum_{j\in\mathcal S_\lambda}
\sigma_{\hat{\Gamma}_j}^{-4}
u_{j,\mathrm{BRIVW}}^2.
\]
Let
\[
X_{j,\lambda}
=
\sigma_{\hat{\Gamma}_j}^{-2}
u_{j,\mathrm{BRIVW}},
\qquad
s_\lambda^2
=
\operatorname{Var}\left[
\sum_{j\in\mathcal S_\lambda}
X_{j,\lambda}
\;\middle|\;
\mathcal S_\lambda
\right].
\]
By Lemma~\ref{lem:BRIVW_variance_order},
\[
s_\lambda^2
=
\Theta_p\left\{
(\kappa_\lambda+1)p_\lambda
\right\}.
\]
Moreover, the third-moment bound established in
Lemma~\ref{lem:BRIVW_asymptotic} gives
\[
L_\lambda
:=
\frac{
\displaystyle
\sum_{j\in\mathcal S_\lambda}
\mathbb E\left(
|X_{j,\lambda}|^3
\mid
\mathcal S_\lambda
\right)
}{
s_\lambda^3
}
\xrightarrow{p}0.
\]
For any fixed \(\delta>0\), decompose
\[
X_{j,\lambda}^2
=
X_{j,\lambda}^2
\mathbf 1\left\{
|X_{j,\lambda}|\leq\delta s_\lambda
\right\}
+
X_{j,\lambda}^2
\mathbf 1\left\{
|X_{j,\lambda}|>\delta s_\lambda
\right\}.
\]
For the truncated part, conditional independence gives
\[
\begin{aligned}
\frac{1}{s_\lambda^4}
\operatorname{Var}\left[
\sum_{j\in\mathcal S_\lambda}
X_{j,\lambda}^2
\mathbf 1\left\{
|X_{j,\lambda}|\leq\delta s_\lambda
\right\}
\;\middle|\;
\mathcal S_\lambda
\right]
\leq
\frac{1}{s_\lambda^4}
\sum_{j\in\mathcal S_\lambda}
\mathbb E\left[
X_{j,\lambda}^4
\mathbf 1\left\{
|X_{j,\lambda}|\leq\delta s_\lambda
\right\}
\;\middle|\;
\mathcal S_\lambda
\right]
\leq
\delta L_\lambda
=
o_p(1).
\end{aligned}
\]
Hence, by Chebyshev's inequality, the centered truncated sum is
\(o_p(s_\lambda^2)\).
For the tail part,
\[
\begin{aligned}
\frac{1}{s_\lambda^2}
\sum_{j\in\mathcal S_\lambda}
\mathbb E\left[
X_{j,\lambda}^2
\mathbf 1\left\{
|X_{j,\lambda}|>\delta s_\lambda
\right\}
\;\middle|\;
\mathcal S_\lambda
\right]
\leq
\frac{1}{\delta}
L_\lambda
=
o_p(1).
\end{aligned}
\]
Markov's inequality therefore shows that the tail contribution is also
\(o_p(s_\lambda^2)\). Since
\[
\sum_{j\in\mathcal S_\lambda}
\mathbb E\left(
X_{j,\lambda}^2
\mid
\mathcal S_\lambda
\right)
=
s_\lambda^2,
\]
we obtain
\[
\frac{
\sum_{j\in\mathcal S_\lambda}
X_{j,\lambda}^2
}{
s_\lambda^2
}
\xrightarrow{p}1.
\]
Equivalently,
\[
\frac{
\widetilde{\operatorname{Var}}\left[
\displaystyle
\sum_{j\in\mathcal S_\lambda}
\sigma_{\hat{\Gamma}_j}^{-2}
u_{j,\mathrm{BRIVW}}
\;\middle|\;
\mathcal S_\lambda
\right]
}{
\operatorname{Var}\left[
\displaystyle
\sum_{j\in\mathcal S_\lambda}
\sigma_{\hat{\Gamma}_j}^{-2}
u_{j,\mathrm{BRIVW}}
\;\middle|\;
\mathcal S_\lambda
\right]
}
\xrightarrow{p}1.
\]
Together with
\(s_\lambda^2
=
\Theta_p\{(\kappa_\lambda+1)p_\lambda\}\),
this further implies
\[
\widetilde{\operatorname{Var}}\left[
\sum_{j\in\mathcal S_\lambda}
\sigma_{\hat{\Gamma}_j}^{-2}
u_{j,\mathrm{BRIVW}}
\;\middle|\;
\mathcal S_\lambda
\right]
=
\Theta_p\left\{
(\kappa_\lambda+1)p_\lambda
\right\}.
\]
We next replace \(\beta\) by
\(\hat{\beta}_{\mathrm{BRIVW}}\). Expanding the feasible squared
residuals gives
\[
\begin{aligned}
&\widehat{\operatorname{Var}}\left[
\sum_{j\in\mathcal S_\lambda}
\sigma_{\hat{\Gamma}_j}^{-2}
u_{j,\mathrm{BRIVW}}
\;\middle|\;
\mathcal S_\lambda
\right] \\
&=
\widetilde{\operatorname{Var}}\left[
\sum_{j\in\mathcal S_\lambda}
\sigma_{\hat{\Gamma}_j}^{-2}
u_{j,\mathrm{BRIVW}}
\;\middle|\;
\mathcal S_\lambda
\right]
+
\left(
\beta-\hat{\beta}_{\mathrm{BRIVW}}
\right)^2
\sum_{j\in\mathcal S_\lambda}
\sigma_{\hat{\Gamma}_j}^{-4}
\left(
\hat{\gamma}_{j,\mathrm{RB}}^2
-
\hat{\sigma}_{\hat{\gamma}_{j,\mathrm{RB}}}^{\,2}
\right)^2
\\
&\quad+
2\left(
\beta-\hat{\beta}_{\mathrm{BRIVW}}
\right)
\sum_{j\in\mathcal S_\lambda}
\sigma_{\hat{\Gamma}_j}^{-4}
u_{j,\mathrm{BRIVW}}
\left(
\hat{\gamma}_{j,\mathrm{RB}}^2
-
\hat{\sigma}_{\hat{\gamma}_{j,\mathrm{RB}}}^{\,2}
\right).
\end{aligned}
\]
Using the decomposition in the proof of
Lemma~\ref{lem:BRIVW_denominator} and the fixed-order
post-selection moment bounds established above,
\[
\begin{aligned}
&\sum_{j\in\mathcal S_\lambda}
\sigma_{\hat{\Gamma}_j}^{-4}
\left(
\hat{\gamma}_{j,\mathrm{RB}}^2
-
\hat{\sigma}_{\hat{\gamma}_{j,\mathrm{RB}}}^{\,2}
\right)^2
=
O_p\left\{
\sum_{j\in\mathcal S_\lambda}
\frac{\gamma_j^4}{v^4}
+
\kappa_\lambda p_\lambda
+
p_\lambda
\right\}.
\end{aligned}
\]
Assumption~4 gives
\[
\frac{
\displaystyle
\sum_{j\in\mathcal S_\lambda}\gamma_j^4/v^4
}{
\kappa_\lambda^2 p_\lambda^2
}
\leq
C
\frac{
\max_{j\in\mathcal S_\lambda}\gamma_j^2
}{
\sum_{j\in\mathcal S_\lambda}\gamma_j^2
}
\xrightarrow{p}0.
\]
Furthermore,
\[
\frac{\kappa_\lambda p_\lambda}
{\kappa_\lambda^2p_\lambda^2}
=
\frac{1}{\kappa_\lambda p_\lambda}
\xrightarrow{p}0,
\qquad
\frac{p_\lambda}
{\kappa_\lambda^2p_\lambda^2}
=
\frac{1}{\kappa_\lambda^2p_\lambda}
\xrightarrow{p}0
\]
under
\(p_\lambda\xrightarrow{p}\infty\) and
\(\kappa_\lambda\sqrt{p_\lambda}\xrightarrow{p}\infty\).
Hence,
\[
\sum_{j\in\mathcal S_\lambda}
\sigma_{\hat{\Gamma}_j}^{-4}
\left(
\hat{\gamma}_{j,\mathrm{RB}}^2
-
\hat{\sigma}_{\hat{\gamma}_{j,\mathrm{RB}}}^{\,2}
\right)^2
=
o_p\left(
\kappa_\lambda^2p_\lambda^2
\right).
\]
From Lemma~\ref{lem:BRIVW_asymptotic},
\[
\hat{\beta}_{\mathrm{BRIVW}}-\beta
=
O_p\left\{
\frac{
\sqrt{\kappa_\lambda+1}
}{
\kappa_\lambda\sqrt{p_\lambda}
}
\right\}.
\]
It follows that
\[
\begin{aligned}
\left(
\beta-\hat{\beta}_{\mathrm{BRIVW}}
\right)^2
\sum_{j\in\mathcal S_\lambda}
\sigma_{\hat{\Gamma}_j}^{-4}
\left(
\hat{\gamma}_{j,\mathrm{RB}}^2
-
\hat{\sigma}_{\hat{\gamma}_{j,\mathrm{RB}}}^{\,2}
\right)^2
=
o_p\left\{
(\kappa_\lambda+1)p_\lambda
\right\}.
\end{aligned}
\]
For the cross term, the Cauchy--Schwarz inequality gives
\[
\begin{aligned}
&\left|
\left(
\beta-\hat{\beta}_{\mathrm{BRIVW}}
\right)
\sum_{j\in\mathcal S_\lambda}
\sigma_{\hat{\Gamma}_j}^{-4}
u_{j,\mathrm{BRIVW}}
\left(
\hat{\gamma}_{j,\mathrm{RB}}^2
-
\hat{\sigma}_{\hat{\gamma}_{j,\mathrm{RB}}}^{\,2}
\right)
\right|
\\
&\quad\leq
\left|
\beta-\hat{\beta}_{\mathrm{BRIVW}}
\right|
\left[
\sum_{j\in\mathcal S_\lambda}
\sigma_{\hat{\Gamma}_j}^{-4}
u_{j,\mathrm{BRIVW}}^2
\right]^{1/2}
\left[
\sum_{j\in\mathcal S_\lambda}
\sigma_{\hat{\Gamma}_j}^{-4}
\left(
\hat{\gamma}_{j,\mathrm{RB}}^2
-
\hat{\sigma}_{\hat{\gamma}_{j,\mathrm{RB}}}^{\,2}
\right)^2
\right]^{1/2}
=
o_p\left\{
(\kappa_\lambda+1)p_\lambda
\right\}.
\end{aligned}
\]
Therefore,
\[
\begin{aligned}
&\widehat{\operatorname{Var}}\left[
\sum_{j\in\mathcal S_\lambda}
\sigma_{\hat{\Gamma}_j}^{-2}
u_{j,\mathrm{BRIVW}}
\;\middle|\;
\mathcal S_\lambda
\right]
=
\widetilde{\operatorname{Var}}\left[
\sum_{j\in\mathcal S_\lambda}
\sigma_{\hat{\Gamma}_j}^{-2}
u_{j,\mathrm{BRIVW}}
\;\middle|\;
\mathcal S_\lambda
\right]
\{1+o_p(1)\}.
\end{aligned}
\]
Together with the oracle ratio consistency above,
\[
\frac{
\widehat{\operatorname{Var}}\left[
\displaystyle
\sum_{j\in\mathcal S_\lambda}
\sigma_{\hat{\Gamma}_j}^{-2}
u_{j,\mathrm{BRIVW}}
\;\middle|\;
\mathcal S_\lambda
\right]
}{
\operatorname{Var}\left[
\displaystyle
\sum_{j\in\mathcal S_\lambda}
\sigma_{\hat{\Gamma}_j}^{-2}
u_{j,\mathrm{BRIVW}}
\;\middle|\;
\mathcal S_\lambda
\right]
}
\xrightarrow{p}1.
\]
Combining the preceding numerator variance consistency with
Lemma~\ref{lem:BRIVW_denominator} yields
\[
\frac{
\widehat V_{\mathrm{BRIVW}}
}{
V_{\mathrm{BRIVW}}
}
\xrightarrow{p}1.
\]
\end{proof}

Together with Lemma~\ref{lem:BRIVW_asymptotic}, this completes the
proof of Theorem~1.

\counterwithout{figure}{section}
\setcounter{figure}{0}
\renewcommand{\thefigure}{S\arabic{figure}}
\renewcommand{\theHfigure}{S\arabic{figure}}

\section{Additional Simulation Studies}
\label{sec:app_simulation_BRIVW}

\subsection{Numerical illustration of the outcome-side post-selection shift}
\label{sec:app_BRIVW_illustrations}

To illustrate the outcome-side post-selection effect characterized in
Section~4.1, we examine the conditional distributions of
\(\hat{\Gamma}_j\) and its Rao--Blackwellized counterpart
\(\hat{\Gamma}_{j,\mathrm{RB}}\) under different levels of instrument
strength and sample structure.

We generate \(\hat{\gamma}_j/\sigma_{\hat{\gamma}_j}\) and
\(\hat{\Gamma}_j/\sigma_{\hat{\Gamma}_j}\) from a bivariate normal
distribution with respective means
\(\gamma_j/\sigma_{\hat{\gamma}_j}\) and
\(\Gamma_j/\sigma_{\hat{\Gamma}_j}\), unit marginal variances, and
correlation coefficient \(\rho\).
We consider
$
\gamma_j/\sigma_{\hat{\gamma}_j}
\in
\{0.1\lambda,\,0.5\lambda,\,2\lambda\},
$
representing weak, moderate, and strong IVs, respectively, and
set $\Gamma_j/\sigma_{\hat{\Gamma}_j}=\gamma_j/\sigma_{\hat{\gamma}_j}$.
Here,
\(\lambda=4.06\),
\(\eta=0.5\), and
\(\rho\in\{-0.3,-0.1,0,0.1,0.3\}\).
SNPs are selected using the randomized selection rule defined in Section~3.

\setcounter{figure}{0}
\renewcommand{\thefigure}{S\arabic{figure}}
\renewcommand{\figurename}{Web Figure}

\begin{figure}[!htbp]
  \centering
  \includegraphics[width=\linewidth]{FigS1.pdf}
  \caption{\small
  Post-selection distributions of
  \(\hat{\Gamma}_j\) and \(\hat{\Gamma}_{j,\mathrm{RB}}\)
  under different levels of sample structure and instrument strength.
  Weak, moderate, and strong instruments correspond to
  \(\gamma_j/\sigma_{\hat{\gamma}_j}
  \in\{0.1\lambda,0.5\lambda,2\lambda\}\), where
  \(\lambda=4.06\).
  The red dotted lines indicate the true value of \(\Gamma_j\).
  }
  \label{fig:app_postselection_distribution}
\end{figure}

Web Figure~\ref{fig:app_postselection_distribution} confirms the pattern predicted by Section~4.1.
When \(\rho=0\), the post-selection distribution of
\(\hat{\Gamma}_j\) remains centered at \(\Gamma_j\).
When \(\rho\neq0\), selection induces an outcome-side mean shift that
becomes more pronounced as \(|\rho|\) increases and is largest for weaker
instruments.
In contrast, \(\hat{\Gamma}_{j,\mathrm{RB}}\) remains well centered around
\(\Gamma_j\) across all settings.

\subsection{Sensitivity to the randomization scale \(\eta\)}

\label{sec:app_BRIVW_eta_sensitivity}

We examine the sensitivity of BRIVW to the randomization scale \(\eta\).
The data-generating mechanism follows the main simulation, with
\(\beta=0.2\), \(w_1=0\), and
\(\rho\in\{-0.3,-0.1,0,0.1,0.3\}\).
To cover a range of instrument strengths, we consider
$
\pi_x\in\{0.005,0.01,0.03,0.05\},
$
corresponding to expected exposure heritabilities of approximately
\(0.05\), \(0.10\), \(0.30\), and \(0.50\), respectively.
We fix
$
\lambda=4.06
$
and vary
$
\eta\in\{0.4,0.5,0.6,0.8,1\}.
$
All other settings follow the main simulation.
Each scenario is evaluated using \(2{,}000\) replicates,
with performance summarized by empirical bias and standard deviation.

% ============================================================
% Web Figure S2
% Sensitivity to eta
% ============================================================

\begin{figure}[!htbp]
  \centering
  \includegraphics[width=\linewidth]{FigS2.pdf}
  \caption{\small
  Sensitivity of BRIVW to the randomization scale \(\eta\).
  Empirical bias and standard deviation are shown across
  different values of \(\eta\), exposure heritability, and
  sample-structure correlation, with the selection threshold fixed at
  \(\lambda=4.06\).
  }
  \label{fig:app_eta_sensitivity}
\end{figure}

Web Figure~\ref{fig:app_eta_sensitivity} shows that BRIVW is insensitive
to the choice of \(\eta\) over the range considered.
Bias remains close to zero across settings, while the empirical standard
deviation varies little with \(\eta\) and decreases as exposure heritability
increases.

\subsection{Sensitivity to the causal effect size}

\label{sec:app_BRIVW_beta_sensitivity}

We examine sensitivity to the causal effect size by setting
\(\beta=0.05\) and \(0.5\), while keeping all other settings as in the
main simulation.
Unless otherwise stated, additional simulations that include MR-APSS are
based on \(500\) replicates because of its computational cost.

\begin{figure}[!htbp]
  \centering
  \includegraphics[width=\linewidth]{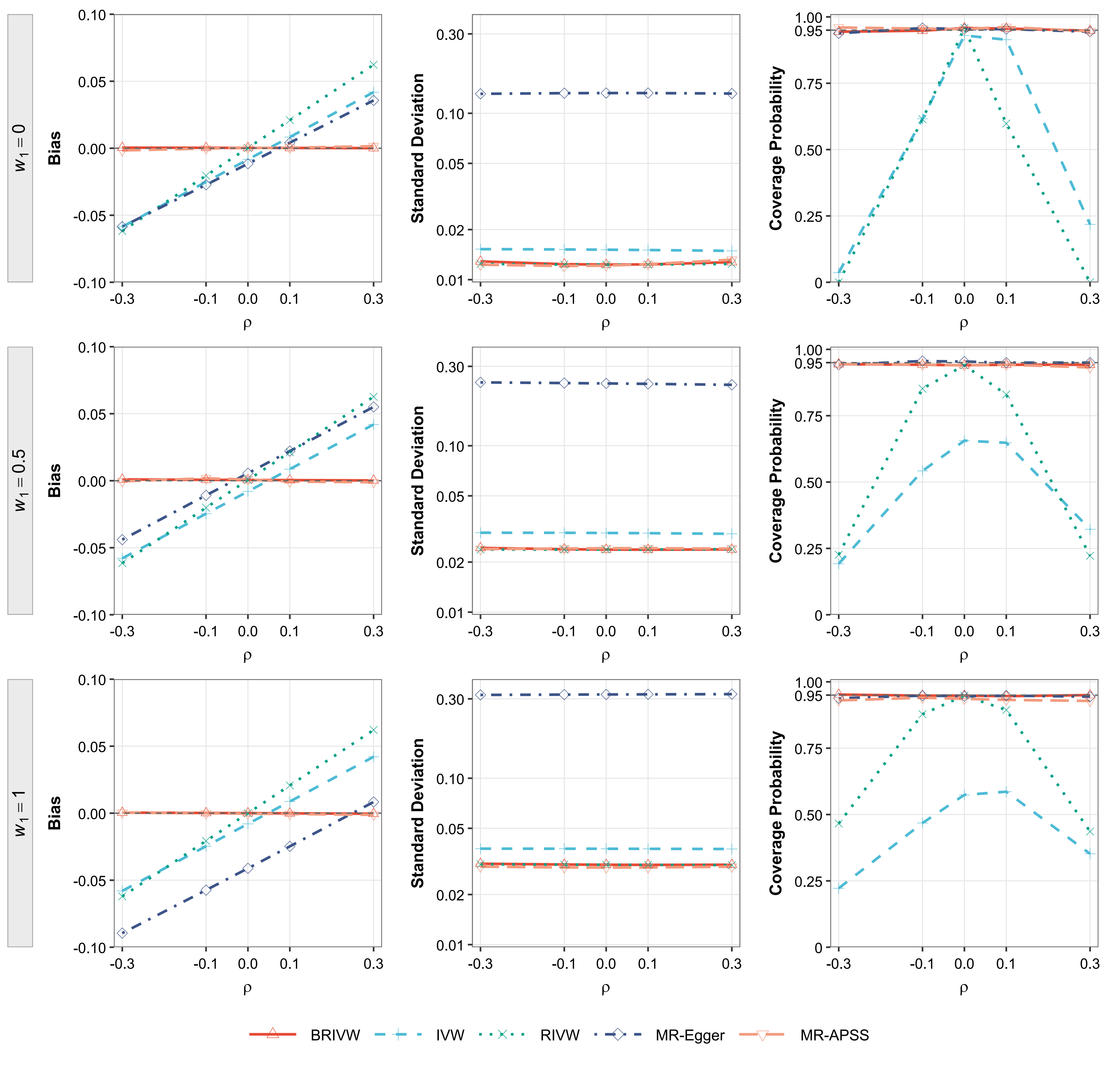}
  \caption{\small
  Sensitivity analysis with causal effect \(\beta=0.05\).
  Empirical bias, standard deviation, and coverage probability of BRIVW
    and competing methods across different values of \(\rho\) and \(w_1\).
  }
  \label{fig:app_beta005_sensitivity}
\end{figure}

\begin{figure}[!htbp]
  \centering
  \includegraphics[width=\linewidth]{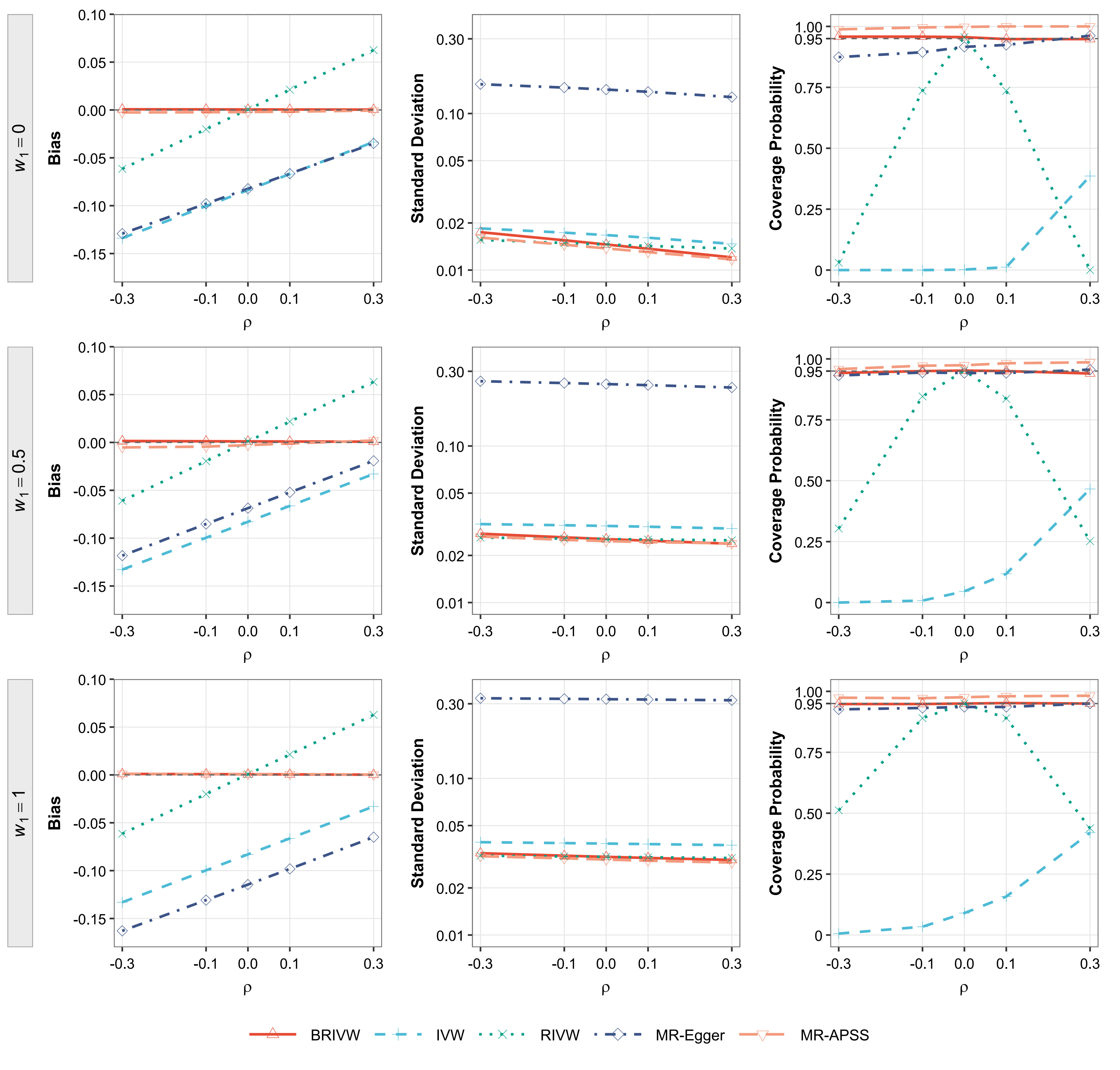}
  \caption{\small
  Sensitivity analysis with causal effect \(\beta=0.5\).
  Empirical bias, standard deviation, and coverage probability of BRIVW
    and competing methods across different values of \(\rho\) and \(w_1\).
  }
  \label{fig:app_beta05_sensitivity}
\end{figure}

Web Figures~\ref{fig:app_beta005_sensitivity} and
\ref{fig:app_beta05_sensitivity} show patterns consistent with the main
simulation.
BRIVW remains nearly unbiased with coverage close to the nominal level
across values of \(\rho\) and \(w_1\), whereas the performance of methods
that do not fully account for sample structure remains sensitive to
\(\rho\).

\subsection{Unequal GWAS sample sizes}

\label{sec:app_BRIVW_unequal_sample_sizes}

We further assess BRIVW under unequal GWAS sample sizes by considering
\[
(n_X,n_Y)=(50{,}000,100{,}000)
\quad\text{and}\quad
(100{,}000,50{,}000),
\]
while keeping all other settings as in the main simulation.

\begin{figure}[!htbp]
  \centering
  \includegraphics[width=\linewidth]{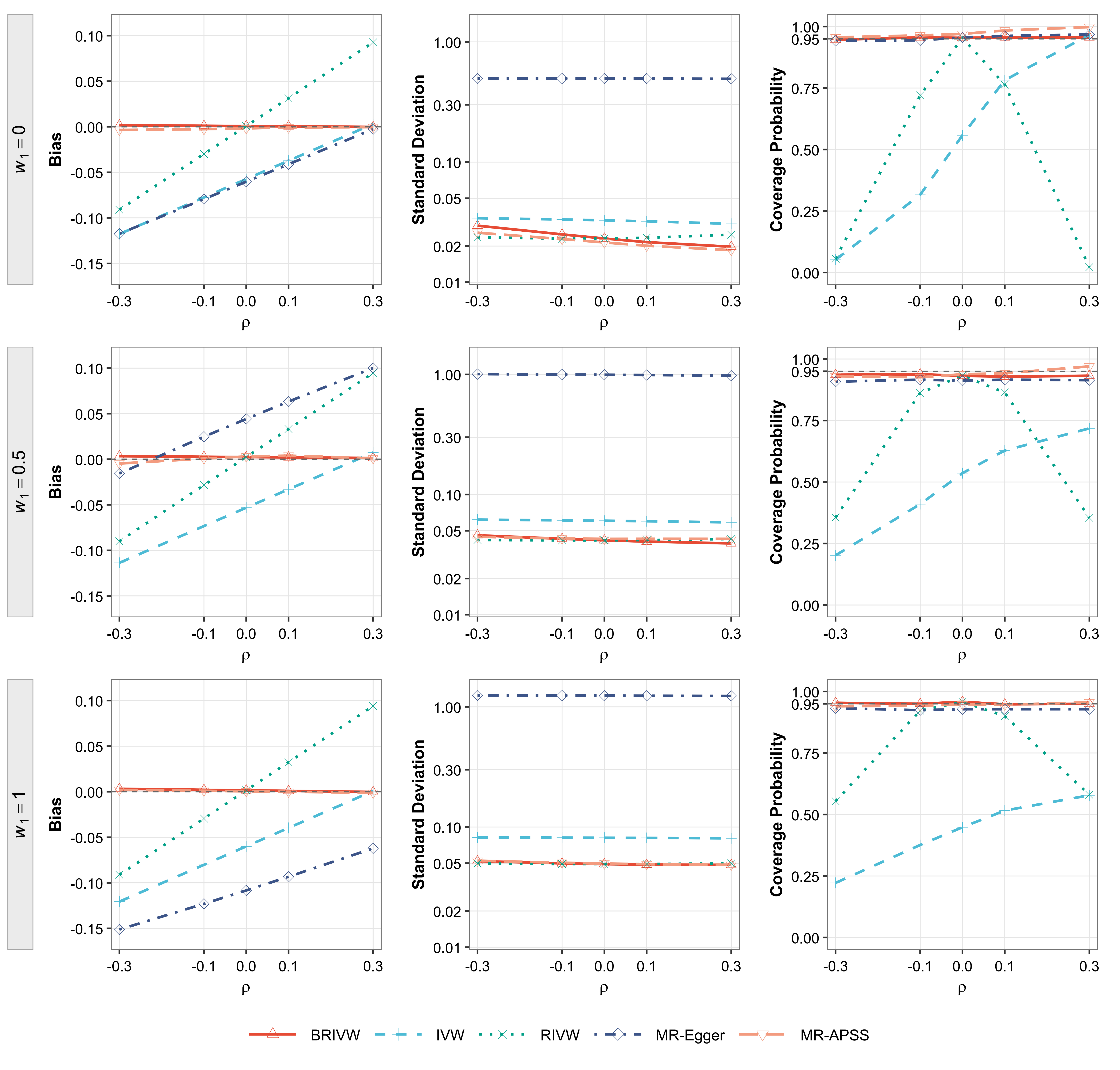}
  \caption{\small
  Simulation results with unequal GWAS sample sizes
  \(n_X=50{,}000\) and \(n_Y=100{,}000\).
  Empirical bias, standard deviation, and coverage probability of BRIVW
  and competing methods across different values of \(\rho\) and \(w_1\).
  }
  \label{fig:app_small_exposure}
\end{figure}

\begin{figure}[!htbp]
  \centering
  \includegraphics[width=\linewidth]{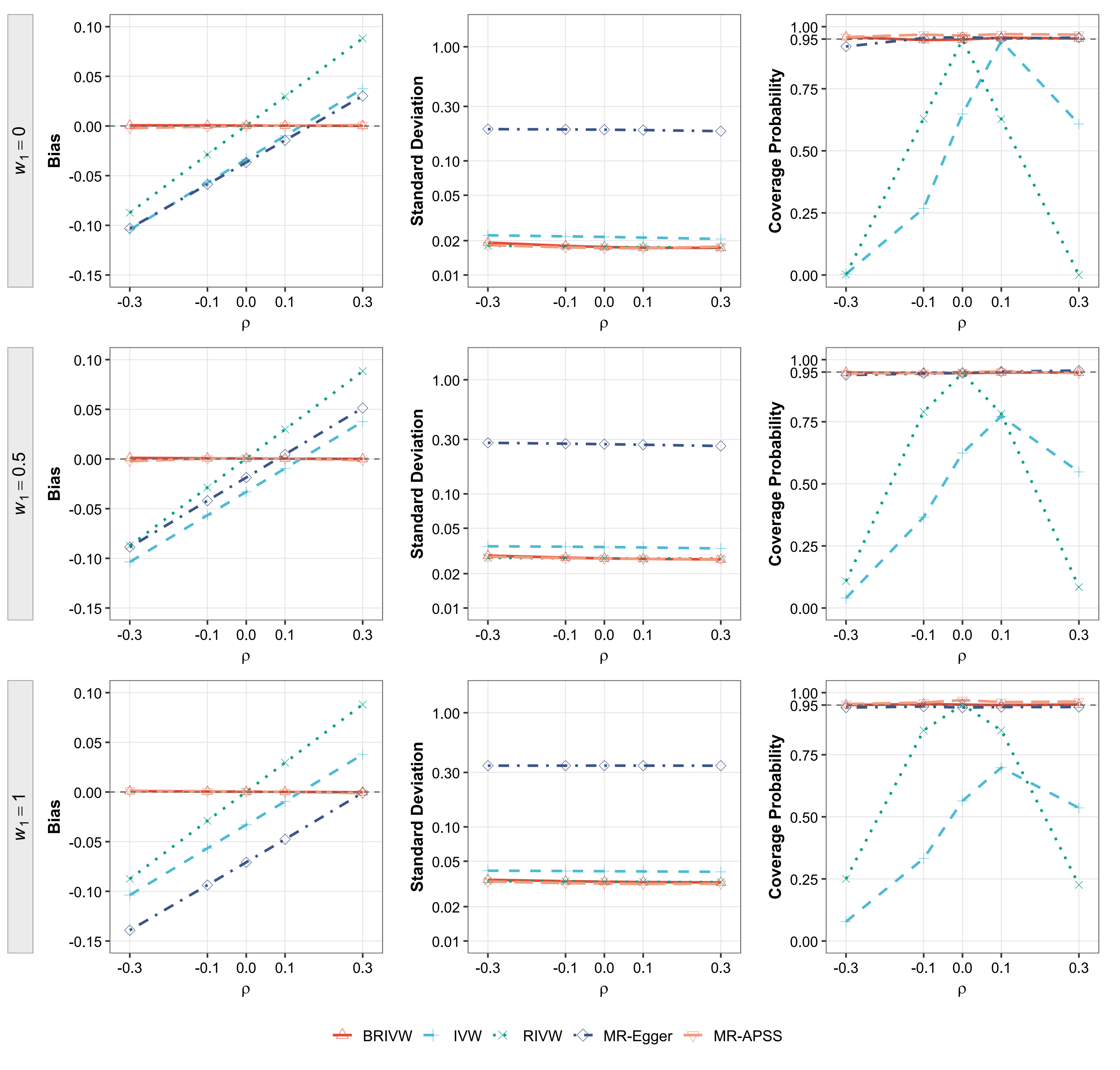}
  \caption{\small
  Simulation results with unequal GWAS sample sizes
  \(n_X=100{,}000\) and \(n_Y=50{,}000\).
  Empirical bias, standard deviation, and coverage probability of BRIVW
  and competing methods across different values of \(\rho\) and \(w_1\).
  }
  \label{fig:app_small_outcome}
\end{figure}

Web Figures~\ref{fig:app_small_exposure} and
\ref{fig:app_small_outcome} show conclusions consistent with the main
simulation.
BRIVW remains nearly unbiased with close-to-nominal coverage across
settings, although its variability increases with the reduced GWAS sample
size, particularly when the exposure GWAS is smaller.
The qualitative patterns of the competing methods remain largely unchanged.

\subsection{Robustness to the misspecification of the normal mixture model}

\label{sec:app_BRIVW_mixture_misspecification}

We further examine robustness to misspecification of the Gaussian mixture
model used by MR-APSS.
To isolate distributional misspecification, we replace one half of the Gaussian
genetic-effect components in the main simulation by uniform distributions:
\[
\begin{aligned}
\begin{pmatrix}
\gamma_j\\
\alpha_j
\end{pmatrix}
\sim\;&
\frac{\pi_x(1-w_1)}{2}
\begin{pmatrix}
N(0,\varepsilon_x^2)\\
\delta_0
\end{pmatrix}
+
\frac{\pi_x(1-w_1)}{2}
\begin{pmatrix}
\operatorname{Unif}(-0.02,0.02)\\
\delta_0
\end{pmatrix}
\\
&+
\frac{\pi_xw_1}{2}
\begin{pmatrix}
N(0,\varepsilon_x^2)\\
N(0,\tau^2)
\end{pmatrix}
+
\frac{\pi_xw_1}{2}
\begin{pmatrix}
\operatorname{Unif}(-0.02,0.02)\\
\operatorname{Unif}(-0.02,0.02)
\end{pmatrix}
\\
&+
\pi_y
\begin{pmatrix}
\delta_0\\
N(0,\tau^2)
\end{pmatrix}
+
(1-\pi_x-\pi_y)
\begin{pmatrix}
\delta_0\\
\delta_0
\end{pmatrix}.
\end{aligned}
\]
Within each balanced pleiotropic component, \(\gamma_j\) and \(\alpha_j\)
are generated independently.
We consider \(w_1\in\{0,0.5,1\}\) and
\(\beta\in\{0.05,0.2,0.5\}\), with all other settings as in the main
simulation.

\begin{figure}[!htbp]
  \centering
  \includegraphics[width=\linewidth]{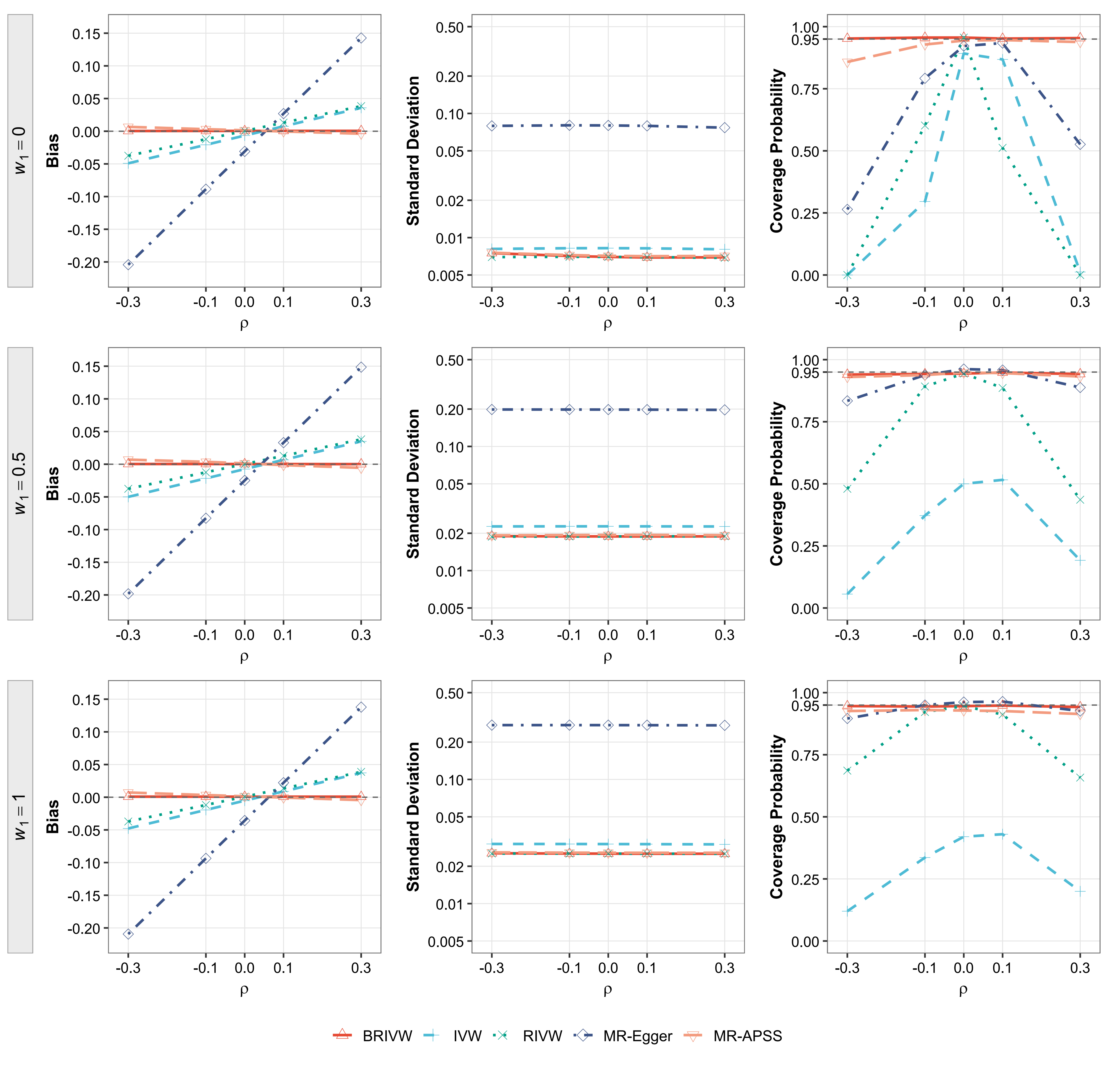}
  \caption{\small
  Simulation results under mixture-model misspecification with
  \(\beta=0.05\).
  Empirical bias, standard deviation, and coverage probability of BRIVW
  and competing methods across different values of \(\rho\) and \(w_1\).
  }
  \label{fig:app_misspec_beta005}
\end{figure}

\begin{figure}[!htbp]
  \centering
  \includegraphics[width=\linewidth]{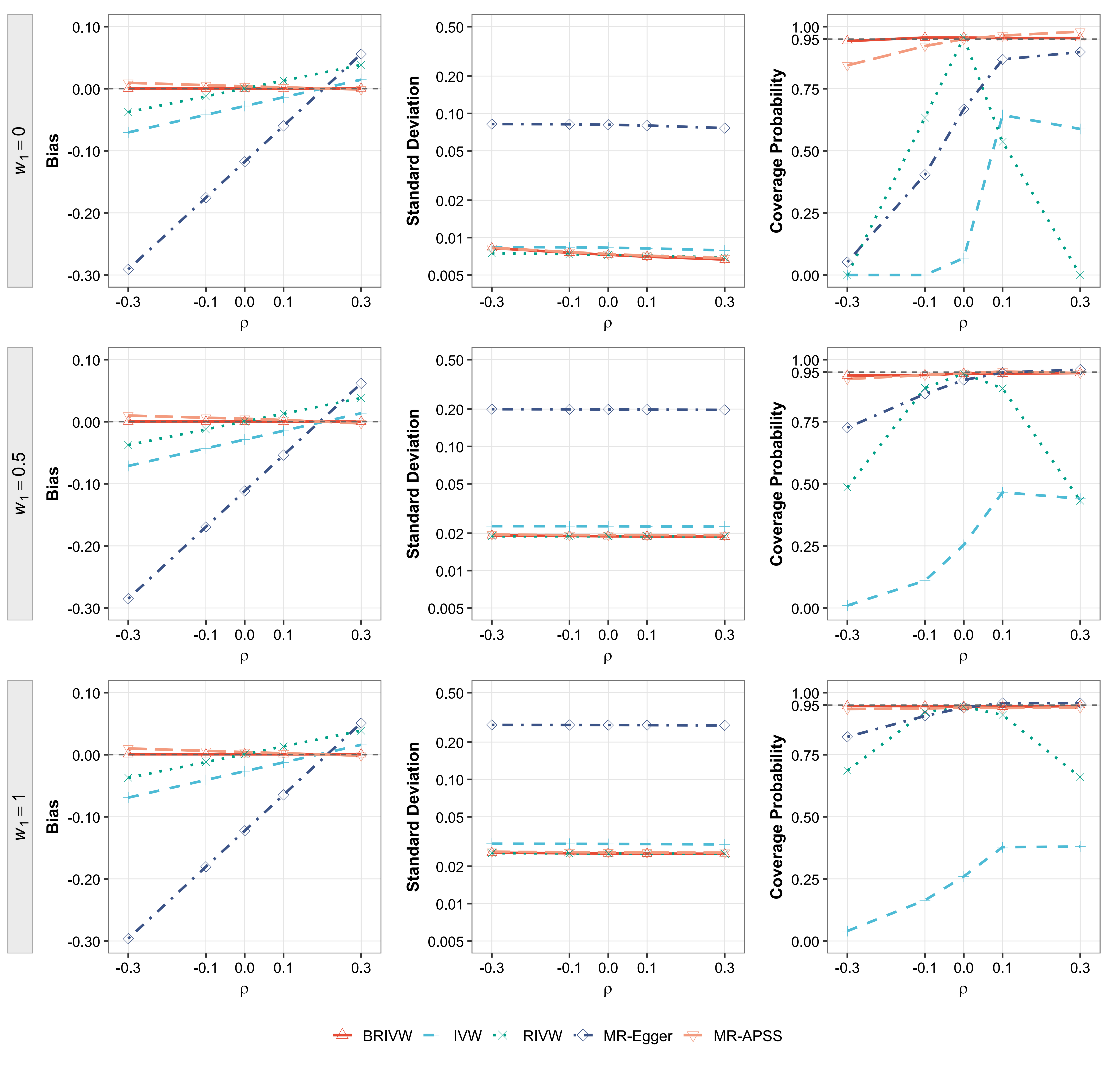}
  \caption{\small
  Simulation results under mixture-model misspecification with
  \(\beta=0.2\).
  Empirical bias, standard deviation, and coverage probability of BRIVW
  and competing methods across different values of \(\rho\) and \(w_1\).
  }
  \label{fig:app_misspec_beta02}
\end{figure}

\begin{figure}[!htbp]
  \centering
  \includegraphics[width=\linewidth]{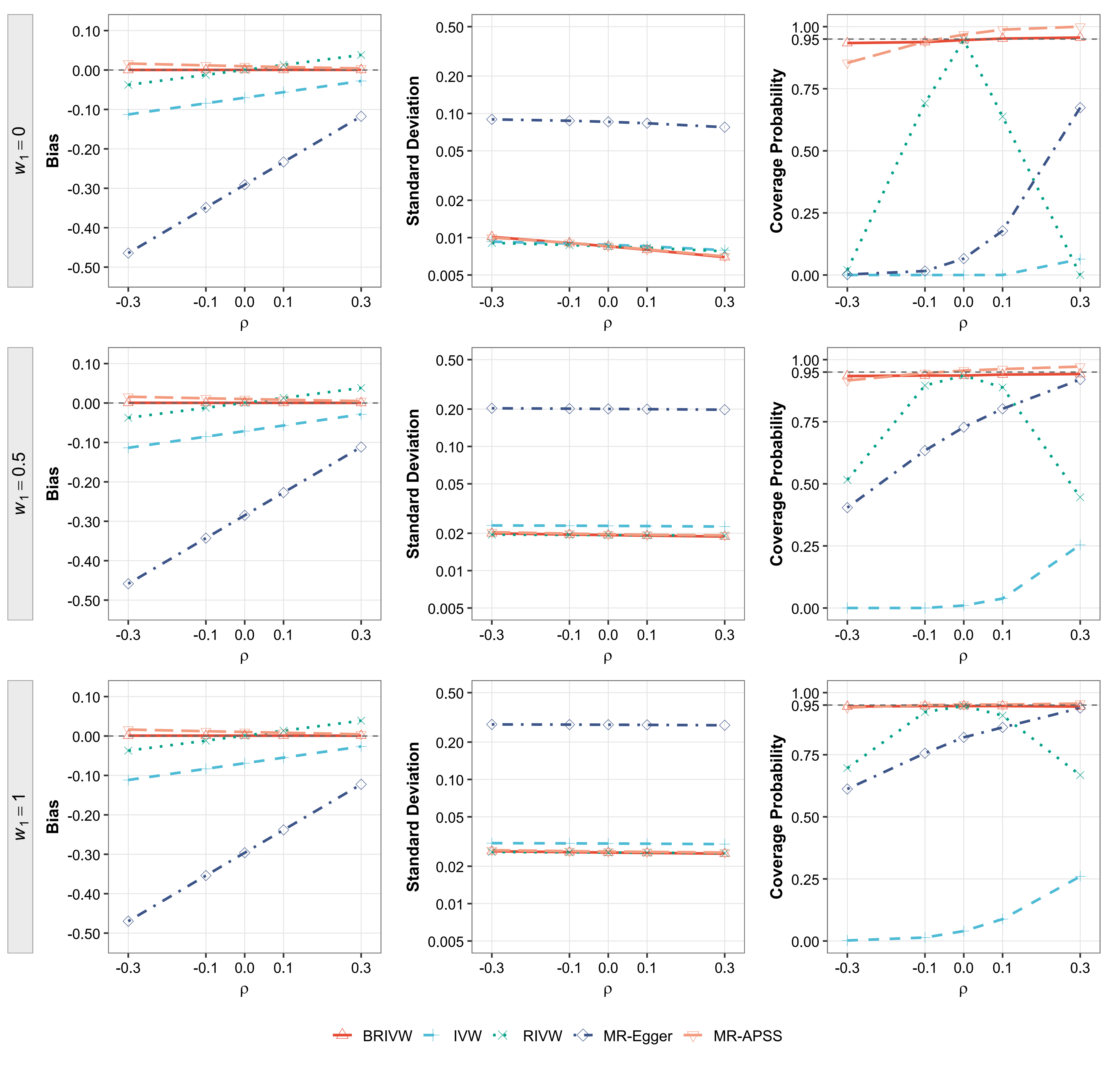}
  \caption{\small
  Simulation results under mixture-model misspecification with
  \(\beta=0.5\).
  Empirical bias, standard deviation, and coverage probability of BRIVW
  and competing methods across different values of \(\rho\) and \(w_1\).
  }
  \label{fig:app_misspec_beta05}
\end{figure}

Web Figures~\ref{fig:app_misspec_beta005}--\ref{fig:app_misspec_beta05} show that,
across all three causal-effect settings, BRIVW remains nearly unbiased
and maintains close-to-nominal coverage under mixture-model
misspecification.
Although MR-APSS exhibits relatively small bias in many settings, its
coverage deteriorates substantially in some cases, particularly for
negative \(\rho\) when \(w_1\) is small.
These results suggest that BRIVW is less sensitive to the considered mixture-model
misspecification.

\setcounter{table}{0}
\renewcommand{\thetable}{S\arabic{table}}
\renewcommand{\tablename}{Web Table}

\section{Additional details and results for the real-data analyses}

\subsection{Quality control of GWAS summary statistics}

All GWAS summary statistics were processed following the quality-control
procedures of Hu et al. (2022), as implemented in the
\texttt{format\_data} function of MR-APSS.
The applied filters are summarized in Web Table~\ref{tab:gwas_qc}.
Filters based on minor allele frequency and imputation quality were
applied only when the corresponding information was available.

\captionsetup{skip=2pt}
\begin{table}[t]
\centering
\footnotesize
\renewcommand{\arraystretch}{1.15}
\caption{Quality-control procedures for GWAS summary statistics.}
\label{tab:gwas_qc}
\begin{tabularx}{\textwidth}{@{}L{0.33\textwidth}Y@{}}
\toprule
\textbf{Quality-control step} & \textbf{Criterion or action} \\
\midrule

Reference-variant restriction
& Retain SNPs included in the HapMap 3 reference panel. \\

Minor allele frequency
& Exclude SNPs with minor allele frequency below 0.05 when allele-frequency information is available. \\

Allele validity
& Exclude SNPs containing alleles other than A, C, G, or T. \\

Ambiguous or invalid allele pairs
& Exclude strand-ambiguous SNPs with A/T or C/G alleles, as well as invalid homoallelic pairs, including A/A, C/C, G/G, and T/T. \\

Imputation quality
& Exclude SNPs with INFO scores below 0.9 when imputation-quality information is available. \\

Major histocompatibility complex
& Exclude SNPs located in the extended major histocompatibility complex region on chromosome 6, from 26 to 34 Mb. \\

Extreme association statistics
& Exclude SNPs with \(\chi_j^2>\chi_{\max,j}^2\), where
\(\chi_{\max,j}^2=\max(N_j/1000,80)\), and \(N_j\) denotes the SNP-specific GWAS sample size. \\

\bottomrule
\end{tabularx}
\end{table}

\subsection{Additional results for the negative-control outcome analysis}

The GWAS sources for the 53 exposure traits and 5 negative-control
outcomes are summarized in Web Table~\ref{tab:negative_control_traits}. 
Web Figure~\ref{fig:app_negative_control_rho} presents a heatmap of the
estimated sample-structure correlations \(\rho\) across all exposure--outcome pairs.

\setlength{\LTleft}{0pt}
\setlength{\LTright}{0pt}
\setlength{\LTpre}{4pt}
\setlength{\LTpost}{4pt}
\setlength{\tabcolsep}{4pt}
\setlength{\LTcapwidth}{\textwidth}
\renewcommand{\arraystretch}{1.10}

\captionsetup{skip=2pt}
\small
\begin{longtable}{@{}L{0.20\textwidth}L{0.50\textwidth}
C{0.14\textwidth}C{0.16\textwidth}@{}}
\caption{GWAS sources for the negative-control outcome analysis.}
\label{tab:negative_control_traits}\\
\toprule
\textbf{Trait} &
\textbf{Description} &
\textbf{Sample size} &
\textbf{Data source} \\
\midrule
\endfirsthead

\multicolumn{4}{c}{\tablename~\thetable\ (continued)}\\
\toprule
\textbf{Trait} &
\textbf{Description} &
\textbf{Sample size} &
\textbf{Data source} \\
\midrule
\endhead

\midrule
\multicolumn{4}{r}{\footnotesize Continued on next page}\\
\endfoot

\bottomrule
\endlastfoot

\multicolumn{4}{@{}l}{\textit{Exposure traits}}\\
\addlinespace[2pt]

AD
& Late-onset Alzheimer's disease
& 54,162
& \href{\detokenize{https://kp4cd.org/node/1354}}{GWAS Catalog} \\

Alcohol
& Drinks per week
& 414,343
& \href{\detokenize{https://thessgac.com/register/}}{SSGAC} \\

Alcohol with meals
& Alcohol usually taken with meals
& 197,613
& \href{\detokenize{https://atlas.ctglab.nl/ukb2_sumstats/f.1618.0.0_logistic.EUR.sumstats.MACfilt.txt.gz}}{GWAS Atlas} \\

Angina
& Self-reported angina
& 289,307
& \href{\detokenize{https://atlas.ctglab.nl/ukb2_sumstats/20002_1074_logistic.EUR.sumstats.MACfilt.txt.gz}}{GWAS Atlas} \\

Anorexia
& Anorexia nervosa
& 72,517
& \href{\detokenize{https://doi.org/10.6084/m9.figshare.14671980}}{PGC} \\

Asthma
& Asthma
& 26,475
& \href{\detokenize{https://gwas.mrcieu.ac.uk/datasets/ieu-a-44/}}{OpenGWAS} \\

Beer intake
& Average weekly beer and cider intake
& 274,556
& \href{\detokenize{https://atlas.ctglab.nl/ukb2_sumstats/f.1588.0.0_res.EUR.sumstats.MACfilt.txt.gz}}{GWAS Atlas} \\

Bipolar disorder
& Bipolar disorder
& 16,731
& \href{\detokenize{https://gwas.mrcieu.ac.uk/datasets/ieu-a-801/}}{OpenGWAS} \\

Birth weight
& Birth weight
& 143,677
& \href{\detokenize{https://gwas.mrcieu.ac.uk/datasets/ieu-a-1083/}}{OpenGWAS} \\

BMI
& Body mass index
& 385,336
& \href{\detokenize{https://atlas.ctglab.nl/ukb2_sumstats/f.21001.0.0_res.EUR.sumstats.MACfilt.txt.gz}}{GWAS Atlas} \\

BFP
& Body fat percentage
& 379,615
& \href{\detokenize{https://atlas.ctglab.nl/ukb2_sumstats/f.23099.0.0_res.EUR.sumstats.MACfilt.txt.gz}}{GWAS Atlas} \\

CAD
& Coronary artery disease
& 184,305
& \href{\detokenize{https://gwas.mrcieu.ac.uk/datasets/ieu-a-7/}}{OpenGWAS} \\

Celiac disease
& Celiac disease
& 24,269
& \href{\detokenize{https://gwas.mrcieu.ac.uk/datasets/ieu-a-1058/}}{OpenGWAS} \\

Cheese intake
& Cheese intake
& 377,082
& \href{\detokenize{https://atlas.ctglab.nl/ukb2_sumstats/f.1408.0.0_res.EUR.sumstats.MACfilt.txt.gz}}{GWAS Atlas} \\

Childhood obesity
& Childhood obesity
& 13,848
& \href{\detokenize{https://gwas.mrcieu.ac.uk/datasets/ieu-a-1096/}}{OpenGWAS} \\

Daytime sleepiness
& Daytime sleepiness
& 452,071
& \href{\detokenize{https://personal.broadinstitute.org/ryank/archive/Saxena.fullUKBB.DaytimeSleepiness.sumstats.zip}}{Broad} \\

DBP
& Diastolic blood pressure
& 757,601
& \href{\detokenize{https://gwas.mrcieu.ac.uk/datasets/ieu-b-39/}}{OpenGWAS} \\

Extreme height
& Extreme height
& 16,196
& \href{\detokenize{https://gwas.mrcieu.ac.uk/datasets/ieu-a-86/}}{OpenGWAS} \\

Extreme BMI
& Extreme body mass index
& 16,068
& \href{\detokenize{https://gwas.mrcieu.ac.uk/datasets/ieu-a-85/}}{OpenGWAS} \\

Education
& Educational qualifications
& 318,526
& \href{\detokenize{https://atlas.ctglab.nl/ukb2_sumstats/f.6138.0.0_res.EUR.sumstats.MACfilt.txt.gz}}{GWAS Atlas} \\

Fresh fruit intake
& Fresh fruit intake
& 372,901
& \href{\detokenize{https://atlas.ctglab.nl/ukb2_sumstats/f.1309.0.0_res.EUR.sumstats.MACfilt.txt.gz}}{GWAS Atlas} \\

HDL-C
& High-density lipoprotein cholesterol
& 187,167
& \href{\detokenize{https://gwas.mrcieu.ac.uk/datasets/ieu-a-299/}}{OpenGWAS} \\

Height (UKBB)
& Standing height in UK Biobank
& 385,748
& \href{\detokenize{https://atlas.ctglab.nl/ukb2_sumstats/f.50.0.0_res.EUR.sumstats.MACfilt.txt.gz}}{GWAS Atlas} \\

Income
& Household income
& 286,301
& \href{\detokenize{https://ftp.ebi.ac.uk/pub/databases/gwas/summary_statistics/GCST009001-GCST010000/GCST009523/HillWD_31844048_household_Income.txt.gz}}{GWAS Catalog} \\

Insomnia
& Insomnia symptoms
& 453,379
& \href{\detokenize{https://personal.broadinstitute.org/ryank/archive/Saxena_fullUKBB_Insomnia_summary_stats.zip}}{Broad} \\

LDL-C
& Low-density lipoprotein cholesterol
& 173,082
& \href{\detokenize{https://gwas.mrcieu.ac.uk/datasets/ieu-a-300/}}{OpenGWAS} \\

MI
& Myocardial infarction
& 171,875
& \href{\detokenize{https://gwas.mrcieu.ac.uk/datasets/ieu-a-798/}}{OpenGWAS} \\

MS
& Multiple sclerosis
& 38,589
& \href{\detokenize{https://gwas.mrcieu.ac.uk/datasets/ieu-a-1025/}}{OpenGWAS} \\

Neuroticism
& Neuroticism score
& 312,740
& \href{\detokenize{https://atlas.ctglab.nl/ukb2_sumstats/f.20127.0.0_res.EUR.sumstats.MACfilt.txt.gz}}{GWAS Atlas} \\

No. of brothers
& Number of full brothers
& 380,062
& \href{\detokenize{https://atlas.ctglab.nl/ukb2_sumstats/f.1873.0.0_res.EUR.sumstats.MACfilt.txt.gz}}{GWAS Atlas} \\

No. of sisters
& Number of full sisters
& 380,122
& \href{\detokenize{https://atlas.ctglab.nl/ukb2_sumstats/f.1883.0.0_res.EUR.sumstats.MACfilt.txt.gz}}{GWAS Atlas} \\

Obesity I
& Obesity class I
& 98,697
& \href{\detokenize{https://gwas.mrcieu.ac.uk/datasets/ieu-a-90/}}{OpenGWAS} \\

Obesity II
& Obesity class II
& 72,546
& \href{\detokenize{https://gwas.mrcieu.ac.uk/datasets/ieu-a-91/}}{OpenGWAS} \\

Overweight
& Overweight
& 158,855
& \href{\detokenize{https://gwas.mrcieu.ac.uk/datasets/ieu-a-93/}}{OpenGWAS} \\

Processed meat
& Processed meat intake
& 385,801
& \href{\detokenize{https://atlas.ctglab.nl/ukb2_sumstats/f.1349.0.0_res.EUR.sumstats.MACfilt.txt.gz}}{GWAS Atlas} \\

RA
& Rheumatoid arthritis
& 80,799
& \href{\detokenize{https://gwas.mrcieu.ac.uk/datasets/ieu-a-833/}}{OpenGWAS} \\

Reaction time
& Mean time to correctly identify matches
& 383,748
& \href{\detokenize{https://atlas.ctglab.nl/ukb2_sumstats/f.20023.0.0_res.EUR.sumstats.MACfilt.txt.gz}}{GWAS Atlas} \\

Salad intake
& Salad or raw vegetable intake
& 363,780
& \href{\detokenize{https://atlas.ctglab.nl/ukb2_sumstats/f.1299.0.0_res.EUR.sumstats.MACfilt.txt.gz}}{GWAS Atlas} \\

SBP
& Systolic blood pressure
& 757,601
& \href{\detokenize{https://gwas.mrcieu.ac.uk/datasets/ieu-b-38/}}{OpenGWAS} \\

SCZ
& Schizophrenia
& 105,318
& \href{\detokenize{https://doi.org/10.6084/m9.figshare.14681220}}{PGC} \\

Smoking
& Smoking initiation, excluding UK Biobank
& 249,171
& \href{\detokenize{https://conservancy.umn.edu/handle/11299/241912}}{GSCAN} \\

Social club
& Participation in pub or social club activities
& 385,280
& \href{\detokenize{https://atlas.ctglab.nl/ukb2_sumstats/6160_2_logistic.EUR.sumstats.MACfilt.txt.gz}}{GWAS Atlas} \\

Stroke
& Stroke
& 446,696
& \href{\detokenize{https://gwas.mrcieu.ac.uk/datasets/ebi-a-GCST005838/}}{OpenGWAS} \\

T2D
& Type 2 diabetes
& 898,130
& \href{\detokenize{https://www.diagram-consortium.org/downloads.html}}{DIAGRAM} \\

TC
& Total cholesterol
& 187,365
& \href{\detokenize{https://gwas.mrcieu.ac.uk/datasets/ieu-a-301/}}{OpenGWAS} \\

TFFM
& Trunk fat-free mass
& 379,507
& \href{\detokenize{https://atlas.ctglab.nl/ukb2_sumstats/f.23129.0.0_res.EUR.sumstats.MACfilt.txt.gz}}{GWAS Atlas} \\

TFM
& Trunk fat mass
& 379,578
& \href{\detokenize{https://atlas.ctglab.nl/ukb2_sumstats/f.23128.0.0_res.EUR.sumstats.MACfilt.txt.gz}}{GWAS Atlas} \\

TFP
& Trunk fat percentage
& 379,600
& \href{\detokenize{https://atlas.ctglab.nl/ukb2_sumstats/f.23127.0.0_res.EUR.sumstats.MACfilt.txt.gz}}{GWAS Atlas} \\

TPM
& Trunk predicted mass
& 379,469
& \href{\detokenize{https://atlas.ctglab.nl/ukb2_sumstats/f.23130.0.0_res.EUR.sumstats.MACfilt.txt.gz}}{GWAS Atlas} \\

UC
& Ulcerative colitis
& 47,745
& \href{\detokenize{https://gwas.mrcieu.ac.uk/datasets/ieu-a-970/}}{OpenGWAS} \\

WBFM
& Whole-body fat mass
& 379,203
& \href{\detokenize{https://atlas.ctglab.nl/ukb2_sumstats/f.23100.0.0_res.EUR.sumstats.MACfilt.txt.gz}}{GWAS Atlas} \\

WBWM
& Whole-body water mass
& 379,835
& \href{\detokenize{https://atlas.ctglab.nl/ukb2_sumstats/f.23102.0.0_res.EUR.sumstats.MACfilt.txt.gz}}{GWAS Atlas} \\

WHR
& Waist-to-hip ratio
& 224,459
& \href{\detokenize{https://gwas.mrcieu.ac.uk/datasets/ieu-a-72/}}{OpenGWAS} \\

\addlinespace[4pt]
\multicolumn{4}{@{}l}{\textit{Negative-control outcomes}}\\
\addlinespace[2pt]

Tanning
& Ease of skin tanning
& 378,364
& \href{\detokenize{https://atlas.ctglab.nl/ukb2_sumstats/f.1727.0.0_res.EUR.sumstats.MACfilt.txt.gz}}{GWAS Atlas} \\

Hair: black
& Natural black hair colour before greying
& 385,603
& \href{\detokenize{https://atlas.ctglab.nl/ukb2_sumstats/1747_5_logistic.EUR.sumstats.MACfilt.txt.gz}}{GWAS Atlas} \\

Hair: blonde
& Natural blonde hair colour before greying
& 385,603
& \href{\detokenize{https://atlas.ctglab.nl/ukb2_sumstats/1747_1_logistic.EUR.sumstats.MACfilt.txt.gz}}{GWAS Atlas} \\

Hair: dark brown
& Natural dark-brown hair colour before greying
& 385,603
& \href{\detokenize{https://atlas.ctglab.nl/ukb2_sumstats/1747_4_logistic.EUR.sumstats.MACfilt.txt.gz}}{GWAS Atlas} \\

Hair: light brown
& Natural light-brown hair colour before greying
& 385,603
& \href{\detokenize{https://atlas.ctglab.nl/ukb2_sumstats/1747_3_logistic.EUR.sumstats.MACfilt.txt.gz}}{GWAS Atlas} \\

\end{longtable}
\normalsize

\noindent
\footnotesize
\normalsize

\begin{figure}[!htbp]

  \centering

  \includegraphics[width=\linewidth]{FigS10.pdf}

  \caption{\small
  Estimated sample-structure correlations
  $\rho$
  for the 265 exposure--negative-control outcome pairs.
  Asterisks indicate pairs with significant \(\rho\).
  }

  \label{fig:app_negative_control_rho}

\end{figure}

\subsection{Additional results for the same trait analyses}

The GWAS datasets used in the same trait analyses are summarized in
Web Table~\ref{tab:same_trait_sources}. Because the exposure and outcome
measure the same standardized trait, the benchmark effect is 1.
Web Table~\ref{tab:same_trait_results} provides the complete numerical
results.

\begin{table}[t]
\centering
\captionsetup{skip=2pt}
\small
\setlength{\tabcolsep}{4pt}
\renewcommand{\arraystretch}{1.12}

\caption{GWAS sources for the same-trait analyses.}
\label{tab:same_trait_sources}

\begin{tabular}{@{}
L{0.10\textwidth}
L{0.40\textwidth}
C{0.17\textwidth}
C{0.15\textwidth}
C{0.15\textwidth}
@{}}
\toprule
\textbf{Dataset} &
\textbf{Description} &
\textbf{GWAS ID} &
\textbf{Sample size} &
\textbf{Data source} \\
\midrule

BMI-1
& Body mass index; UK Biobank
& \texttt{ukb-b-19953}
& 461,460
& \href{\detokenize{https://gwas.mrcieu.ac.uk/datasets/ukb-b-19953/}}
  {\mbox{OpenGWAS}} \\

BMI-2
& Body mass index; GIANT consortium
& \texttt{ieu-a-2}
& 339,224
& \href{\detokenize{https://gwas.mrcieu.ac.uk/datasets/ieu-a-2/}}
  {\mbox{OpenGWAS}} \\

HDL-1
& High-density lipoprotein cholesterol; UK Biobank
& \texttt{ieu-b-109}
& 403,943
& \href{\detokenize{https://gwas.mrcieu.ac.uk/datasets/ieu-b-109/}}
  {\mbox{OpenGWAS}} \\

HDL-2
& High-density lipoprotein cholesterol; Global Lipids Genetics Consortium
& \texttt{ebi-a-GCST002223}
& 94,595
& \href{\detokenize{https://gwas.mrcieu.ac.uk/datasets/ebi-a-GCST002223/}}
  {\mbox{OpenGWAS}} \\

\bottomrule
\end{tabular}

\end{table}

\begin{table}[!htbp]

\centering

\captionsetup{skip=2pt}

\footnotesize

\setlength{\tabcolsep}{4pt}

\renewcommand{\arraystretch}{1.10}

\caption{Complete numerical results of the same trait analyses.}

\label{tab:same_trait_results}

\begin{tabularx}{\textwidth}{@{}
L{0.18\textwidth}
C{0.16\textwidth}
C{0.08\textwidth}
C{0.28\textwidth}
C{0.06\textwidth}
C{0.18\textwidth}
@{}}

\toprule

\textbf{Trait pair} &
\textbf{Method} &
\textbf{IVs} &
\(\boldsymbol{\hat\beta}\) \textbf{(95\% CI)} &
\(\boldsymbol{\rho}\) &
\textbf{\(P\)} \\

\midrule

\multirow{5}{*}{BMI-1 \(\rightarrow\) BMI-2}
& BRIVW
& 744
& 1.024 (0.971, 1.077)
& \multirow{5}{*}{0.089}
& \multirow{5}{*}{\(1.40\times10^{-14}\)} \\

& RIVW
& 860
& 0.980 (0.934, 1.025)
& & \\

& IVW
& 419
& 0.784 (0.758, 0.811)
& & \\

& MR-Egger
& 419
& 0.950 (0.810, 1.091)
& & \\

& MR-APSS
& 752
& 0.727 (0.610, 0.845)
& & \\

\midrule

\multirow{5}{*}{BMI-2 \(\rightarrow\) BMI-1}
& BRIVW
& 509
& 1.041 (0.973, 1.109)
& \multirow{5}{*}{0.089}
& \multirow{5}{*}{\(1.40\times10^{-14}\)} \\

& RIVW
& 347
& 1.142 (1.040, 1.245)
& & \\

& IVW
& 73
& 0.875 (0.851, 0.898)
& & \\

& MR-Egger
& 73
& 0.841 (0.573, 1.109)
& & \\

& MR-APSS
& 232
& 1.227 (1.070, 1.385)
& & \\

\midrule

\multirow{5}{*}{HDL-1 \(\rightarrow\) HDL-2}
& BRIVW
& 574
& 0.990 (0.916, 1.063)
& \multirow{5}{*}{0.070}
& \multirow{5}{*}{\(3.91\times10^{-11}\)} \\

& RIVW
& 690
& 0.964 (0.901, 1.027)
& & \\

& IVW
& 301
& 0.849 (0.815, 0.883)
& & \\

& MR-Egger
& 301
& 0.867 (0.742, 0.992)
& & \\

& MR-APSS
& 572
& 0.884 (0.751, 1.018)
& & \\

\midrule

\multirow{5}{*}{HDL-2 \(\rightarrow\) HDL-1}
& BRIVW
& 295
& 1.021 (0.922, 1.120)
& \multirow{5}{*}{0.070}
& \multirow{5}{*}{\(3.91\times10^{-11}\)} \\

& RIVW
& 308
& 1.045 (0.940, 1.150)
& & \\

& IVW
& 89
& 0.884 (0.867, 0.901)
& & \\

& MR-Egger
& 89
& 0.722 (0.442, 1.002)
& & \\

& MR-APSS
& 199
& 0.929 (0.814, 1.044)
& & \\

\bottomrule

\end{tabularx}

\vspace{2pt}

\begin{minipage}{0.98\textwidth}

\footnotesize

\textit{Note:}
IVs denote the number of selected instrumental variables.
\(P\) represents the \(P\)-value of the test for \(\rho=0\).

\end{minipage}

\end{table}

\subsection{Additional details for the BMI and cardiovascular outcome analyses}

The GWAS datasets used in the BMI and cardiovascular outcome analyses
are summarized in Web Table~\ref{tab:bmi_cvd_sources}.

\begin{table}[t]
\centering
\captionsetup{skip=2pt}
\small
\setlength{\tabcolsep}{4pt}
\renewcommand{\arraystretch}{1.12}

\caption{GWAS sources for the BMI and cardiovascular outcome analyses.}
\label{tab:bmi_cvd_sources}

\begin{tabular}{@{}
L{0.10\textwidth}
L{0.40\textwidth}
C{0.17\textwidth}
C{0.15\textwidth}
C{0.15\textwidth}
@{}}
\toprule
\textbf{Trait} &
\textbf{Description} &
\textbf{GWAS ID} &
\textbf{Sample size} &
\textbf{Data source} \\
\midrule

BMI
& Body mass index; UK Biobank
& \texttt{ukb-b-19953}
& 461,460
& \href{\detokenize{https://gwas.mrcieu.ac.uk/datasets/ukb-b-19953/}}
  {\mbox{OpenGWAS}} \\

SBP
& Systolic blood pressure, automated reading; UK Biobank
& \texttt{ukb-b-20175}
& 436,419
& \href{\detokenize{https://gwas.mrcieu.ac.uk/datasets/ukb-b-20175/}}
  {\mbox{OpenGWAS}} \\

DBP
& Diastolic blood pressure, automated reading; UK Biobank
& \texttt{ukb-b-7992}
& 436,424
& \href{\detokenize{https://gwas.mrcieu.ac.uk/datasets/ukb-b-7992/}}
  {\mbox{OpenGWAS}} \\

CHD
& Coronary heart disease; CARDIoGRAMplusC4D Consortium
& \texttt{ieu-a-9}
& 194,427
& \href{\detokenize{https://gwas.mrcieu.ac.uk/datasets/ieu-a-9/}}
  {\mbox{OpenGWAS}} \\

\bottomrule
\end{tabular}

\end{table}

\end{document}